\documentclass[11pt,journal, onecolumn]{IEEEtran}
\usepackage[margin=1in]{geometry}
\usepackage{amsmath,amssymb,mathtools,amsfonts}
\usepackage{setspace}
\usepackage[algo2e,ruled,vlined,linesnumbered]{algorithm2e}
\usepackage{hyperref}
\usepackage{booktabs}
\usepackage{thm-restate}
\usepackage{cleveref}
\usepackage{graphicx,psfrag}
\usepackage{pstool}
\usepackage{calc}
\usepackage{comment}
\usepackage{times}
\usepackage{hyperref,color,cleveref}
\usepackage{multicol}
\graphicspath{ {./figs} }
\usepackage{bbm}

\usepackage[numbers]{natbib} 
\usepackage{enumitem}

\usepackage{amsthm}

\usepackage{thmtools}
\makeatletter
\newtheorem*{rep@theorem}{\rep@title}
\newcommand{\newreptheorem}[2]{%
\newenvironment{rep#1}[1]{%
 \def\rep@title{#2 \ref{##1}}%
 \begin{rep@theorem}}%
 {\end{rep@theorem}}}
\makeatother

\newreptheorem{theorem}{Theorem}
\newtheorem{lemma}{Lemma}
\newreptheorem{lemma}{Lemma}

\newreptheorem{fact}{Fact}
\newtheorem{corollary}{Corollary}

\theoremstyle{definition}

\newtheorem{remark}{Remark}

\newcommand{\eps}{\epsilon}

\def\erdos{Erd\H os}
\def\renyi{R\'enyi }
\def\E{{\mathsf E}}
\def\P{{\mathsf P}}
\def\cg{{\mathcal{G}}}

\def\dist{{\mathrm{dist}}}
\def\ca{{\mathcal{A}}}
\def\cb{{\mathcal{B}}}
\def\cc{{\mathcal{C}}}
\def\cd{{\mathcal{D}}}
\def\ce{{\mathcal{E}}}
\def\cf{{\mathcal{F}}}
\def\ci{{\mathcal{I}}}
\def\cj{{\mathcal{J}}}

\def\ch{{\mathcal{H}}}
\def\cl{{\mathcal{L}}}
\def\cn{{\mathcal{N}}}

\def\ct{{\mathcal{T}}}
\def\cx{{\mathcal{X}}}
\def\co{{\mathcal{O}}}
\def\cm{{\mathcal{M}}}

\def\rmic{{\mathrm{IC}}}
\def\bl{{\bar{\lambda}}}

\def\gw{{\mathrm{GW}}}
\def\indi{{\mathbbm{1}}}
\def\poi{{\mathrm{Poi}}}
\def\sto{{\stackrel{\mathrm{sto.}}{\le}}}
\def\stoge{{\stackrel{\mathrm{sto.}}{\ge}}}

\def\bin{{\mathrm{Binom}}}

\def\dtv{{d_\mathrm{TV}}}
\def\dmax{{d_{\max}}}

\def\dist{\mathrm{dist}}
\def\emp{{\emptyset}}
\def\tdt{{\td{\ct}}}
\def\tdh{{\td{\ch}}}

\def\barn{{\bar{\cn}}}
\def\tic{{\mathcal{T}^{\mathrm{IC}}}}
\def\tdct{\tilde{\mathcal{T}}}
\def\tdcg{\tilde{\mathcal{G}}}

\def\Nup{N^\uparrow}
\def\tdNup{\td{N}^\uparrow}
\def\rmin{\mathrm{IN}}
\def\pass{\mathrm{PASS}}
\def\no{\mathrm{NO}}
\newcommand{\td}[1]{\tilde{#1}}
\newcommand{\cond}{\mathchoice{\,\vert\,}{\mspace{2mu}\vert\mspace{2mu}}{\vert}{\vert}}

\allowdisplaybreaks
\title{Diffusion-Network Alignment: An Efficient Algorithm and Explicit Probability Bounds}

\author{Ziao Wang, Lei Ying
\thanks{Ziao Wang is with the Department of Electrical and Computer Engineering, University of Michigan, Ann Arbor, MI 48109, USA (email: ziaow@umich.edu).}
\thanks{Lei Ying is with the Department of Electrical and Computer Engineering, University of Michigan, Ann Arbor, MI 48109, USA (email: leiying@umich.edu).}
}

\begin{document}

\maketitle

\begin{abstract}
This paper studies a variation of the classic network alignment problem, named diffusion–network alignment. The goal is to align the vertices of a rooted diffusion tree to the vertices of a network, where the diffusion tree could be from a communication trace or contact tracing, and the network could be an online or offline social network. Different from the classic network alignment where both networks are fully observed, this model captures the information asymmetry of two networks. To solve this problem, this paper presents an efficient algorithm based on tree correlation tests to extract alignment information from local neighborhoods. We analyze the performance of the algorithm in the sparse graph regime and show that with high probability, all matched pairs are correct.  Furthermore, for each vertex on the diffusion tree, this paper establishes an explicit lower bound on the probability that the vertex is correctly matched.  These lower bounds are depth-dependent and increase as vertices get closer to the root. 
\end{abstract}

\section{Introduction}
\emph{\textbf{Network alignment.}}
The network alignment (or graph matching) problem refers to the problem of recovering the hidden correspondence between the vertices of two correlated graphs. 
A primary motivation for this problem comes from social network deanonymization~\cite{Nar-Shm-deanonymizing2009,Kor-Lat-Reconciliation2014}, where one seeks to match the users of an anonymized social network (e.g. Twitter) to a correlated reference network in which user identities are known (e.g. Facebook).
Related alignment problems also arise in areas such as bioinformatics~\cite{singh2007pairwise} and natural language processing~\cite{Hag-Ng-robust2005}.

A classical mathematical model for network alignment is the correlated \erdos--\renyi (ER) graph pair model~\cite{Ped-Gro-privacy2011}. In this model, a base ER graph is first sampled on a latent vertex set, and two observed graphs are then generated by independently subsampling the edges of this base graph. As a result, each observed graph is an ER graph, while edges of the two observed graphs are correlated through the shared base graph and the latent vertex correspondence. This goal is to find the latent vertex correspondence (vertex alignment) based on the topology of the two observed graphs.
A sequence of works has characterized the information-theoretic limits of this problem across different graph density regimes~\cite{Cul-Kiy-improved2016,Cul-Kiy-exact2017,settling-TIT,ding2023matching}.
The other sequence of works studies the problem from an algorithmic perspective, and identifies a computational threshold for the sparse graph regime. In particular, when the edge correlation between the two graphs is above $\sqrt{\alpha}$, where $\alpha\approx0.338$ is the Otter's constant~\cite{otter1948}, polynomial-time algorithms are known to achieve reliable recovery~\cite{mao2026,ganassali2024,ganassali2024statistical}. Below this threshold,~\cite{li2025algorithmic} provides evidence for the non-existence of such polynomial-time algorithms.

\emph{\textbf{Diffusion-network alignment.}}
In the aforementioned line of works, a common assumption is that both graphs are \emph{fully observable}. This paper studies a variation of the problem where one of the networks is a rooted diffusion tree. This rooted diffusion tree could be a communication trace, a trace from epidemic contact tracing, blockchain transition history. 
This variation captures the information asymmetry of the network alignment problem.  This paper aims at answering the following question: can one align the vertices of an observed diffusion tree  with those of a correlated reference network?

Motivated by the above question, we consider the following \emph{diffusion-network alignment} model. We consider two networks generated from the correlated \erdos--\renyi graph pair model. To model the information diffusion process on the first network, we employ the Independent Cascade (IC) Model, a classic model first proposed in~\cite{kempe2003} for information diffusion on networks. In this model, information propagates from a source node through repeated, probabilistic activations of neighboring nodes. The activated vertices, together with the edges through which they are activated, form a tree-structured diffusion cascade. We observe only this diffusion tree from the first network, while the second network is fully observed. The goal is to recover the correspondence between the vertices of the diffusion tree and those of the second network. In this work, we study diffusion–network alignment from an algorithmic perspective, and focus on studying efficient algorithms for this problem.

\emph{\textbf{Relation to classic network alignment.}} Diffusion–network alignment is closely related to classical network alignment, since the diffusion tree is sampled from the first network and inherits its structural information, which is correlated with the second network we observe. Therefore, it is natural to consider whether existing algorithmic approaches for network alignment can be adapted to diffusion-network alignment. In the following, we focus on discussing two algorithmic approaches that are known to achieve reliable recovery at the computational threshold $\sqrt{\alpha}$, while deferring a broader review of the literature to Appendix~\ref{appd:related}.

The first approach is \emph{subgraph counting}~\cite{mao2026}, which constructs vertex feature vectors by aggregating weighted counts of specific subgraph structures in the network and matches the vertices based on the similarity between their feature vectors. However, this approach assumes access to the entire network to perform the weighted counts and is therefore incompatible with diffusion network alignment, where the observation of the first network is restricted to a small subset of vertices.

The second approach is known as \emph{tree correlation testing}~\cite{ganassali2024,ganassali2024statistical,maier2025}, which reduces the task of matching two vertices to a hypothesis test of whether their tree-like local neighborhoods are generated independently or correlatedly. In diffusion–network alignment, the diffusion tree retains a tree-structured subgraph of the neighborhood of the diffusion source. This makes tree correlation testing a natural choice for the diffusion setting, as it operates directly on tree-like observations and does not require access to the entire graph. 
In particular, the tree correlation testing approach developed in~\cite{maier2025} can be adapted to match the root of the diffusion tree.
However, this direct application does not extend to non-root vertices in the diffusion tree.

This motivates our proposed algorithm, which exploits the fact that the diffusion cascade provides a natural structure for information propagation on the diffusion tree. The algorithm proceeds in two passes. In the upward pass, it applies tree correlation tests to collect local matching information starting from the bottom of the diffusion tree, and propagates this information upward toward the root. In the downward pass, the algorithm starts from the root and propagates the accumulated matching information toward the leaves. This two-pass procedure allows us to go beyond the direct application of existing tree correlation tests. It provides guarantees for matching non-root vertices in the diffusion tree, and also yields a stronger guarantee for matching the root.

\emph{\textbf{Our results.}}
In this work, we introduce an efficient algorithm for the diffusion-network alignment problem, building on the approach motivated above. We analyze the proposed algorithm in the constant-degree graph regime and provide performance guarantees from two perspectives.
\begin{itemize}
\item \emph{Depth-dependent probability bounds for correct matchings:}
Under the assumption that the effective correlation between the diffusion tree and the second network is above $\sqrt{\alpha}$, we derive depth-dependent lower bounds on the probability of correct matching for each individual vertex in the diffusion tree. These depth-dependent lower bounds are constants bounded away from zero and become larger for vertices closer to the root.
    \item \emph{Global correctness property:} 
    We also show that, with high probability, the proposed algorithm is globally correct in the sense that for every vertex in the diffusion tree, it either outputs the true corresponding vertex or outputs no match at all.
\end{itemize}

\emph{\textbf{Comparison to theoretical guarantees for constant-degree network alignment.}}
In the literature for network alignment in the constant-degree regime, algorithms are often evaluated in terms of achieving \emph{partial recovery}, i.e., a non-vanishing fraction of vertices should be correctly aligned.
However, partial recovery may not always be possible for the diffusion-network alignment problem. For example, the diffusion process may produce a cascade containing only the root vertex, which occurs with a non-vanishing probability. Therefore, high-probability guarantees for recovering a positive fraction of the vertices are not possible. For this reason, this paper considers vertex-wise guarantees: we provide {\em explicit} constant lower bounds on the success probability for matching each individual vertex, conditioned on its depth in the diffusion tree, and we show that the algorithm produces no incorrect matchings with high probability. This notion provides a performance guarantee that is comparable in spirit to partial recovery, while being adapted to the diffusion–network alignment problem.

\section{Model}\label{sec:model}
In this section, we introduce the proposed diffusion-network alignment model.

\emph{\textbf{Correlated \erdos--\renyi Graph Pair Model.}}
Let $n$ be a positive integer, $\lambda$ and $s$ be two positive real numbers with $s\in(0,1]$. Throughout this paper, we assume that both $\lambda$ and $s$ are constant independent of $n$. The correlated \erdos--\renyi graph pair model $\mathrm{CER}(n,\lambda,s)$ is defined as follows: We first sample a base graph $\cg\sim\mathrm{ER}(n,\frac{\lambda}{ns})$, where $\cg$ has $n$ vertices with indices from $[n]:=\{1,\ldots,n\}$, and an edge is generated between each pair of vertices independently with probability $\frac{\lambda}{ns}$.
Given the base graph $\cg$, the graph $\cg_1$ is obtained by subsampling each edge in $\cg$ independently with probability $s$. The same subsampling process is then independently repeated to generate $\cg_2$.
Then a permutation $\pi^*:[n]\rightarrow[n]$ is chosen uniformly at random and used to permute the vertex labels in $\cg_2$ to obtain the graph $\td{\cg}_2$. The model finally outputs the two correlated graphs $\cg_1$ and $\td{\cg}_2$ both with marginal distribution $\mathrm{ER}(n,\frac{\lambda}{n})$, and we denote $(\cg_1,\td{\cg}_2)\sim \mathrm{CER}(n,\lambda,s)$.

\emph{\textbf{Independent Cascade (IC) Model.}}
Given the graph pair $(\cg_1,\td{\cg}_2)\sim \mathrm{CER}(n,\lambda,s)$,
we use the IC model to simulate a diffusion process that starts at the vertex $1$ in the graph $\cg_1$.
At time step $0$, the vertex $1$ in $\cg_1$ becomes active (marked red in Figure~\ref{fig:IC}), indicating that it is the source of diffusion, while all other vertices are inactive (marked black in Figure~\ref{fig:IC}). The model proceeds in discrete time steps.
At each time step $k\ge 1$, for each edge $(u,v)$ such that $u$ is activated at time $k-1$ and $v$ is inactive, a single activation attempt is made along $(u,v)$, which succeeds independently with probability $q$. If at least one such attempt succeeds, vertex $v$ becomes activated at time $k$. Here $q\in (0,1]$ is a model parameter, and it is assumed to be a constant independent of $n$.
In Figure~\ref{fig:IC}, the edges with successful activation attempts are represented by blue arrows.
The process ends at the time step in which no new vertex is activated.
\begin{figure}[htbp]
    \centering
    \includegraphics[width=1\linewidth]{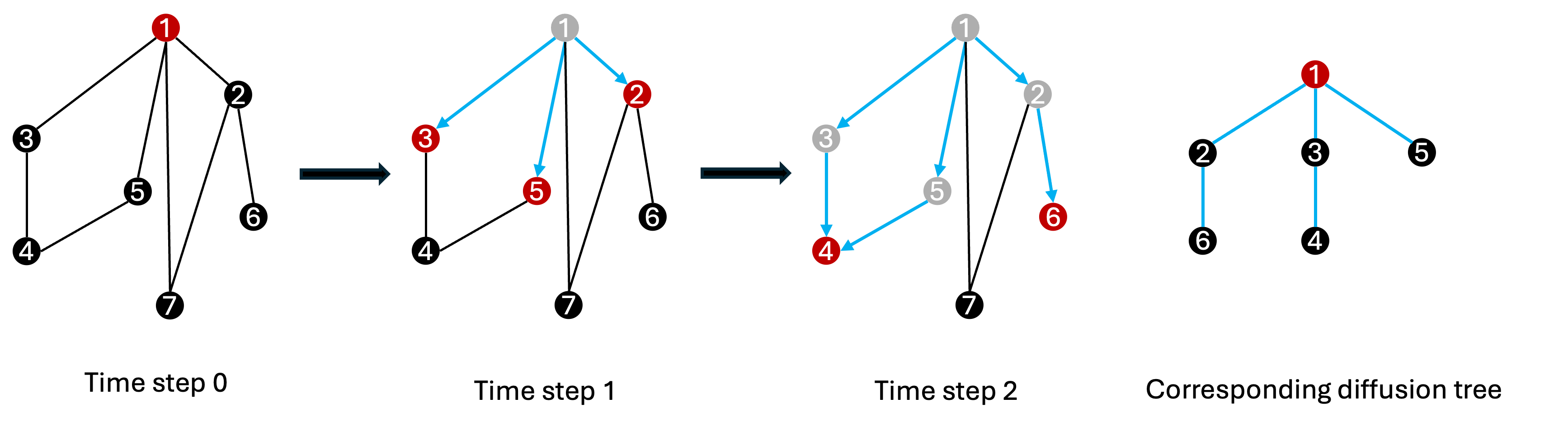} 
    \caption{An example of the IC model and its corresponding diffusion tree.}
    \label{fig:IC}
\end{figure}

\emph{\textbf{Diffusion Tree.}}
The IC model defined above is captured by the \emph{diffusion tree}, denoted $\tic$. The diffusion tree $\tic$ is rooted at vertex $1$, and for each $k\ge 1$, the set of vertices at depth $k$ is exactly the set of vertices that are activated at the time step $k$ of the IC process. For each non-root vertex in $\tic$, its parent is set to be the vertex that activated it in the IC process. However, it is possible that a vertex has multiple active neighbors, and get activated by multiple of them in the same time step. For example, in Figure~\ref{fig:IC}, vertex $4$ is activated by vertices $3$ and $5$ in time step $2$. To ensure the tree structure of $\tic$, we introduce a canonical order among the vertices at the same depth to select a unique parent for each vertex. The order is defined iteratively.
For each depth $k\ge 1$, a vertex at depth $k$ has a higher order if its parent has a higher order among the vertices at depth $k-1$, and for multiple vertices with the same parent, the vertex with a smaller index has a higher order. In Figure~\ref{fig:IC}, vertices that appear further to the left in the diffusion tree correspond to higher canonical order. When a vertex is activated by multiple neighbors in the same time step, its unique parent is chosen to the one with the highest order among those vertices which activated it.

\emph{\textbf{Diffusion-network Alignment.}}
In the diffusion-network alignment problem, we observe the diffusion tree $\tic$ and the graph $\td{\cg}_2$, and have access to the model parameters $\lambda$, $s$ and $q$. The goal is to find the corresponding vertex $\pi^*(u)$ in $\td{\cg}_2$ for each vertex $u$ in the observed diffusion tree $\tic$.

\emph{\textbf{Notation.}}
Recall that the vertices of both graphs $\cg_1$ and $\td{\cg}_2$ are indexed by $[n]$. For each index $i\in[n]$, we use $i$ to denote the vertex with index $i$ in $\cg_1$ (and also $\tic$), and use $\td{i}$ to denote the vertex with index $i$ in $\td{\cg}_2$. Then following this notation, the vertex $i$ in $\cg_1$ corresponds to the vertex $\td{j}$ in $\td{\cg_2}$ if $\pi^*(i)=j$. We use $V$ (resp. $\td{V}$ and $V^\rmic$) to denote the set of vertices $\cg_1$ (resp. $\td{\cg}_2$ and $\tic$). For each $d\ge 0$, we use $V_d$ to denote the set of vertices at depth $d$ of $\tic$, and define $V_{\le d}=\cup_{d'=0}^dV_{d'}$.

\section{Preliminary: Tree Correlation Tests}\label{sec:preliminary}
Before presenting our algorithm for diffusion-network alignment, we first introduce the preliminaries of the tree correlation testing problem, which is known as a classic tool for network alignment in the sparse regime.

\emph{\textbf{Poisson Galton--Watson trees.}}
Let $\lambda\in\mathbb{R}^+$. The Galton--Watson branching tree with offspring distribution $\poi(\lambda)$, denoted $\gw_{\lambda}$, is a distribution over unlabeled rooted trees. The tree has a distinguished vertex as its root.
Then iteratively for each vertex $v$ in the tree, an independent random variable $X_v\sim\poi(\lambda)$ is sampled, representing the number of children attached to $v$. 
For each $d\ge 0$, we use $\gw_{d,\lambda}$ to denote the depth-$d$ truncation of $\gw_\lambda$, which is the distribution over the rooted trees obtained by generating a tree from $\gw_\lambda$ and remove all the vertices with distance greater than $d$ from the root. We define the extinction probability $p^{\mathrm{ext}}_{d,\lambda}$ as the probability that there exist no vertices at distance $d$ from the root under the law $\gw_\lambda$, and define $p^{\mathrm{ext}}_\lambda:=\lim_{d\rightarrow\infty}p^{\mathrm{ext}}_{d,\lambda}$.

\emph{\textbf{Independent and correlated Galton--Watson tree distributions.}}
Let $d\in\mathbb{N}$, $\lambda\in\mathbb{R}^+$ and $s,\td{s}\in(0,1]$.
With the definition of $\gw_{d,\lambda}$, we are now ready to define the independent tree pair distribution $\mathbb{Q}^{(\lambda,s,\td{s})}_d$ and the correlated tree pair distribution $\mathbb{P}^{(\lambda,s,\td{s})}_d$ considered in the tree correlation test problem.
\begin{itemize}
    \item  Independent tree pair distribution $\mathbb{Q}^{(\lambda,s,\td{s})}_d$: Under this distribution, two unlabeled rooted trees $\ct$ and $\td{\ct}$ are sampled independently from distributions $\gw_{d,\lambda s}$ and $\gw_{d,\lambda\td{s}}$ respectively. 
    \item Correlated tree pair distribution $\mathbb{P}^{(\lambda,s,\td{s})}_d$:
    Under this distribution, two correlated trees $\ct$ and $\td{\ct}$ with marginal distributions $\gw_{d,\lambda s}$ and $\gw_{d,\lambda\td{s}}$ are generated.
    To generate the two correlated trees, an \emph{intersection tree} $\ct^*$ is first sampled from distribution $\gw_{d,\lambda s\td{s}}$. Then for each $k\in\{0,\ldots,d-1\}$, we independently sample a random variable $X^+_v\sim\poi(\lambda s(1-\td{s}))$ for each vertex $v$ at depth $k$ in $\ct^*$, and add $X_v$ new children to $v$. For each newly added vertex $u$ at depth $k$, a rooted tree $t_u$ is independently sampled from the distribution $\gw_{d-k,\lambda s}$ and attached to $u$ as a subtree. After attaching $t_u$ for each new vertex $u$, the resulting rooted tree is $\ct$. To obtain tree $\td{\ct}$, we again start with the intersection tree $\ct^*$, and independently repeat the process of adding new children and attaching new subtrees, but instead with $X_v^+\sim\poi(\lambda \td s(1-s))$ and $t_u\sim\gw_{d-k,\lambda \td{s}}$. It can be readily seen from Poisson superposition that $\ct$ and $\td{\ct}$ have marginal distributions $\gw_{d,\lambda s}$ and $\gw_{d,\lambda\td{s}}$ respectively.
\end{itemize}

\emph{\textbf{Likelihood ratio test.}}
In the tree correlation test problem, we are given two rooted trees $\ct$ and $\tdct$, and the goal is to distinguish the two hypotheses $H_0: (\ct,\tdct)\sim \mathbb{Q}^{(\lambda,s,\td{s})}_d$ and $H_1: (\ct,\tdct)\sim \mathbb{P}^{(\lambda,s,\td{s})}_d$.
The Neyman--Pearson Lemma~\citep{neyman1933ix} states that the \emph{likelihood ratio test} is the most powerful test for hypothesis testing~\citep{lehmann2005testing}. For the two trees $\ct$ and $\tdct$, their likelihood ratio is given by
\begin{equation}
\label{eq:LR}
    L_{d}^{(\lambda,s,\td{s})}(\ct,\tdct):=\frac{\mathbb{P}^{(\lambda,s,\td{s})}_d(\ct,\tdct)}{\mathbb{Q}^{(\lambda,s,\td{s})}_d(\ct,\tdct)}.
\end{equation}
The likelihood ratio for tree correlation tests has been studied in~\cite{maier2025}, who derive a recursive formula that enables efficient computation of~\eqref{eq:LR}. Moreover, they show that, with an appropriately chosen threshold, the likelihood ratio test achieves a \emph{vanishing Type~I error}, while having a \emph{non-vanishing power}. We state their recursive formula for the likelihood ratio in Appendix~\ref{appd:LR_recursive} and summarize the associated performance guarantees in Theorem~\ref{thm:tree_test} of Appendix~\ref{appd:LR_feasibility}.

\section{Proposed Algorithm and Main Results}
In this section, we present the proposed algorithm for diffusion-network alignment, and summarize the theoretical performance guarantees of the algorithm.

\subsection{Proposed algorithm for diffusion-network alignment}
In sparse \erdos--\renyi graphs, the local neighborhood distribution converges to a Poisson Galton–Watson tree as $n\rightarrow\infty$. In the IC model, each vertex activates its neighbors independently with probability $q$, so the subtree rooted at a vertex $u$ in $\tic$ can be regarded as a random subsample of its neighborhood in $\cg_1$. Consequently, this subtree retains the correlation with the local neighborhood of $\pi^*(u)$ in $\td{\cg}_2$.
Therefore, a natural approach to verify whether a vertex $u$ should be matched to a vertex $\td{v}$ is to apply the likelihood ratio test~\eqref{eq:LR} to determine whether their subtree and neighborhood are correlated or independent. However, such local tests alone are insufficient for certifying the matchings, because neighborhoods of non-corresponding nodes are not necessarily independent, as they may partially overlap. In our approach, we therefore use likelihood ratio tests only as local evidence, and propagate this evidence upward and downward along the diffusion tree. Aggregating evidence through the propagation process allows the algorithm to better identify correct matches.

We now turn to formally describe the proposed algorithm. To this end, we first introduce a notation for the subtrees in $\tic$ and define the \emph{breadth-first search (BFS) trees} in $\td{\cg}_2$. For a vertex $u$ in $\tic$ and $d\ge 0$, we use $\ct^\rmic_{u,d}$ to denote the subtree rooted at $u$ with depth up to $d$ in $\tic$. In other words, $\ct^\rmic_{u,d}$ is the rooted tree obtained by removing all the vertices that has distance greater than $d$ from $u$ in the subtree rooted at $u$ in $\tic$. For example, in Figure~\ref{fig:IC}, the subtree $\ct^{\mathrm{IC}}_{2,1}$ is the tree rooted at vertex $2$, with a single edge connecting to vertex $6$.
For two distinct vertices $\td{v}$ and $\td{w}$ in $\td{\cg}_2$, define the BFS tree $\td{\ct}^{\setminus\td{w}}_{\td{v},d}$ as follows: The tree $\td{\ct}^{\setminus\td{w}}_{\td{v},d}$ is rooted at vertex $\td{v}$. For each $k\in [d]$, the set of vertices at depth $k$ in $\td{\ct}^{\setminus\td{w}}_{\td{v},d}$ is the set of vertices at distance $k$ from $\td{v}$ in the graph $\td{\cg_2}\setminus\td{w}$. For each non-root vertex $\td{i}$ at depth $k$, the unique parent of $\td{i}$ is set to be the vertex at depth $k-1$ that is connected with $\td{i}$ in $\td{\cg_2}\setminus \td{w}$ that has the highest canonical order. Here, the canonical order within each depth is the same as in the definition of the diffusion tree $\tic$. 
The proposed algorithm collects local evidence for matching by applying tree correlation tests to pairs of subtree $\ct^\rmic_{u,d}$ and BFS tree $\td{\ct}^{\setminus\td{w}}_{\td{v},d}$.

The proposed algorithm takes the model parameters $\lambda,s,q$, the diffusion tree $\tic$, the observed graph $\td{\cg}_2$, and two integers $l$ and $\dmax$ as input. 
Recall that $V_{\le\dmax +1}$ denote the set of vertices within depth $\dmax+1$ of $\tic$. This is the set of vertices that the algorithm attempts to match. 
For each vertex $u\in V_{\le d_{\max}+1}$, the algorithm maintains a matching  candidate set $\mathcal M_u$, which is initialized to be empty. The candidate sets are updated sequentially as the four matching criteria stated below are applied. Whenever a pair $(u,\tilde v)$ is certified by one of the criteria below, the algorithm adds $\tilde v$ to $\mathcal M_u$. Thus, the criteria below are interpreted with respect to the current candidate sets at the time they are evaluated.
In stating these criteria, we omit the superscript from the likelihood ratio $L_d^{(\lambda/s,qs,s)}$ as is clear from the context.
\begin{figure}[thbp]
    \centering
    \includegraphics[width=1\linewidth]{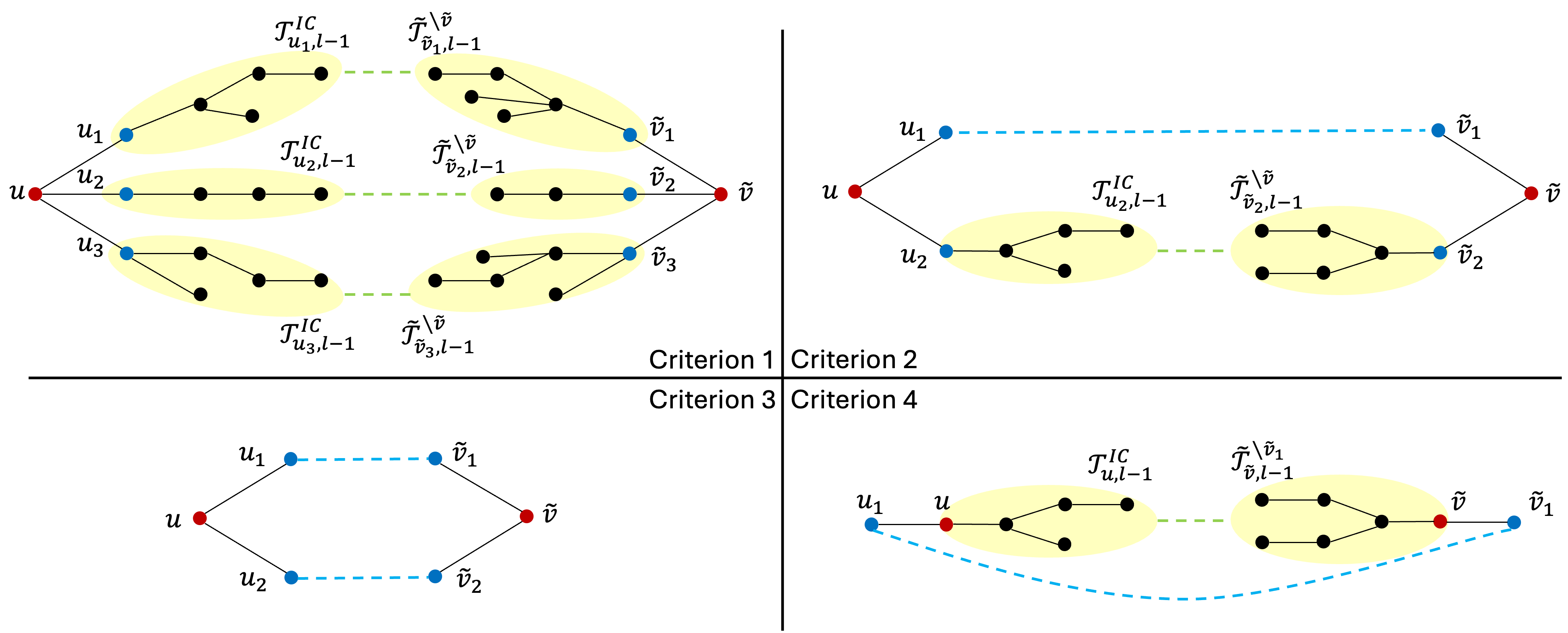} 
    \caption{Illustration of the four matching criteria.}
    \label{fig:4criteria}
\end{figure}

\begin{enumerate}
    \item Two vertices $u\in V^\rmic$ and $\td{v}\in \td{V}$ satisfy Criterion 1 if there exist three children $u_1,u_2,u_3$ of $u$ in $\tic$ and three neighbors $\td{v}_1,\td{v}_2,\td{v}_3$ of $\td{v}$ in $\td{\cg}_2$ such that\footnote{In the original definition of the likelihood ratio~\eqref{eq:LR}, the inputs are two \emph{unlabeled trees}. Here we extend the definition to allow labeled trees as inputs simply by removing all the labels.}
    \[
    L_{l-1}(\ct^\mathrm{IC}_{u_k,l-1},\td{\ct}^{\setminus\td{v}}_{\td{v}_k,l-1})>\exp\left(\frac{(\lambda s q)^{l-1}}{\log n}\right), \text{ for } k=1,2,3.
    \]
    \item Two vertices $u\in V^\rmic$ and $\td{v}\in \td{V}$ satisfy Criterion 2 if there exist two children $u_1,u_2$ of $u$ in $\tic$ and two neighbors $\td{v}_1,\td{v}_2$ of $\td{v}$ in $\td{\cg}_2$ such that 
    \[
    \td{v}_1\in \cm_{u_1}\text{ and } L_{l-1}(\ct^\mathrm{IC}_{u_2,l-1},\td{\ct}^{\setminus\td{v}}_{\td{v}_2,l-1})>\exp\left(\frac{(\lambda s q)^{l-1}}{\log n}\right).
    \]
    \item Two vertices $u\in V^\rmic$ and $\td{v}\in \td{V}$ satisfy Criterion 3 if there exist two children $u_1,u_2$ of $u$ in $\tic$ and two neighbors $\td{v}_1,\td{v}_2$ of $\td{v}$ in $\td{\cg}_2$ such that 
    \[
    \td{v}_k\in \cm_{u_k}, \text{ for } k=1,2.
    \]
    \item Two vertices $u\in V^\rmic$ and $\td{v}\in \td{V}$ satisfy Criterion 4 if there exists a neighbor $\td{v}_1$ of $\td{v}$ in $\td{\cg}_2$ such that $\td{v}_1\in\cm_{u_1}$, where $u_1$ is the parent of $u$ in $\tic$, and 
    \[
    L_{l-1}(\ct^\mathrm{IC}_{u,l-1},\td{\ct}^{\setminus\td{v}_1}_{\td{v},l-1})>\exp\left(\frac{(\lambda s q)^{l-1}}{\log n}\right).
    \]
\end{enumerate}
Figure~\ref{fig:4criteria} illustrates the four matching criteria introduced above. Each panel shows two local views. The left side is the view from the diffusion tree $\tic$, containing the vertex $u$, its children, and the corresponding subtrees. The right side is the view from the network $\td\cg_2$, containing the candidate vertex $\td v$, its neighboring vertices, and the BFS trees rooted at those neighbors. Green dashed lines connect pairs of rooted trees whose likelihood ratio exceeds the threshold $(\lambda s)^{l-1}/\log n$, while blue dashed lines indicate vertex pairs that have already been matched in previous iterations.

The proposed algorithm is summarized in Algorithm~\ref{alg:mpmatch}. The algorithm can be divided into two phases: an \emph{upward pass} phase and a \emph{downward pass} phase. 

\textbf{\underline{Upward pass:}} The upward pass phase starts from the vertices at depth $\dmax$ of $\tic$ and moves towards the root, with iterations over empty depths in $\tic$ skipped. Equivalently, if $\tic$ has depth less than $\dmax$, the first non-vacuous iteration starts from the leaf depth of $\tic$.
At the starting depth, the algorithm relies solely on Criterion 1 to match the vertices, since all matching candidate sets $\cm_u$ are initialized as empty. A vertex $\td{v}\in \td{V}$ is added to the matching candidate set $\cm_u$ of $u\in V_\dmax$ if $u$ and $\td{v}$ satisfy Criterion 1 (lines 2--6). We note that this criterion is also known as the three-dangling tree test in the literature~\citep{ganassali2024}. The algorithm then iteratively moves one level upward. At each subsequent depth $d$, the non-empty candidate sets found at depth $d+1$ enable the use of Criteria 2 and 3. A vertex $\td{v}\in \td{V}$ is added to the matching candidate set $\cm_u$ of $u\in V_d$ if $u$ and $\td{v}$ satisfy at least one of Criteria 1--3 (lines 2--6). This process continues until the root of the diffusion tree $\tic$ is reached. In this manner, the matching information obtained at lower levels and the local evidence from the tree correlation tests are progressively aggregated and propagated upward, boosting the matching performance at higher levels of $\tic$.

\textbf{\underline{Downward pass:}} The downward pass phase starts at depth $1$ of $\tic$ and moves towards the leaves of $\tic$. At each depth $d$, a vertex $\td{v}\in\td{V}$ is added to the matching candidate set $\cm_u$ of $u\in V_d$ if $u$ and $\td{v}$ satisfy Criterion 4 (lines 7--11). Recall that Criterion 4 is based on the matching information of the parental vertex in $\tic$. Therefore, this phase propagates the matching information from the upper levels of $\tic$ towards the lower levels. 

After the two passes, the algorithm outputs all the matching candidate sets $\cm_u$ (line 16).

In the analysis of the proposed algorithm, we choose parameters $$l=\lfloor\sqrt{\log n}\rfloor\,\text{ and } \,\dmax=\left\lfloor\frac{1-\epsilon}{\log(\lambda q)}\log n\right\rfloor,$$
where $\epsilon>0$ is an arbitrarily small constant.
\begin{remark}[Computational complexity of Algorithm~\ref{alg:mpmatch}]
    Under the correlated \erdos--\renyi graph pair model, the maximum degree in both graphs $\cg_1$ and $\tdcg_2$ is at most $\frac{(1+o(1))\log n}{\log \log n}$ with high probability~\citep{frieze2015introduction}. Since $(\frac{(1+o(1))\log n}{\log\log n})!=n^{1+o(1)}$, with high probability, the time complexity of each likelihood ratio computation in Algorithm~\ref{alg:mpmatch} is $n^{2+o(1)}$ (see Remark~\ref{rem:LR_complexity} in Appendix~\ref{appd:tree_test} for the detailed reasoning). 
    Algorithm~\ref{alg:mpmatch} performs at most $n^{2+o(1)}$ likelihood ratio tests, and hence runs in polynomial time.
\end{remark}

\begin{algorithm2e}[t]
\caption{Proposed algorithm for diffusion-network alignment} 
\label{alg:mpmatch}
\DontPrintSemicolon
\SetAlgoNoEnd
\SetKwInOut{Input}{Input}
\SetKwInOut{Output}{Output}
\Input{Diffusion tree $\tic$, fully observed graph $\td{\cg}_2$, model parameters $\lambda,s,q$, integers $d_{\max},l$}
\Output{A matching candidate set $\cm_v$ for each $v$ within depth $d_{\max}+1$ of $\tic$ }
 $\cm_u \gets \emptyset$ for each $u\in V_{\le \dmax+1}$\\
\tcp{Upward pass phase; empty depths are skipped}
\For{$d=d_{\max}:-1:0$}{
\For{$u\in V_{d}$}{
\For{$\td{v}\in \td{V}$}{
\If{$u$ and $\td{v}$ satisfy at least one of Criteria 1--3}{
 $\cm_u\gets \cm_u\cup\{\td{v}\}$
}
}
}
}
\tcp{Downward pass phase; empty depths are skipped}

\For{$d=1:+1:d_{\max}+1$}{
\For{$u\in V_{d}$}{
\For{$\td{v}\in \td{V}$}{
\If{$u$ and $\td{v}$ satisfy Criterion 4}{
 $\cm_u\gets \cm_u\cup\{\td{v}\}$
}
}
}
}

 \Return{The matching candidate set $\cm_u$ for each $u\in V_{\le\dmax+1}$}
\end{algorithm2e}

\subsection{Main results}
\label{sec:main_results}
In this section, we present the theoretical guarantee of Algorithm~\ref{alg:mpmatch}. To state the results, we first define a sequence of probabilities based on Poisson distributions. The sequence will later be used to bound the probability for the vertices at each depth of $\tic$ to be correctly matched by Algorithm~\ref{alg:mpmatch}.

Let $\eps'$ denote an arbitrarily small constant.
We recursively define the sequence $(p_d)_{d\ge 0}$ as follows: First define \begin{equation}
    p_0=\P(X\ge 3),
\end{equation}
where $X\sim\poi(\lambda sq(1-p^\mathrm{ext}_{\lambda sq}-\eps'))$. Then for each $d\ge 1$, define
\begin{equation}
p_{d}=\P(2Y+Z\ge 3),
\end{equation}
where $Y\sim\poi(\lambda sq p_{d-1})$, $Z\sim\poi(\lambda sq (1-p^\mathrm{ext}_{\lambda sq}-\eps'-p_{d-1}))$, and $Y$ and $Z$ are independent.

    To understand the definition of the sequence $(p_d)_{d\ge 0}$, consider running the upward pass phase of Algorithm~\ref{alg:mpmatch} on a pair of correlated trees $(\ct,\tdct)\sim \mathbb{P}_{d+l}^{(\lambda/s,qs,s)}$ starting at depth $d$.
    
    At the bottom depth $d$, the algorithm relies only on Criterion 1. For a vertex at depth $d$ in the intersection tree $\ct^*$, let $X$ denote the number of its children $c$ in $\ct^*$ such that $c$'s associated subtrees in $\ct$ and $\tdct$ can pass the tree correlation test.
    By definition, the number of children in $\ct^*$ is distributed as $\poi(\lambda sq)$. Moreover, we will see by Theorem~\ref{thm:tree_test} that, for each of such children, the associated subtrees in $\ct$ and $\tdct$ pass the test with probability at least $1-p^\mathrm{ext}_{\lambda sq}-\eps'$. By Poisson thinning, the distribution of $X$ stochastically dominates $\poi(\lambda sq(1-p^\mathrm{ext}_{\lambda sq}-\eps'))$, which is the distribution in the definition of $p_0$.
    A match at this level requires $X\ge 3$, so the probability of correct matching at the bottom layer is at least $p_0$.  
    
    When the algorithm moves to depth $d-1$, it applies Criteria 1--3 jointly. In this step, a pair of previously matched children one depth below can be viewed as a strong signal of value $2$, while a pair of subtrees that pass the correlation test but are not matched provides a weak signal of value $1$. Declaring a match whenever at least one of Criteria 1--3 is satisfied is then equivalent to requiring the total signal strength to be at least $3$. Let $Y$ denote the number of strong signals and $Z$ denote the number of weak signals.
    Using the lower bound $p_0$ on the probability of correct matching at the bottom level together with Theorem~\ref{thm:tree_test}, a similar Poisson thinning argument shows that one may take
    $Y\sim\poi(\lambda sqp_0)$ and $Z\sim\poi(\lambda sq(1-p^\mathrm{ext}_{\lambda sq}-\eps'-p_0))$ with $Y$ and $Z$ independent. These are precisely the random variables appearing in the definition of $p_1$, and $p_1$ thus characterizes the probability of correct matching one level above. This reasoning can be repeated inductively as the algorithm propagates upward: each move up by one level advances the recursion by one, so that the lower bound on the probability of a correct match increases from $p_j$ to $p_{j+1}$.
    In particular, $p_d$ is the probability that algorithm maps the roots of the two trees.
    This argument is formally presented as Lemma~\ref{prop:auxiliary} in Appendix~\ref{appd:pf_fraction}.

    It is straightforward to verify that the sequence $(p_d)_{d\ge 0}$ is strictly increasing and converges to a finite limit $p\in (0,1]$. Figure~\ref{fig:pd_evolution} illustrates an example of the evolution of the sequence.
\begin{figure}[htbp]
    \centering
    \includegraphics[width=0.5\linewidth]{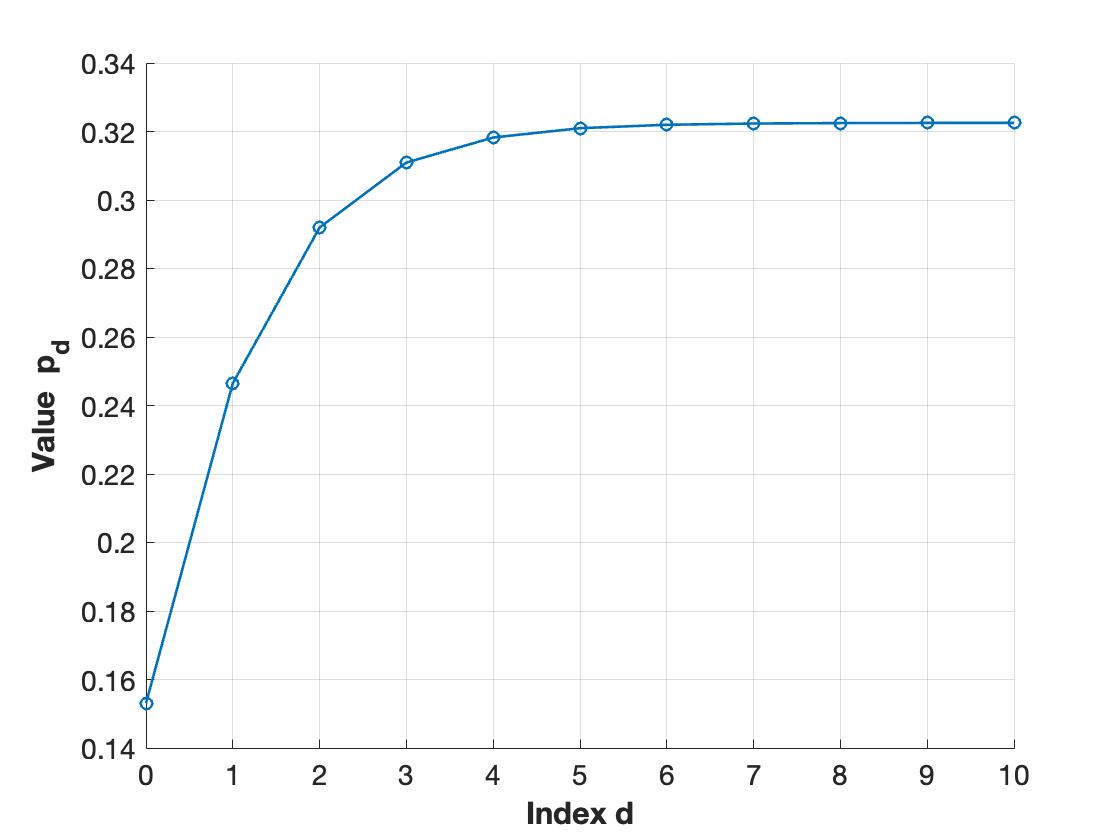}
    \caption{An example of the sequence $(p_d)_{d\ge 0}$. The parameters are set to $\lambda=4$, $s=q=0.7$ and $\eps'=0.1$ in the example.}
    \label{fig:pd_evolution}
\end{figure}

With the sequence $(p_d)_{d\ge 0}$ in hand, we are now ready to state the performance guarantee of the proposed algorithm.
Recall that $\cm_u$ is the set of matching candidates for vertex $u$ in $\tic$ that Algorithm~\ref{alg:mpmatch} outputs.
Accordingly, for an individual vertex $v$, the event $\cm_v=\{\pi^*(v)\}$ means that $v$ is correctly matched by Algorithm~\ref{alg:mpmatch}. Meanwhile, the event $\cm_u\subseteq\{\pi^*(u)\}$ for all $u\in V_{\le\dmax+1}$ captures the \emph{global correctness property}, meaning that, for every vertex, the algorithm outputs either an empty candidate set or the singleton set containing the true corresponding vertex.
The following theorem provides a guarantee for Algorithm~\ref{alg:mpmatch} in terms of the probability of these desired events.

\begin{restatable}{theorem}{guarantee}
    \label{thm:guarantee}
    Let $\alpha\approx0.338$ denote Otter's tree-counting constant.
    Suppose $qs^2>\alpha$, $\lambda sq>1$, and $\lambda>\lambda_0$, where $\lambda_0$ is a constant independent of $n$. Define $l^+=\lfloor(\log n)^{3/4}\rfloor$, and $d'=\min(l^+,\dmax-d)$ for each $d\in[\dmax]$. Then, for a vertex $v\in\{2,\ldots,n\}$ in $\cg_1$ and a depth $d\in[\dmax]$,
    \begin{equation}
    \label{eq:cond_frac}
        \P(\cm_v=\{\pi^*(v)\}\cond v\in V_{d})\ge p_{d'}-O(n^{-\eps/3}).
    \end{equation}
    For the root vertex $1$ of the diffusion tree,
    \begin{equation}
    \label{eq:root_frac}
        \P(\cm_1=\{\pi^*(1)\})\ge p_{l^+}-O(n^{-\eps/3}).
    \end{equation}
    Moreover, 
    \begin{equation}
        \label{eq:noerror_thm}
        \P(\cm_u\subseteq\{\pi^*(u)\},\forall u\in V_{\le\dmax+1})\ge 1-O(n^{-\eps/3}).
    \end{equation}
\end{restatable}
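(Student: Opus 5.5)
We split the proof into two parts. The first is the global correctness bound \eqref{eq:noerror_thm}, which is a union bound over false positives. The second is the depth-dependent success bounds \eqref{eq:cond_frac} and \eqref{eq:root_frac}, which come from a coupling with correlated Galton--Watson trees together with Lemma~\ref{prop:auxiliary}. Throughout we work on a high-probability "good" event $\mathcal{E}$ on $(\cg_1,\tdcg_2)$. On $\mathcal{E}$, every ball of radius $2l^+ + 2$ around a vertex of $\tic$ in $\cg$ contains at most one cycle, and the maximum degree is $O(\log n/\log\log n)$. Also, $|V_{\le \dmax+1}|\le n^{1-\eps+o(1)}$, since $(\lambda q)^{\dmax}\le n^{1-\eps}$. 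Standard exploration arguments for sparse ER graphs give $\P(\mathcal{E}^c)=O(n^{-\eps/3})$, and the choice of $\dmax$ is exactly what makes the union bounds below affordable.

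\emph{Step 1: no false matches.} We first bound the probability that a single likelihood ratio test between $\ct^\rmic_{u,l-1}$ and $\tdct^{\setminus \td w}_{\td v,l-1}$ with $\td v\neq \pi^*(u)$ passes. Such a test passes with probability at most $n^{-1-c}$ when the two neighborhoods are disjoint, by the vanishing Type I error in Theorem~\ref{thm:tree_test}. When the neighborhoods overlap, we use the tree structure on $\mathcal{E}$ to show that at most one of the three (or two) required tests can involve overlapping neighborhoods. The remaining tests are then conditionally independent of the overlapping one.

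We then argue by induction over the order in which the algorithm evaluates criteria, with invariant "all current $\cm_u\subseteq\{\pi^*(u)\}$". Under this invariant:
\begin{itemize}
\item Criterion 3 can only fire for a wrong $\td v$ if $\td v$ is adjacent to both $\pi^*(u_1)$ and $\pi^*(u_2)$ while $\td v\ne\pi^*(u)$. This creates a short cycle, which is excluded on $\mathcal{E}$.
\item Criterion 2 for a wrong $\td v$ requires $\td v\sim\pi^*(u_1)$ plus one false test. There are $O(\log n)$ such $\td v$, and each has probability $n^{-1-c}$ times the polylog branching choices.
\item Criterion 1 uses three false tests, which gives probability $n^{-3(1+c)}$ and leaves room for a union bound over $n$ choices of $\td v$.
\item Criterion 4 mirrors Criterion 2.
\end{itemize}
A union bound over $u\in V_{\le\dmax+1}$ then gives \eqref{eq:noerror_thm}.

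\emph{Step 2: success probability.} Fix $v$ at depth $d$. Conditioned on $v\in V_d$, we couple the subtree of $\tic$ below $v$ to depth $d'+l$, together with the BFS tree of $\pi^*(v)$ in $\tdcg_2$, with a pair drawn from $\mathbb{P}^{(\lambda/s,qs,s)}_{d'+l}$. The coupling uses the standard ER-to-Poisson-GW total variation bound. The conditioning on $v\in V_d$ and on the upper part of the cascade only removes $n^{1-\eps+o(1)}$ vertices, which perturbs the offspring laws by $o(1)$ and is absorbed into $\eps'$.

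Inside the coupled trees, the upward pass run from depth $\min(\dmax,d+l^+)$ up to $v$ covers at least $d'$ levels. Lemma~\ref{prop:auxiliary} then says that $v$ is certified with probability at least $p_{d'}$. The candidate sets only grow with more criteria, so the downward pass cannot hurt. Combining this with Step 1, which guarantees no spurious candidates, yields $\cm_v=\{\pi^*(v)\}$ with probability at least $p_{d'}-O(n^{-\eps/3})$. For the root, the same argument applies without any conditioning and with $d'=l^+$, giving \eqref{eq:root_frac}.

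\emph{Main obstacle.} The hardest step is Step 1's handling of dependence. Tests for non-corresponding pairs whose neighborhoods overlap through $\pi^*$, such as $\td v=\pi^*(u')$ for a relative $u'$ of $u$, are not covered by the independent Type I error bound. Our first approach is to show that at most one child-subtree pairing can be genuinely correlated, using tree-likeness of the radius-$l^+$ balls on $\mathcal{E}$. The three-child and two-child requirements in Criteria 1--3 then force at least one additional nearly independent false test, which supplies the extra $n^{-1-c}$ factor needed for the union bound.
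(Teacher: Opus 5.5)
Your overall architecture matches the paper: a global-correctness part, plus a per-vertex inclusion bound obtained by coupling with $\mathbb{P}^{(\lambda/s,qs,s)}$ and invoking Lemma~\ref{prop:auxiliary}. Step~2 is essentially the paper's argument. Step~1, however, has two genuine gaps.

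\textbf{First gap: overlapping tests.} Your mechanism does not work with the event you chose. You allow one cycle per ball, yet three of your bullets need more than that.
\begin{itemize}
\item Criterion~3: your bullet needs the 4-cycle $\bar u,\bar u_1,\bar v,\bar u_2$ to be excluded, and ``at most one cycle'' does not exclude it.
\item Criteria~2 and~4: there is only one likelihood-ratio test. If it overlaps, it closes just a single cycle together with the matched edge at $\pi^*(u_1)$. So there is no ``additional nearly independent false test'' to supply the extra factor.
\item Criterion~1: a single cycle $u,u_1,\dots,x,\dots,v_1,v,v_2,\dots,y,\dots,u_2,u$ can make two of the three pairs overlap. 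Here $u_1\to x$ and $u_2\to y$ lie in $\tic$, and the rest lies in $\tdcg_2$. So ``at most one pair overlaps'' is false, and the asserted conditional independence of the remaining tests is never justified.
\end{itemize}
The paper avoids all of this with two events and a deterministic argument.
\begin{itemize}
\item It requires the $2l$-ball of \emph{every} vertex of $V_{\le\dmax}$ in the union graph $\bar\cg$ to be a tree. This is affordable: each ball contains a cycle with probability $n^{-1+o(1)}$, and there are $n^{1-\eps+o(1)}$ such vertices. Membership in $\tic$ conditions on a root path, which is handled by Harris' inequality and a cycle-freeness-given-structure lemma.
\item It requires that no \emph{vertex-disjoint} tested pair passes.
\item Lemma~\ref{lem:sufficient} then shows deterministically that any false firing of Criteria~1--4 yields two distinct paths of length at most $2l$, i.e.\ a cycle. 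No independence between tests is needed.
\end{itemize}
Relatedly, Theorem~\ref{thm:tree_test} only gives power. The false-positive bound comes from Markov's inequality under $\mathbb{Q}$ with threshold $\theta=\exp((\lambda sq)^{l-1}/\log n)$, giving $\mathbb{Q}(L_{l-1}>\theta)\le 1/\theta=n^{-\omega(1)}$. This must be transferred to the actual tested pairs by showing their law on the disjointness event is within a constant factor of $\mathbb{Q}$. At that rate, one union bound over all $n^3$ triples handles every test at once.

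\textbf{Second gap: conditioning on $v\in V_d$.} Equation~\eqref{eq:cond_frac} needs $\cm_v\subseteq\{\pi^*(v)\}$ to hold with probability $1-O(n^{-\eps/3})$ \emph{conditionally on} $v\in V_d$. Through Criteria~2--4, this depends on the correctness of other candidate sets. Your Step~1 is purely unconditional, and the bound cannot be transferred by dividing by $\P(v\in V_d)$. That probability equals $\E|V_d|/(n-1)$, which is of order $1/n$ for small $d$. Dividing by it destroys a bound like $O(n^{-\eps/3})$ or $n^{-\eps-c}$. The paper proves the conditional version separately:
\begin{itemize}
\item For the cycle event, it conditions on the highest-order root-to-$v$ and root-to-$i$ paths. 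It shows their union is a tree with at most two vertices at each distance from $i$, then applies Harris' inequality and the conditional cycle lemma.
\item For the disjoint-pass event, it proves the unconditional bound $\tilde O(n^{-4/3})$. The factor-$n$ loss from conditioning, using $\P(V_d\neq\emptyset)=\Theta(1)$, then still leaves $\tilde O(n^{-1/3})$.
\end{itemize}

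\textbf{Smaller points in Step~2.}
\begin{itemize}
\item The coupling error from deleting $V_{\le d}\setminus\{v\}$ belongs in the $O(n^{-\eps/3})$ term, not in $\eps'$, which is fixed in the definition of $p_d$.
\item You must check that the BFS trees $\tilde{\mathcal{T}}^{\setminus\tilde w}_{\tilde w_k,l-1}$, which are computed in all of $\tdcg_2$, do not reach $\pi^*(V_{\le d}\setminus\{v\})$.
\item You must also check that $v$'s subtree in $\tic$ is not captured by other depth-$d$ vertices of higher canonical order. Only then do the algorithm's real inputs coincide with the coupled trees.
\end{itemize}
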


Recall that $V_d$ is the set of vertices at depth $d$ of $\tic$. For a non-root vertex $v\in\{2,\ldots,n\}$, both its appearance in the
diffusion tree and its depth are random. Hence, for each non-zero depth $d\in[\dmax]$, equation~\eqref{eq:cond_frac} provides a lower bound on the probability that an individual non-root vertex is correctly matched, conditional on the event that it appears at depth $d$ of $\tic$.
On the other hand, the root vertex $1$ is the deterministic source of the diffusion tree and always appears at depth $0$. Therefore, equation~\eqref{eq:root_frac} provides an unconditional lower bound on the probability that the root vertex is correctly matched.

By the definition of $d'=\min\{l^+,\dmax-d\}$, the lower bound depends on the depth $d$ through the index $d'$. Moreover, $d'$ is non-increasing as a function of $d$. As $d$ decreases from $\dmax$ to $\dmax-l^+$, the index $d'$ increases from $0$ to $l^+$, and the corresponding lower bound improves along the sequence $p_0,\ldots,p_{l^+}$. For vertices at smaller depths $d<\dmax-l^+$, this improvement saturates, and the bound remains $p_{l^+}$. Since $l^+\to\infty$ as $n\to\infty$, the value of $p_{l^+}$ approaches the sequence limit $p$. Therefore, for depths $d\le \dmax-l^+$, our lower bound is equivalently $p-o(1)$.

\begin{remark}
    In our model, the quantity $\sqrt{qs^2}$ captures the edge correlation coefficient between the diffusion tree $\tic$ and graph $\cg_2$. Therefore, condition $qs^2>\alpha$ in Theorem~\ref{thm:guarantee} is analogous to the assumption that the edge correlation coefficient lies above the threshold $\sqrt{\alpha}$ in the literature on efficient algorithms for sparse network alignment~\citep{mao2026,ganassali2024,ganassali2024statistical}.
\end{remark}
\begin{remark}
    The use of tree correlation tests in sparse network alignment has been studied in~\cite{maier2025}, where the central idea is to apply Criterion 1, also known as the three-dangling-tree test, to verify vertex matchings between two fully observed graphs. One can adapt this approach to the diffusion–network alignment setting and show that the root vertex $1$ can be correctly matched with probability at least $p_0-o(1)$. 
    In contrast, our algorithm leverages four complementary matching criteria and propagates matching evidence across the diffusion tree. This enables us to provide vertex-wise performance guarantees for non-root vertices and to obtain 
    a strictly improved lower bound $p_{l^+}-o(1)$ for the root.
\end{remark}

\begin{remark}
    When specialized to the case $q=1$, our model is closely related to the classical sparse network alignment problem, since the diffusion tree contains all vertices in the connected component of the root. It is therefore interesting to compare our guarantee with the information-theoretic optimum for partial recovery characterized in~\cite{du2025optimal}. For example, when $\lambda s=2$, the information-theoretic characterization in~\cite{du2025optimal} gives an optimal aligned fraction of approximately $0.7149$,\footnote{The optimal matching fraction in~\cite{du2025optimal} does not admit a closed-form expression. The value reported here is obtained by numerically evaluating the characterization provided in that work.} whereas our lower bound on the success probability of the root is approximately $0.4125$. Closing this gap, or determining whether the information-theoretic optimum can be achieved efficiently in this setting, remains a future research problem.
\end{remark}

\section{Proof Sketch of the Main Results}
In this section, we sketch the proof of Theorem~\ref{thm:guarantee}, while deferring the detailed proof to Appendix~\ref{appd:pf_main}. The proof relies on two key ingredients, each of which is formalized by a proposition.

The first ingredient is showing that the global correctness property holds with high probability. Importantly, this property remains valid even after conditioning on the event that a given vertex $v$ appears at a specific  depth $d$ of $\tic$. This statement is made precise in Proposition~\ref{prop:noerror}.

The second ingredient is a per-vertex inclusion guarantee. Conditioning on the event that vertex $v$ is at depth $d$ of $\tic$, 
we show that Algorithm~\ref{alg:mpmatch} includes $\pi^*(v)$ in the candidate seet $\cm_v$ with probability at least ${p_{d'}}-o(1)$, where $d'=\min\{l^+,\dmax-d\}$. This guarantee is formalized in Proposition~\ref{prop:inclusion}.

Together, these two ingredients yield the theorem as follows. The global correctness property ensures that, with high probability, each candidate set $\cm_u$ contains at most the true correspondence $\pi^*(u)$. On this event, the inclusion guarantee implies that $\cm_v=\{\pi^*(v)\}$, meaning that vertex $v$ is correctly matched. Applying this argument vertex-wise and combining the conditional and unconditional bounds gives the stated performance guarantees in Theorem~\ref{thm:guarantee}. 

To prove the two propositions, it is important to introduce the notion of the  \emph{live-edge graph}, a standard representation of the IC model~\citep{kempe2003}.
The live-edge graph $\cg_1'$ is obtained by subsampling each edge in $\cg_1$ independently with probability $q$, where the randomness in the subsampling is coupled with the IC process (see Appendix~\ref{appd:live-edge} for the formal definition).
The live-edge graph fully captures the randomness in both $\cg_1$ and the IC process. The proof of the two propositions is based on analyzing the graph pair $(\cg_1',\tdcg_2)$. We sketch the proof of the two propositions in the following two subsections, while deferring the detailed proofs to Appendices~\ref{appd:pf_correctness} and~\ref{appd:pf_fraction} respectively.

\subsection{Global correctness property}

\emph{\textbf{Sufficient condition for global correctness.}}
We first establish a sufficient condition for the desired global correctness of Algorithm~\ref{alg:mpmatch}. Let $\bar{\cg}$ denote the union graph of $\cg_1'$ and $\cg_2$.
We define the following two events.
\begin{itemize}
\setlength\itemsep{0em}
    \item $\ce_1$: for every $u$ within depth $\dmax$ of $\tic$, the $2l$-neighborhood of $u$ in $\bar{\cg}$ is a tree.
    \item $\ce_2$: among all pairs of  subtree $\ct^{\mathrm{IC}}_{u,l-1}$ and BFS tree $\tdct_{\td{v},l-1}^{\setminus\td{w}}$ tested in Algorithm~\ref{alg:mpmatch}, there is no pair that simultaneously passes the tree correlation test and contains no common vertices.
\end{itemize}

We claim that the event $\ce_1\cap\ce_2$ implies the global correctness property. 
The claim is proven by induction.
Consider any iteration of the algorithm and suppose that all previously identified matches are correct. If two non-corresponding vertices $u$ and $\td{v}$ were to satisfy any of the four matching criteria at this iteration, then event $\ce_2$ can be shown to imply the existence of multiple distinct paths between $u$ and $\td{v}$ in $\bar{\cg}$. This further implies that the $2l$-neighborhood of $u$ in $\bar{\cg}$ contains a cycle, contradicting $\ce_1$.
A formal statement of this implication is given in Lemma~\ref{lem:sufficient} in Appendix~\ref{appd:pf_correctness}. Given this sufficient condition, it is left to show that both $\ce_1$ and $\ce_2$ occur with high probability, \emph{both unconditionally and conditionally} on a given vertex $v$ at depth $d$ of $\tic$. 

\emph{\textbf{Event $\ce_1$.}} 
We begin by considering the unconditional probability of event $\ce_1$. In a sparse \erdos--\renyi graph, the probability that the $2l$-neighborhood of a fixed vertex contains no cycles is $1-O(n^{-1+o(1)})$, because $l=o(\log n)$. Moreover, by our choice of $\dmax$, the total number of vertices within depth $\dmax$ of $\tic$ is at most $\td{O}(n^{1-\eps})$ with high probability. A naive union bound would therefore suggest that $\ce_1$ holds with probability $1-o(1)$.

However, this argument cannot directly be applied, since the event that a vertex lies within depth $\dmax$ of $\tic$ already implies the existence of a path connecting it to the root, and thus enforces local structure in $\bar{\cg}$. To address this issue, we show that the $1-O(n^{-1+o(1)})$ bound on the probability of having no cycles still holds after conditioning on the existence of such a path. This allows us to apply a union-bound-type argument to establish the desired unconditional probability bound for $\ce_1$.

We then turn to the conditional bound. In addition to the conditioning considered above, namely, that a vertex lies within depth $\dmax$ of $\tic$, we now further condition on the event that a specific vertex $v$ lies at depth $d$ of $\tic$. This introduces a second, explicitly conditioned path from $v$ to the root. The main technical challenge is therefore to control the possible overlap patterns between these two conditioned paths. By carefully analyzing such overlaps, we show that the same $O(n^{-1+o(1)})$ bound continues to hold, yielding the desired conditional high-probability bound for $\ce_1$.

\emph{\textbf{Event $\ce_2$.}} 
We first consider the unconditional probability of event $\ce_2$.
From definition~\eqref{eq:LR}, the likelihood ratio $L_{l-1}^{(\lambda/s,qs,s)}$ has expectation $1$ under the independent tree pair distribution $\mathbb{Q}_{l-1}^{(\lambda/s,qs,s)}$. By our choice $l=\lfloor\sqrt{\log n}\rfloor$, the likelihood ratio threshold in Algorithm~\ref{alg:mpmatch} is $n^{\omega(1)}$. Hence, by Markov's inequality, $L_{l-1}^{(\lambda/s,qs,s)}$ exceeds the threshold with probability $n^{-\omega(1)}$. 
Moreover, for any pair of trees tested in Algorithm~\ref{alg:mpmatch} that are vertex-disjoint, we show that their joint law is within a constant factor from $\mathbb{Q}_{l-1}^{(\lambda/s,qs,s)}$, under high probability conditions on the tree sizes.
From here, we can apply a union bound to all possible tested tree pairs to show that $\ce_2$ holds with high probability.

The treatment for the conditional probability of $\ce_2$ is rather straightforward. We show that the conditioned event $v\in V_d$ occurs with probability $\Omega(n^{-1})$. The chain rule implies that the multiplicative gap between the conditioned and the unconditioned probability is $O(n)$, which yields the desired bound.

\subsection{Per-vertex inclusion guarantee}
For illustration purposes, we focus on showing the lower bound $p_{l^+}-o(1)$ at depth $d=\dmax-l^+$ in this proof sketch. Recall that we argued in Section~\ref{sec:main_results} that $p_{l^+}$ is a lower bound for the probability of matching the root node when applying the upward pass phase of Algorithm~\ref{alg:mpmatch} to a tree pair $(\ct,\td{\ct})\sim \mathbb{P}^{(\lambda/s,qs,s)}_{l+l^+}$. Therefore, it is desirable to couple the subtree rooted at $v$ up to depth $(l+l^+)$ in $\tic$ and the $(l+l^+)$-neighborhood of $\pi^*(v)$ in $\td{\cg}_2$ with the Galton--Watson tree pair $(\ct,\td{\ct})\sim \mathbb{P}^{(\lambda/s,qs,s)}_{l+l^+}$. 

As a key step, we show that when $S\in [n]$ is a vertex subset with sublinear size, the joint law between the $(l+l^+)$-neighborhood of $v$ in $\cg_1'\setminus S$ and the $(l+l^+)$-neighborhood of $\pi^*(v)$ in $\td{\cg}_2\setminus \pi^*(S)$ converges in total variation distance to $\mathbb{P}^{(\lambda/s,qs,s)}_{l+l^+}$ (see Proposition~\ref{prop:tree_tv} in Appendix~\ref{appd:pf_fraction} for the formal statement).
While the coupling is guided by the Poisson approximation for the neighborhood sizes in \erdos--\renyi graphs, the argument requires additional control beyond matching these marginals. In particular, under the correlated \erdos--\renyi graph pair model, the explored $(l+l^+)$-neighborhoods are not a priori trees: one must rule out edges within the same layer, and multiple edges from a vertex to previously discovered layers. Furthermore, joint exploration across the two graphs introduces additional overlap patterns that do not arise in the Galton–Watson model. For example, a vertex may be reached in the neighborhood of $v$ via an exclusive edge in $\cg_1'$ and simultaneously be reached in the neighborhood of $\pi^*(v)$ via an exclusive edge in $\td{\cg}_2$, creating cross-neighborhood intersections that are not tree-like. A key part of the proof is therefore to show that such non-tree effects and undesired overlaps occur with vanishing probability, so that the neighborhood growth can be coupled to the correlated Galton–Watson process.

With this convergence in hand, we apply it with $S=V_{\le d}\setminus\{v\}$, namely the set of vertices within depth $d$ of $\tic$ excluding $v$. Under this choice of $S$, we show that the $(l+l^+)$-neighborhood of $v$ in $\cg_1'\setminus S$ coincides with the subtree of $v$ in $\tic$, and that the corresponding  $(l+l^+)$-neighborhood of $\pi^*(v)$ in $\tdcg_2\setminus \pi^*(S)$ does not intersect with the neighborhoods of vertices in $\pi^*(S)$. Consequently, the local neighborhoods used by the algorithm around $v$ and $\pi^*(v)$ behave as if they were isolated and distributed according to the correlated Galton–Watson model. This allows the upward pass phase of Algorithm~\ref{alg:mpmatch} to include $\pi^*(v)$ in $\cm_v$ with probability at least $p_{l^+}-o(1)$.

As illustrated in this proof sketch, our
probability lower bounds are established solely from the upward pass. The downward pass, on the other hand, can match additional vertices that cannot be matched by the upward pass alone. Figure~\ref{fig:4criteria} illustrates such an example. Note that Criteria~1--3 require the vertex $u$ to have at least two children in the diffusion tree in order to be matched. By contrast, the example for Criterion~4 shows that a vertex $u$ with only one child can still be matched using information propagated from its parent $u_1$. However, the improvement brought by the downward pass is difficult to quantify because of the complicated correlation between its decisions and those made during the upward pass. We therefore include the downward pass as a practical refinement that can strictly improve the matching performance beyond what is captured by our theoretical lower bounds. It is an interesting future research problem to theoretically quantify the gain of the downward pass.

\section{Conclusion}
In this work, we introduce the diffusion–network alignment problem, motivated by the information asymmetry in real-world networks. We formalize this problem under a probabilistic model that combines the correlated \erdos--\renyi graph pair model with an Independent Cascade diffusion process, capturing both structural correlation across networks and the randomness introduced by diffusion.

We propose a polynomial-time algorithm for diffusion–network alignment, and analyze its performance in the sparse regime.  We provide performance guarantees  for the algorithm from two perspectives. First, we established depth-dependent lower bounds on the probability of correctly matching each individual vertex in the diffusion tree. Second, we prove a global correctness guarantee, showing that with high probability all the vertex pairs matched by the algorithm are correct.

Several directions for future work on diffusion–network alignment remain open. One natural question is to characterize the optimal vertex-wise matching probabilities that can be achieved by polynomial-time algorithms under the global correctness requirement. Beyond efficient algorithms, it is also of interest to understand the information-theoretic limits of diffusion–network alignment.

\section*{Acknowledgments}
This work was supported in part by U.S. National Science Foundation (NSF) under grants 2207548 and 2324769; and AFOSR grant FA9550-24-1-0002. The authors are grateful to Jiale Cheng for many helpful and inspiring discussions. The authors also thank the anonymous reviewers for their constructive comments and suggestions.

\bibliographystyle{IEEEtranN}
\bibliography{bibliography}

\appendices

\section{Related Work}
\label{appd:related}
Diffusion-network alignment is closely related to the classic network alignment problem.
The theoretical study of the network alignment problem has focused mainly on the correlated \erdos--\renyi graph pair model. One line of the work studies the information theoretic limit. The goal is to identify the range of graph parameters such that successful alignment can be achieved under various criteria including exact recovery~\citep{Cul-Kiy-improved2016,Cul-Kiy-exact2017,settling-TIT}, almost exact recovery~\citep{Cullina2019,settling-TIT} and partial recovery~\citep{ganassali21a,settling-TIT,hall2023partial,ding2023matching,du2025optimal}.
The other line of work focuses on constructing efficient solutions for the problem. Several polynomial-time approaches have been proposed, including spectral methods~\citep{fan-20a}, node-degree–based methods~\citep{Dai-Cullina-Kiyavash-Grossglauser2019,mao2023exact}, iterative algorithms~\citep{ding2025polynomial}, subgraph counting techniques~\citep{mao2026}, and tree correlation tests~\citep{ganassali2024,maier2025}.

Following the line of work on the correlated \erdos--\renyi graph pair model, many works have studied network alignment under more general settings, including alignment in community-structured random graph models~\citep{gaudio2022exact,chai2024efficient}, graphs with node attributes~\citep{zhang2024TIT,wang2023feasible,wang2025efficient,yang2024exact}, partially correlated graph pairs~\citep{huang2024information}, and the simultaneous alignment of multiple networks~\citep{ameen2024aligning}. Despite these generalizations, a common assumption in this literature is that the input graphs are fully observed. In contrast, in the diffusion–network alignment problem we consider, only a subgraph induced by the diffusion process is observed from the first network, while the second network is fully observed.

Beyond classic network alignment, the diffusion-network alignment problem is also related to the literature on identifying subgraphs in large networks. One classical line of research is the subgraph isomorphism problem, which asks whether a given query graph appears as a subgraph of a larger host graph~\citep{shang2008taming,He2008}. From a probabilistic perspective,~\cite{shiu2025information} study this problem under random graph models, where the objective is to recover the vertex labels of the subgraph induced by a uniformly sampled vertex set from an \erdos--\renyi graph.
Another closely related direction is the planted clique problem, where a clique of a given size is planted on an \erdos--\renyi random graph and the goal is to detect or recover the planted structure~\citep{bollobas1976cliques,alon1998finding}. Recent studies go beyond cliques and consider planted subgraphs of more general structures~\citep{mossel2025sharp,lee2025}. 

Despite the connections, these settings differ fundamentally from diffusion–network alignment in two respects. Firstly, although the diffusion tree is a subgraph of the underlying network $\cg_1$, the reference network $\tdcg_2$ we observe is only positively correlated with $\cg_1$ through a latent correspondence, rather than containing the diffusion tree as a subgraph. In contrast, both subgraph isomorphism and planted subgraph problems require the target subgraph structure to be strictly contained in the observed network. Second, the diffusion tree in diffusion-network alignment is  generated by a random diffusion process on the latent network through the IC model. This is different from the setting where the queried or planted structure is deterministic or sampled as an induced subgraph.

\section{Notation Used in the Appendices}
The following notations are used throughout the appendices. 

We use standard notation $\mathbb{N},\mathbb{N}^+,\mathbb{R}, \mathbb{R}^+$ to denote the sets of natural numbers, positive natural numbers, real numbers, and positive real numbers, respectively. For $n\in \mathbb{N}^+$, we use $[n]$ to denote the set $\{1,\ldots,n\}$. When $n=0$, we set $[n]=\emptyset$. For a finite set $S$, we use $|S|$ to denote its cardinality. 

For a simple undirected graph $\cg=(V_\cg,E_\cg)$, we use $V_\cg$ and $E_\cg$ to denote the set of vertices and the set of edge in $\cg$ respectively. For $u,v\in V_\cg$, we use $u\stackrel{\cg}{\sim}v$ to denote that there exists an edge between $u$ and $v$ in $\cg$, and use $\dist_\cg(u,v)$ to denote the shortest path distance between $u$ and $v$.
For a vertex set $S$ and a vertex $u$, we define the distance $\dist(S,u)=\min_{v\in S}\dist(u,v)$.
For $u\in V_\cg$, we define $S_\cg(u,d):=\{v\in V_\cg:\dist_\cg(u,v)=d\}$ to denote the set of vertices at distance $d$ from $u$, and $N_\cg(u,d):=\cup_{k=0}^dS_\cg(u,d)$ to denote the set of vertices within distance $d$ from $u$. 

A rooted tree $\ct=(V_\ct,E_\ct)$ is an acyclic simple graph with one distinguished vertex $r_\ct\in V_\ct$ designated as the \emph{root}. The \emph{depth} of a vertex is defined as the distance between it and the root. The depth of $\ct$ is defined as the maximum depth among all of its vertices. For $d\in\mathbb{N}$, we introduce the short-hand notation  $V_{\ct,d}:=S_\ct(u,d)$ for the set of vertices at depth $d$ of $\ct$, and $V_{\ct,\le d}:=\cn_\ct(u,d)$ for the set of vertices within depth $d$ of $\ct$. When $\ct$ is the diffusion tree $\tic$ from the IC process, we omit $\tic$ from the subscript, and simply write $V_d$ and $V_{\le d}$. For $u\in V_{\ct,d}$, the children of $u$ are the vertices in $V_{\ct,d+1}$ that are adjacent to $u$. We say $u$ is the parent of $v$ if $v$ is a child of $u$. For $u\in V_\ct$ and $d\in\mathbb{N}$, we use $\ct_u$ to denote the subtree of $\ct$ rooted at $u$ and $\ct_{u,d}$ to denote the rooted tree obtained by removing all the vertices in $\ct_u$ that are at distance larger than $d$ from the root $u$. We say two rooted trees $\ct$ and $\ct'$ are isomorphic (denoted $\ct\cong\ct'$), if there exists a bijective mapping $\sigma:V_\ct\rightarrow V_{\ct'}$ such that $\sigma(r_\ct)=r_{\ct'}$ and $u\stackrel{\ct}{\sim}v  \Leftrightarrow \sigma(u)\stackrel{\ct'}{\sim}\sigma(v)$.

For a given graph $\cg$ and a vertex $u\in V_\cg$, we define $\ct_{u}^{\cg}\in$ as the \emph{breadth-first search (BFS) tree} rooted at $u$: The root vertex $r_{\ct_{u}^{\cg}}$ of the tree is set to be vertex $u$. For each $d\in\mathbb{N}^+$, we have $V_{\ct_{u}^{\cg},d}=S_\cg(u,d)$, i.e., the set of vertices at depth $d$ in $\ct_{u}^{\cg}$ are the vertices at distance $d$ from $u$ in $\cg$. For a vertex $v\in V_{\ct_{u}^{\cg},d}$, the parent vertex of $v$ is the vertex in $ V_{\ct_{u}^{\cg},d-1}$ that is adjacent to $v$ in $\cg$ with the highest order. Here, the ordering in each $V_{\ct_{u}^{\cg},d}$ is defined in the same way as in the definition of the diffusion tree $\tic$.

We use notations $\P(\cdot)$ and $\E[\cdot]$ to denote the probability of certain event and the expectation of certain random variable. For a sequence of random variables $(X_1,\ldots,X_n)$, we use $\cl(X_1,\ldots,X_n)$ to denote their joint distribution. We use $\poi(\mu)$ to denote the Poisson distribution with mean $\mu$, and use $\bin(n,p)$ to denote the binomial distribution with parameters $n$ and $p$. We say a random vector $(X_1,\ldots,X_m)$ follows the multinomial distribution with parameters $(n,p_1,\ldots,p_m)$, denoted $(X_1,\ldots,X_m)\sim\mathrm{Multi}(n,p_1,\ldots,p_m)$ if $\P(X_1=x_1,\ldots,X_m=x_m)=\frac{n!}{x_1!\cdots x_m!}\prod_{i=1}^mp_i^{x_i}$ for all non-negative integers $x_1,\ldots,x_m$ satisfying $\sum_{i=1}^mx_i=n$. For two distributions $\mu$ and $\nu$, their total variation distance is defined as $\dtv(\mu,\nu)=\inf\P(X\neq Y)$, where the infimum is taken over all couplings $(X,Y)$ with marginals $\mu$ and $\nu$.

We follow the standard order notation: $f(n) = O(g(n))$ if $\lim_{n \to \infty} \frac{|f(n)|}{g(n)} < \infty$; $f(n) = \Omega(g(n))$ if $\lim_{n \to \infty}\frac{f(n)}{g(n)} >0$; $f(n) = \Theta(g(n))$ if $f(n) = O(g(n))$ and $f(n) = \Omega(g(n))$; $f(n) = o(g(n))$ if $\lim_{n \to \infty}\frac{f(n)}{g(n)} = 0$; $f(n) = \omega(g(n))$ if $\lim_{n \to \infty}\frac{|f(n)|}{|g(n)|} = \infty$. Additionally, we use $\td{O}(\cdot)$ notation to indicate that logarithmic factors are omitted.

\section{Proof of the Main Theorem}
\label{appd:pf_main}
In this section, we prove Theorem~\ref{thm:guarantee}, which is restated in the following.
\guarantee*

Throughout the proof of this theorem, we assume without loss of generality that the latent permutation $\pi^*$ sampled in the correlated \erdos--\renyi graph pair model is the identity permutation. Under this assumption, the two graphs $\cg_2$ and $\tdcg_2$ are identical, and for each $u\in [n]$, vertex $u$ in $\cg_1$ corresponds to vertex $\td{u}$ in $\tdcg_2$.

With these assumption, Theorem~\ref{thm:guarantee} follows readily from the following two propositions.

\begin{restatable}{proposition}{noerror}
    \label{prop:noerror}
    Suppose $\lambda sq>1$. Then 
    \begin{equation}
        \label{eq:noerror}
        \P(\cm_u\subseteq\{\td{u}\},\forall u\in V_{\le\dmax+1})\ge 1-O(n^{-\eps/3}).
    \end{equation}
    Moreover, for any vertex $v\in \{2,\ldots,n\}$ and depth $d\in [\dmax]$,
    \begin{equation}
        \label{eq:cond_noerror}
        \P(\cm_u\subseteq\{\td{u}\},\forall u\in V_{\le\dmax+1}\cond v\in V_d)\ge 1-O(n^{-\eps/3}).
    \end{equation}
\end{restatable}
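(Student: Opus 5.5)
The plan follows the two-step structure from the proof sketch. Throughout, $\pi^*$ is the identity. We work with the live-edge graph $\cg_1'$ and the union graph $\bar{\cg}=\cg_1'\cup\cg_2$.

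\emph{Step 1: the events imply correctness.} First we prove deterministically that $\ce_1\cap\ce_2$ implies $\cm_u\subseteq\{\td u\}$ for all $u\in V_{\le\dmax+1}$. We induct over the order in which the algorithm evaluates pairs (upward pass, then downward pass), assuming every earlier candidate set is correct. Suppose a wrong pair $(u,\td v)$ with $v\neq u$ is certified. Every likelihood-ratio certificate used in Criteria 1, 2 and 4 then involves a tree pair that, by $\ce_2$, shares a vertex $w$. This gives a path $u_k\to w$ of length at most $l-1$ in $\cg_1'$ and a path $w\to v_k$ of length at most $l-1$ in $\cg_2$. Every "already matched" certificate $\td v_k\in\cm_{u_k}$ forces $v_k=u_k$ by the induction hypothesis. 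In each criterion we therefore obtain at least two distinct connections from $u$ to $v$ in $\bar{\cg}$ of length at most $2l$: via children $u_1,u_2$ and neighbours $\td v_1,\td v_2$, or, in Criterion 4, via the parent edge and the tree pair at $u$ itself. Criterion 1 needs three children rather than two because one of the tree pairs may be routed through the parent-side edge. These connections close a cycle inside the $2l$-neighbourhood of $u$ in $\bar{\cg}$, which contradicts $\ce_1$. The care here is in checking that the walks really differ, using the facts that the BFS trees exclude $\td v$ (respectively $\td v_1$) and that the children are distinct.

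\emph{Step 2: bounding the bad events.} Next we show $\P(\ce_1^c)+\P(\ce_2^c)=O(n^{-\eps/3})$. For $\ce_1$, the first task is to bound $|V_{\le\dmax}|$. Since $\E|V_k|\le(\lambda q)^k$, Markov's inequality gives $|V_{\le\dmax}|\le n^{1-2\eps/3}$ with probability $1-O(n^{-\eps/3})$. The second task is to show that, for a fixed $u$ and conditional on a given root-to-$u$ path being live, the $2l$-neighbourhood of $u$ in $\bar{\cg}$ contains a cycle with probability $O(n^{-1+o(1)})$. This follows by counting: a cycle within distance $2l=O(\sqrt{\log n})$ needs one extra edge, which costs a factor $(2\lambda)^{O(l)}/n$. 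The conditioning on the path only adds the path edges, and any cycle using them still needs an extra edge of probability $O(1/n)$ near $u$. We then sum over $u$ via $\sum_u\P(u\in V_{\le\dmax},\text{cycle})$. For $\ce_2$, each tested pair is vertex-disjoint and has likelihood ratio $L$ with $\E_{\mathbb{Q}}L=1$. We first restrict to the high-probability event that all explored trees have size at most $n^{o(1)}$. On that event, the joint law of the disjoint pair is within a constant factor of $\mathbb{Q}^{(\lambda/s,qs,s)}_{l-1}$: we expose one tree, and the other is then an exploration in the graph with those vertices removed. Markov's inequality then gives
\[
\P\bigl(L>e^{(\lambda sq)^{l-1}/\log n}\bigr)=n^{-\omega(1)}.
\]
A union bound over the at most $n^{2+o(1)}$ tested pairs finishes this part.

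\emph{Step 3: the conditional bound.} For the conditional statement \eqref{eq:cond_noerror}, $\ce_2$ is easy because its failure probability is $n^{-\omega(1)}$. We lower bound $\P(v\in V_d)=\Omega(n^{-1})$, which follows from $\lambda sq>1$, survival of the cascade, and exchangeability of non-root vertices over a tree of size $\Omega(n^{1-\eps})$ with constant probability. Then $\P(\ce_2^c\mid v\in V_d)\le O(n)\cdot n^{-\omega(1)}$. The same trick fails for $\ce_1$, so there we redo the cycle count conditionally on two live paths, root$\to v$ and root$\to u$. This is the main obstacle: we must enumerate how the two paths overlap (a shared prefix and then a divergence), and show that each overlap pattern still requires an extra edge beyond a tree structure, so that it costs $O(n^{-1+o(1)})$. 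I would first try decomposing by the branching point of the two paths, treating their union as a conditioned tree and reusing the one-path count.
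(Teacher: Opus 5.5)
Your Step~1 is the paper's Lemma~\ref{lem:sufficient}. Your unconditional bound on $\ce_1$ is sound and simpler than the paper's: union-bound $\{u\in V_{\le\dmax}\}$ over live self-avoiding paths from the root, then apply the cycle bound given one path. This costs $\sum_{k\le\dmax}(\lambda q)^k\cdot n^{-1+o(1)}=n^{-\eps+o(1)}$, with no Harris inequality and no size event; even the Markov step is unnecessary.

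\textbf{Conditional bound for $\ce_1$.} This is the genuine gap: you leave~\eqref{eq:cond_noerror} open, and your sketch lacks the two ideas that make it work.
\emph{First}, $\{v\in V_d\}$ and $\{u\in V_k\}$ are not monotone events, since they also assert that no shorter path exists. So ``conditioning on two live paths'' is not a valid reduction by itself.
\emph{Second}, two arbitrary live root paths need not overlap as ``a shared prefix and then a divergence''. They can diverge and re-merge, so their union need not be a tree, and Lemma~\ref{lem:cyclefree_given_structure} does not apply. If the re-merge happens near $u$, the cycle event is even certain under the conditioning.

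The paper resolves both at once:
\begin{enumerate}
\item It partitions each event by the highest-order \emph{shortest} path. Each cell is then $\{\text{paths present}\}\cap\{\text{decreasing event}\}$. Since $\cc_u$ is increasing, Harris' inequality discards the decreasing part.
\item Shortest-path minimality plus the tie-breaking rule forces the two canonical paths to be disjoint after their divergence point (Lemma~\ref{lem:path_overlapp}). This gives a tree with at most two vertices per distance from $u$.
\item The sum over $u$ is controlled by a neighbourhood-size bound \emph{conditional} on $v\in V_d$ (Lemma~\ref{lem:neighborhood_size_cond}). Your unconditional Markov bound does not transfer.
\end{enumerate}
If you instead extend your first-moment route to pairs of paths, you must divide by $\P(v\in V_d)$. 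The numerator carries a factor $(\lambda q)^d/n$ from the $v$-paths. So $\P(v\in V_d)=\Omega(1/n)$ is too weak for large $d$; you would need the extra estimate $\P(v\in V_d)=\Omega((\lambda q)^d/n)$.

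\textbf{Conditional bound for $\ce_2$.} This part has a related hole. The step $\P(\ce_2^c\cond v\in V_d)\le O(n)\,\P(\ce_2^c)$ needs $\P(\ce_2^c)=o(n^{-1})$, \emph{including} the failure probabilities of every size event you restrict to.

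Your ``expose one tree, then explore with those vertices removed'' misses that $\ct^\rmic_{i,l-1}$ is not a local exploration around $i$. It is a piece of the global BFS from the root, so each child count is $\bin(n-X,\lambda q/n)$, where $X$ counts all vertices explored before $i$. The ratio to $\gw_{l-1,\lambda q}$ then stays $O(1)$ only if $X\,|V_t|=O(n)$. So you must also control $|V_{\le\dmax+l}|$, which your $\ce_2$ argument never mentions. The only size tool you have, Markov with failure probability $n^{-\eps/3}$, is wiped out by the factor $n$.

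The paper instead uses Bennett-type tails (Lemma~\ref{lem:neighborhood_size}, failure $\td{O}(n^{-C/3})$ for any constant $C$) and obtains $\P(\ce_2^c)=\td{O}(n^{-4/3})$. Your assertion that $\P(\ce_2^c)=n^{-\omega(1)}$ is not justified as written.
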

Recall that $\cm_u\subseteq\{\td{u}\},\forall u\in V_{\le\dmax+1}$ is the desired global correctness property of Algorithm~\ref{alg:mpmatch}, and equation~\eqref{eq:noerror} states that this property holds with high probability.
Additionally, equation~\eqref{eq:cond_noerror} states this property still holds with high probability after conditioning on the event that a specific vertex $v$ appears at certain depth $d$ in $\tic$.
\begin{restatable}{proposition}{fraction}
\label{prop:inclusion}
    Suppose the assumptions in Theorem~\ref{thm:guarantee} hold. Then for each vertex $v\in\{2,\ldots,n\}$ and depth $d\in [\dmax]$, 
    \begin{equation}
        \label{eq:inclusion}
        \P(\td{v}\in\cm_v\cond v\in V_{d})\ge p_{d'}-O(n^{-\eps/3}).
    \end{equation}
    For the root vertex $1$ of $\tic$, 
    \begin{equation}
        \label{eq:inclusion_root_prop}
        \P(\td{1}\in\cm_1)\ge p_{l^+}-O(n^{-\eps/3}).
    \end{equation}
\end{restatable}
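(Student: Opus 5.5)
The plan is to follow the proof sketch in the section on the per-vertex inclusion guarantee: couple the local picture around $v$ and $\td v$ with a correlated Galton--Watson pair, and then run the upward-pass recursion that defines $(p_d)$.

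\emph{Step 1: fix the conditioning.} Work on the live-edge graph $\cg_1'$. For $v\in V_d$, set $S=V_{\le d}\setminus\{v\}$ and let $h=l+d'$, where $d'=\min(l^+,\dmax-d)$.
\begin{itemize}
\item I would reveal the exploration of $\tic$ up to depth $d$, conditioned on $v\in V_d$.
\item The remaining edges of $\cg_1'$ and $\cg_2$ that touch $[n]\setminus S$ are then still independent. This needs care: the BFS structure only exposes non-edges between $V_{\le d-1}$ and unexplored vertices, so the exposed information amounts to deleting $S$.
\item By the choice of $\dmax$, $|S|=\td O(n^{1-\eps})$ with probability $1-O(n^{-\eps/3})$. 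This should also hold under the conditioning, since $\P(v\in V_d)=\Omega(1/n)$ and the expected number of vertices at depth $\le \dmax$ given $v\in V_d$ can be bounded directly.
\end{itemize}

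\emph{Step 2: couple with the tree pair.} Apply Proposition~\ref{prop:tree_tv} with this sublinear $S$. It gives that the joint law of the $h$-neighborhood of $v$ in $\cg_1'\setminus S$ and the $h$-neighborhood of $\td v$ in $\tdcg_2\setminus \td S$ is within $o(1)$, in fact $O(n^{-\eps/3})$, in total variation of $\mathbb{P}^{(\lambda/s,qs,s)}_{h}$. Two identifications then connect this to the algorithm's inputs.
\begin{itemize}
\item On the tree event $\ce_1$, the first neighborhood equals $\ct^\rmic_{v,h}$. This holds because depths of descendants of $v$ in $\tic$ are exactly distances in $\cg_1'\setminus S$.
\item The BFS trees $\tdct^{\setminus \td w}_{\td x,l-1}$ used by Criteria~1--3 at descendants $x$ of $v$ coincide with the corresponding subtrees of $\td v$'s neighborhood. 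This uses the fact that $\td v$'s neighborhood does not meet the neighborhoods of vertices in $\td S$.
\end{itemize}
There is one more requirement: $h\le \dmax+l-d$, so that the vertices in the bottom $d'$ levels below $v$ lie in $V_{\le\dmax}$ and are processed by the upward pass.

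\emph{Step 3: run the recursion.} Apply Lemma~\ref{prop:auxiliary} to the coupled tree pair. Running the upward pass on $(\ct,\tdct)\sim\mathbb{P}_{l+d'}$ includes the root pair with probability at least $p_{d'}-o(1)$. The lemma rests on two inputs:
\begin{itemize}
\item Theorem~\ref{thm:tree_test}, which says the test on correlated subtrees of depth $l-1$ passes with probability at least $1-p^{\mathrm{ext}}_{\lambda sq}-\eps'$;
\item Poisson thinning and independence across children, which give the $2Y+Z\ge3$ recursion.
\end{itemize}
Since candidate sets only grow, the downward pass cannot remove $\td v$, so inclusion after the upward pass is enough. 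The root case \eqref{eq:inclusion_root_prop} is the same argument with $S=\emptyset$, no conditioning, and $d'=l^+$ (using $\dmax\ge l^+$).

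\emph{Main obstacle.} I expect the hardest part to be Step 1 together with the identification in Step 2. The conditioning on $v\in V_d$ plus the revealed exploration of $V_{\le d}$ must leave an essentially fresh correlated Erd\H{o}s--R\'enyi structure on the complement. In addition, the $\td\cg_2$-neighborhood of $\td v$ can wander back into $\td S$ through edges of $\cg_2$ that the IC exploration never observed. My first attempt would be to bound the probability that $\td v$'s $h$-neighborhood in $\tdcg_2$ hits $\td S$ by
\[
\frac{|S|\,(\lambda)^{h}}{n}=O(n^{-\eps+o(1)}),
\]
and to absorb the failure of $\ce_1$ via the conditional cycle bounds used for Proposition~\ref{prop:noerror}.
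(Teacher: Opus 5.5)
Your overall architecture matches the paper's. You condition on the depth-$d$ exploration, couple the neighborhoods in $\cg_1'\setminus S$ and $\tdcg_2\setminus\td S$ with $\mathbb{P}_{l+d'}$ via Proposition~\ref{prop:tree_tv}, and then invoke Lemma~\ref{prop:auxiliary}. However, the identification of the first neighborhood $\ch$ with $\ct^\rmic_{v,h}$ is not justified.

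\textbf{Main gap: $\ch=\ct^\rmic_{v,h}$.} Your argument gives only one inclusion: a descendant of $v$ at depth $d+k$ is at distance $k$ from $v$ in $\cg_1'\setminus S$. The reverse inclusion fails whenever a vertex of $\ch$ is reached at equal or shorter distance from another $i\in V_d\setminus\{v\}$. For example, a neighbor $x\notin S$ of $v$ that is also adjacent to some $i\in V_d$ of higher canonical order becomes $i$'s child in $\tic$, not $v$'s.
\begin{itemize}
\item The edges from $V_d\setminus\{v\}$ into $[n]\setminus S$ are not exposed by the depth-$d$ exploration. So $\ct^\rmic_{v,h}$ is \emph{not} a function of $\cg_1'\setminus S$ alone, and ``the exposed information amounts to deleting $S$'' does not cover this.
\item $\ce_1$ does not help either. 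The resulting cycle closes through the common $\tic$-ancestor of $v$ and $i$, typically $\Theta(\log n)$ levels up. In any case $\ce_1$ only controls radius $2l$, whereas $h$ can be $l+l^+\gg 2l$.
\end{itemize}
What is needed is a global, sublinear-set argument (the paper's Condition~3):
\begin{itemize}
\item Let $S_i$ be the set of vertices within distance $h$ of $i$ in $\cg_1'\setminus(V_{\le d}\setminus\{i\})$, and let $S'=\bigcup_{i\in V_d\setminus\{v\}}S_i$.
\item Show $|S'|\le n^{1-\eps/2}$ with high probability, using neighborhood-size bounds plus a union bound over $|V_d|\le n^{1-2\eps/3}$.
\item Explore $v$'s neighborhood in $\cg_1'\setminus S$ layer by layer, and union-bound the chance of hitting $S'$ by $O(h\,n^{-3\eps/8})$.
\end{itemize}
Combined with tree-ness of $\ch$, this gives $\ch=\ct^\rmic_{v,h}$. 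Take the tree-ness from the coupling of Proposition~\ref{prop:tree_tv}, not from $\ce_1$.

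\textbf{Two smaller repairs.}
\begin{itemize}
\item On the $\tdcg_2$ side, the event to control is that no \emph{non-root} vertex of $\td{\ch}$ has a $\tdcg_2$-edge into $\td S$. The root $\td v$ is adjacent to $\td S$ with probability about $s$, through the copy of $v$'s $\tic$-parent. So your estimate $|S|\lambda^h/n$ for the whole $h$-neighborhood of $\td v$ in $\tdcg_2$ is false as stated. Root edges are harmless, though: extra neighbors of $\td v$ only help, and the children's BFS trees are taken in $\tdcg_2\setminus\td v$.
\item For $d<\dmax-l^+$, and for the root, the upward pass starts at $\dmax$. That is strictly deeper than the bottom level $d+l^+$ (resp.\ $l^+$) of your coupled window, so Lemma~\ref{prop:auxiliary} does not apply to the actual run directly. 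You need a monotonicity (truncation) observation. Criteria~1--3 are monotone in the children's candidate sets, so the run started at $\dmax$ produces supersets of the candidate sets of a run started at depth $d+l^+$ (resp.\ $l^+$). Your remark that candidate sets only grow during the downward pass does not cover this.
\end{itemize}
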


\begin{proof}[Proof of Theorem~\ref{thm:guarantee}]
Since $\pi^*$ is the identity permutation, we have $\pi^*(u)=\td{u}$ for each $u\in [n]$. 
Therefore, statement~\eqref{eq:noerror_thm} in Theorem~\ref{thm:guarantee} is exactly statement~\eqref{eq:noerror} in Proposition~\ref{prop:noerror}. Moreover, the event $\cm_u\subseteq\{\td{u}\},\forall u\in V_{\le\dmax+1}$ trivially implies that $\cm_1\subseteq\{\pi^*(1)\}$. The two events $\cm_1\subseteq\{\pi^*(1)\}$ and $\pi^*(1)\in \cm_1$ together imply $\cm_1=\{\td{1}\}$. Therefore, by the union bound, statements~\eqref{eq:noerror} and~\eqref{eq:inclusion_root_prop} together imply statement~\eqref{eq:root_frac} in Theorem~\ref{thm:guarantee}. Finally, for any vertex $v\in \{2,\ldots,n\}$, we have $\cm_v\subseteq\{\pi^*(v)\}$ on the event $\cm_u\subseteq\{\td{u}\},\forall u\in V_{\le\dmax+1}$. Therefore, statements~\eqref{eq:cond_noerror} and~\eqref{eq:inclusion} together imply statement~\eqref{eq:cond_frac} in Theorem~\ref{thm:guarantee}.
\end{proof}

To prove Propositions~\ref{prop:noerror} and~\ref{prop:inclusion}, we first introduce the notion of the live-edge graph for the IC model in Appendix~\ref{appd:live-edge}. The detailed proofs of the two propositions are then presented in Appendices~\ref{appd:pf_correctness} and~\ref{appd:pf_fraction} respectively.

\section{Live-edge Graph}
\label{appd:live-edge}
In this section, we define the \emph{live-edge graph} $\cg_1'$ corresponding to the IC model we introduced in Section~\ref{sec:model}. 
graph $\cg_1'$ has the same vertex set as $\cg_1$, and its edges are generated by keeping each edge in $\cg_1$ independently with probability $q$. Furthermore, the randomness in edge subsampling is coupled with the randomness in the IC process: In each time step of the IC process, an edge between an active vertex and an inactive is selected to propagate the message if and only if the edge presents in $\cg_1'$. Recall the correlated \erdos--\renyi graph pair model defined in Section~\ref{sec:model}.
From the subsampling process of generating $\cg_1'$ from $\cg_1$, we know that
for each pair of indices $(i,j)\in [n]\times[n]$, the pair of edge indicators $(\cg_1'(i,j) ,\td{\cg}_2(\td{i},\td{j}))$ has distribution
\begin{equation}
    \label{eq:asym_ER_pair}
    (\cg_1'(i,j),\tdcg_2(\td{i},\td{j}))=\begin{cases}
        (1,1)\text{\;\;\;w.p.\;\;\;}\frac{\lambda sq}{n}\\
        (0,1)\text{\;\;\;w.p.\;\;\;}\frac{\lambda(1-sq)}{n}\\
        (1,0)\text{\;\;\;w.p.\;\;\;}\frac{\lambda q(1-s)}{n}\\
        (0,0)\text{\;\;\;w.p.\;\;\;}1-\frac{\lambda+\lambda q-\lambda sq}{n},
    \end{cases}
\end{equation}
and these edge indicator pairs are mutually independent among different index pairs. Let $\bar{\cg}$ denote the union graph of $\cg_1'$ and $\tdcg_2$. It follows by~\eqref{eq:asym_ER_pair} that $\bar{\cg}$ has distribution $\mathrm{ER}(n,\bar{\lambda}/n)$, where $\bar{\lambda}:=\lambda+\lambda q-\lambda sq$. The following lemma states the relationship between the diffusion tree $\tic$ we observe from the model and the live-edge graph $\cg_1'$.
\begin{lemma}
    \label{lem:TIC}
    The diffusion tree $\tic$ is exactly the BFS tree $\ct_{1}^{\cg_1'}$ rooted at vertex $1$ in $\cg_1'$.
\end{lemma}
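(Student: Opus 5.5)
We show by induction on the time step $k$ of the IC process that the set of vertices activated at step $k$ is exactly $S_{\cg_1'}(1,k)$, and that the parent assigned to each such vertex in $\tic$ coincides with its parent in the BFS tree $\ct_1^{\cg_1'}$. The second claim then follows from the first, because both trees use the same canonical order.

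\emph{Base case.} At step $0$ only vertex $1$ is active, and $S_{\cg_1'}(1,0)=\{1\}$. Both trees are rooted at $1$.

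\emph{Inductive step for the vertex sets.} Suppose that for every $j\le k-1$ the vertices activated at step $j$ are exactly $S_{\cg_1'}(1,j)$. Under the coupling defining the live-edge graph, an activation attempt along an edge $(u,w)$ succeeds if and only if $(u,w)\in\cg_1'$. Hence a vertex $w$ is activated at step $k$ if and only if two conditions hold:
\begin{enumerate}
\item $w$ was inactive after step $k-1$, i.e.\ $w\notin\bigcup_{j\le k-1}S_{\cg_1'}(1,j)=N_{\cg_1'}(1,k-1)$;
\item $w$ is adjacent in $\cg_1'$ to some $u\in S_{\cg_1'}(1,k-1)$.
\end{enumerate}
Together these say exactly that $\dist_{\cg_1'}(1,w)=k$, so the vertices activated at step $k$ are $S_{\cg_1'}(1,k)$.

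Two edge cases need a brief check. First, an edge from $w$ to a vertex activated earlier than $k-1$ never triggers an attempt at step $k$; it is irrelevant because the condition above only involves $S_{\cg_1'}(1,k-1)$. Second, the live-edge outcome on an edge between two already-active vertices is never used, which is consistent with the coupling. The process stops exactly when $S_{\cg_1'}(1,k)=\emptyset$, which is also when the BFS tree ends.

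\emph{Parents and order.} For $w$ at depth $k$, the set of vertices that activated it is $\{u\in V_{k-1}: u\stackrel{\cg_1'}{\sim}w\}$, which by induction equals $\{u\in S_{\cg_1'}(1,k-1): u\stackrel{\cg_1'}{\sim} w\}$. This is precisely the candidate-parent set in the BFS tree.

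Both trees choose, from this candidate set, the vertex of highest canonical order within depth $k-1$. That order is built recursively by the same rule: first by the order of the parent, then by smaller index. A secondary induction on depth therefore shows that the two trees have identical depth sets, identical orders within each depth, and hence identical parent choices.

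\emph{Main obstacle.} The only delicate point is setting up the coupling precisely. In particular, one must check that attempts from several active neighbours at the same step correspond exactly to the presence of the respective live edges. Once this is in place, the argument is routine bookkeeping.
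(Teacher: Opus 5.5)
Your proposal is correct and follows the same route as the paper: use the live-edge coupling to identify the step-$k$ activated set with $S_{\cg_1'}(1,k)$, then note that both trees pick parents from the same candidate set using the same recursively defined canonical order. Your version makes explicit the inductions on the depth sets and on the canonical order, which the paper's short proof leaves implicit.
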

\begin{proof}[Proof of Lemma~\ref{lem:TIC}]
    At each time step of the IC model, an inactive vertex get activated if and only if it has a neighbor in $\cg_1'$ that is activated in the previous time step. This proves that the vertices that are activated at the $d$-th time step are the vertices at distance $d$ from $1$ in $\cg_1'$. This shows that for each $d\ge 0$, the set of vertices at depth $d$ of $\tic$ is the same as the set of vertices at depth $d$ of $\ct_{1}^{\cg_1'}$. Moreover, the canonical order of vertices at each depth in the two trees are defined in the same way. Therefore, $\tic$ and $\ct_{1}^{\cg_1'}$ choose the same unique parent for each vertex in the tree. This shows that the two tree are equal.
\end{proof}

Lemma~\ref{lem:TIC} shows that the live-edge graph $\cg_1'$ captures the randomness in both $\cg_1$ and the IC model.
Consequently, throughout the proof of Propositions~\ref{prop:noerror} and~\ref{prop:inclusion}, we work exclusively with the graph pair $(\cg_1',\tdcg_2)$, and $\cg_1$ no longer appears explicitly in the analysis. 
We continue to use tilde sign to differentiate the notation related to $\tdcg_2$, and further introduce the bar notation to differentiate the notation related to the union graph $\bar{\cg}$. In particular, for each $i\in [n]$, we use $i$ (resp. $\td{i}$ and $\bar{i}$) to denote the vertex with index $i$ in $\cg_1'$ (resp. $\tdcg_2$ and $\bar{\cg}$).
Since we assume $\pi^*$ is the identity permutation, $i,\td{i}$ and $\bar{i}$ are three corresponding vertices in the three graphs $\cg_1',\tdcg_2$ and $\bar{\cg}$. With a slight abuse of notation, for sets $S\subseteq\{1,\ldots,n\}$ and $\td{S}\subseteq\{\td{1},\ldots,\td{n}\}$, we define $$S\cap\td{S}:=\{i\in [n]:i\in S\text{ and }\td{i}\in\td{S}\}$$ and $$S\cup\td{S}:=\{i\in [n]:i\in S\text{ or }\td{i}\in\td{S}\}.$$ 
We follow the same convention when one of the subsets is replaced by $\bar{S}\subseteq\{\bar{1},\ldots,\bar{n}\}$.

\section{Proof of Proposition~\ref{prop:noerror}}\label{appd:pf_correctness}
In this section, we prove Proposition~\ref{prop:noerror}, which asserts that Algorithm~\ref{alg:mpmatch} outputs no false matches with high probability, both unconditionally and conditionally on a given vertex $u$ lying at depth $d$ of the diffusion tree $\tic$.

This proposition is proven in three steps. We first formulate a sufficient condition under which Algorithm~\ref{alg:mpmatch} outputs no false matches, by defining two good events based on the two graphs $\cg_1'$ and $\tdcg_2$. In the second and third steps, we show that these two good events occur with high probability, both unconditionally and conditionally on a given vertex $u$ at depth $d$ of the diffusion tree $\tic$.

\subsection{A sufficient condition for Algorithm~\ref{alg:mpmatch} to output no false matches}
To establish the sufficient condition, we first define two good events. Define
\begin{equation}
    \label{eq:non_exists_cycle}
    \ce_1:=\{\text{the $2l$-neighborhood of $\bar{i}$ in $\bar{\cg}$  is a tree}, \forall i\in V_{\le\dmax}\}.
\end{equation}
In other words, $\ce_1$ is the event that for every vertex $i$ within depth $\dmax$ of $\tic$, the $2l$-neighborhood of the corresponding vertex $\bar{i}$ in $\bar{\cg}$ is a tree.

Recall that Algorithm~\ref{alg:mpmatch} uses subtrees of $\tic$, denoted $\ct^\rmic_{i,d}$, and BFS trees in $\tdcg_2$, denoted $\tdct^{\setminus \td{j}}_{\td{k},d}$, in tree correlation tests. 
We define $\ce_2$ as the event that there exist no triplets $(i,j,k)\in [n]^3$ with $j\neq k$ that satisfy all of the following properties:
\begin{enumerate}[itemsep=6pt]
    \item $i \in V_{\le\dmax+1}$;
    \item 
    $L_{l-1}(\mathcal{T}^\mathrm{IC}_{i,l-1},\td{\ct}^{\setminus\td{j}}_{\td{k},l-1})> \exp\left(\frac{(\lambda s q)^{l-1}}{\log n}\right)$;
    \item $V_{\mathcal{T}^\mathrm{IC}_{i,l-1}}\cap V_{\td{\ct}^{\setminus\td{j}}_{\td{k},l-1}}=\emp$.
    \end{enumerate}
Here $V_{\mathcal{T}^\mathrm{IC}_{i,l-1}}$ (resp. $V_{\td{\ct}^{\setminus\td{j}}_{\td{k},l-1}}$) is the set of vertices in $\mathcal{T}^\mathrm{IC}_{i,l-1}$ (resp. $\td{\ct}^{\setminus\td{j}}_{\td{k},l-1}$).
In the definition of $\ce_2$, if we further require that vertices $\td{j}$ and $\td{k}$ are adjacent in $\tdcg_2$, the event then essentially means that no pairs of trees tested in Algorithm~\ref{alg:mpmatch} can pass the likelihood ratio test, while having no common vertices between them. We deliberately omit this adjacency requirement. Doing so only makes the event $\ce_2$ more restrictive, and hence still sufficient for establishing the desired correctness property. At the same time, this relaxation simplifies the probabilistic analysis, as it allows us to bound the probability of $\ce_2$ without conditioning on the presence of specific edges in $\tdcg_2$.
With the two events defined, we have the following lemma that establishes the sufficient condition.

\begin{lemma}
\label{lem:sufficient}
    Event $\ce_1\cap\ce_2$ implies that for all $u\in V_{\tic,\le\dmax+1}$, the output matching candidate set $\cm_u$ satisfies
    \begin{equation}
        \label{eq:correctness}
        \cm_u\subseteq\{\td{u}\}.
    \end{equation}
\end{lemma}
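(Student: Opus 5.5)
We argue by induction over the order in which Algorithm~\ref{alg:mpmatch} adds elements to candidate sets. The invariant is that every element added so far is correct, i.e.\ each $\td{v}$ inserted into $\cm_u$ equals $\td{u}$. Suppose this holds just before some insertion, and the algorithm is about to add $\td{v}\neq\td{u}$ to $\cm_u$ because one of Criteria 1--4 holds. We will derive a cycle in the $2l$-neighborhood of $\bar{u}$ (or of $\bar{u}_1$, the parent of $u$, for Criterion 4) in $\bar{\cg}$. This contradicts $\ce_1$, because $u$ lies in $V_{\le\dmax}$ in Criteria 1--3 and $u_1\in V_{\le\dmax}$ in Criterion 4. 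That is exactly why $\ce_1$ only requires depth $\dmax$ while matching goes to depth $\dmax+1$.

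\textbf{Key observation.} Every test that passes in the algorithm compares $\ct^\rmic_{a,l-1}$ with $\tdct^{\setminus\td{w}}_{\td{b},l-1}$, where $a\in V_{\le\dmax+1}$ and $\td{w}\neq\td{b}$. By $\ce_2$, such a pair shares a common index $x$. Thus $x$ is reached from $a$ by a path of length at most $l-1$ in $\cg_1'$ (a downward tree path), and from $\td{b}$ by a path of length at most $l-1$ in $\tdcg_2\setminus\td{w}$. Concatenating the two in $\bar{\cg}$ gives a walk of length at most $2l-2$ from $\bar a$ to $\bar b$ that avoids $\bar w$ on the $\tdcg_2$ side.

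A previously certified match $\td{v}_k\in\cm_{u_k}$ is correct by the induction hypothesis, so $v_k=u_k$. Also, each child $u_k$ of $u$ is adjacent to $u$ in $\cg_1'$, each neighbor $\td{v}_k$ is adjacent to $\td{v}$ in $\tdcg_2$, and all of these are edges of $\bar{\cg}$.

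\textbf{Criteria 1--3.} For each $k$ we have a walk $\bar u\to\bar u_k\rightsquigarrow\bar v_k\to\bar v$ in $\bar\cg$. It has length at most $1+(l-1)+(l-1)+1=2l$, or length $2$ if $v_k=u_k$.
\begin{itemize}
\item The $\tdcg_2$-part avoids $\bar v$ except at its endpoint, because it lives in $\tdcg_2\setminus\td v$.
\item The $\cg_1'$-part lies in the subtree of $u_k$, so it avoids $\bar u$.
\end{itemize}
Since the $u_k$ are distinct children, the initial edges $\bar u\bar u_k$ are distinct. We thus have at least two walks from $\bar u$ to $\bar v\neq\bar u$ that leave $\bar u$ along different edges. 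Their union therefore contains a cycle.

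It remains to check that this cycle lies within the $2l$-neighborhood of $\bar u$. Every vertex on these walks is at distance at most $2l$ from $\bar u$, because each is reached from $\bar u$ along a walk of length at most $2l$. Hence the $2l$-ball around $\bar u$ is not a tree.

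A more careful formulation of the same point: if the $2l$-ball were a tree, the unique path from $\bar u$ to $\bar v$ would start with a single edge. But each walk, reduced to a path, must start with its own first edge $\bar u\bar u_k$, since that walk never returns to $\bar u$. This is a contradiction.

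\textbf{Criterion 4.} Here $\td v_1=\td u_1$ by the induction hypothesis, and $\td v\sim\td u_1$ in $\tdcg_2$. The passing test gives a walk $\bar u\rightsquigarrow\bar x\rightsquigarrow\bar v$ of length at most $2l-2$ avoiding $\bar u_1$. Together with the edges $\bar u_1\bar u$ and $\bar u_1\bar v$, this yields two walks from $\bar u_1$ to $\bar v$, each of length at most $2l$:
\begin{itemize}
\item the direct edge $\bar u_1\bar v$;
\item the walk through $\bar u$.
\end{itemize}
The second walk does not revisit $\bar u_1$. The subtree part below $u$ cannot contain the parent $u_1$, and the $\tdcg_2\setminus\td u_1$ part excludes it. The two walks start with different edges, provided $\bar v\neq\bar u$, which holds by assumption. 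So there is a cycle in the $2l$-ball of $\bar u_1$, where $u_1\in V_{\le\dmax}$.

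\textbf{Main obstacle.} The delicate point is to make the ``two walks with distinct first edges imply a cycle'' step rigorous when the walks may overlap or self-intersect. I would handle it uniformly through the following fact: in a tree, any walk from $\bar a$ to $\bar b$ that never returns to $\bar a$ must start with the unique edge of the $\bar a$--$\bar b$ path. All the walks above avoid returning to their starting vertex, for the reasons verified case by case. The other subtlety to check is the degenerate coincidence $\bar v=\bar u_k$ in Criteria 1--3. There the edge $\bar u\bar v$ itself serves as one branch, and the argument goes through unchanged.
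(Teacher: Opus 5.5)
There is a genuine gap in your treatment of Criteria 1 and 2. You assert that each walk $\bar u\to\bar u_k\rightsquigarrow\bar x\rightsquigarrow\bar v_k\to\bar v$ never returns to $\bar u$, but you only check this for the $\cg_1'$-segment. The $\tdcg_2$-segment lives in $\tdcg_2\setminus\td v$, not in $\tdcg_2\setminus\{\td u,\td v\}$, so it can pass through $\td u$. This is Case (ii) in the paper's proof. Visits to $\bar v$ in the middle of a walk (the paper's Case (i)) are harmless for your walk fact, but visits to $\bar u$ are not. When such a visit happens, the first edge $\bar u\bar u_k$ tells you nothing about the tree path, and two walks leaving $\bar u$ along different edges need not create a cycle.

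Concretely, suppose that near $u$ both $\cg_1'$ and $\tdcg_2$ consist of the path $(u_2,u,u_1,v,w)$, with $u_1,u_2$ children of $u$ in $\tic$. Then $\td u_1,\td w$ are neighbors of $\td v$, and the following hold.
\begin{enumerate}
\item The pair $(\ct^\rmic_{u_1,l-1},\tdct^{\setminus\td v}_{\td w,l-1})$ shares $w$.
\item The pair $(\ct^\rmic_{u_2,l-1},\tdct^{\setminus\td v}_{\td u_1,l-1})$ shares $u_2$, reached as $\td u_1\to\td u\to\td u_2$.
\end{enumerate}
So $\ce_2$ says nothing about either pair. Your second walk is $\bar u,\bar u_2,\bar u,\bar u_1,\bar v$, yet $\bar\cg$ is a path with no cycle. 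If your argument were valid it would show that two dangling trees suffice in Criterion 1, and this example shows they do not.

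The repair is a two-ended version of your walk fact, which is exactly the paper's dichotomy ``$p_1=i_k$ or $p_{\bar t-1}=j_k$''. Consider each walk in turn.
\begin{enumerate}
\item If the $\tdcg_2$-segment avoids $\bar u$, the walk never returns to $\bar u$. Hence the tree path from $\bar u$ to $\bar v$ starts with $\bar u\bar u_k$.
\item If the $\tdcg_2$-segment passes through $\bar u$, take the tail of the walk from that visit onward. It lies in $\tdcg_2\setminus\td v$ plus the final edge, so it never visits $\bar v$ before its end. Reversing it and applying your fact at $\bar v$, the path ends with $\bar v_k\bar v$.
\end{enumerate}
Since the $u_k$ are distinct and the $v_k$ are distinct, at most one walk can be consistent via its first edge and at most one via its last edge. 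Three walks therefore give a contradiction, and this pigeonhole step is precisely why Criterion 1 needs three trees.

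For Criterion 2 the same repair works with only two walks. The matched walk $\bar u,\bar u_1,\bar v$ pins both the first and the last edge of the path, while the test walk forces either first edge $\bar u\bar u_2$ or last edge $\bar v_2\bar v$, with $u_2\neq u_1$ and $v_2\neq v_1=u_1$.

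Your arguments for Criteria 3 and 4 are correct as written. In Criterion 4 the BFS tree excludes $\td u_1$ and the subtree excludes the parent $u_1$, so that walk really does not return to $\bar u_1$. Your remark on why $\ce_1$ is only needed up to depth $\dmax$ is also right.
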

\begin{proof}[Proof of Lemma~\ref{lem:sufficient}]
     To prove this lemma, we essentially need to show that $\ce_1\cap\ce_2$ implies the following four claims.
     \begin{itemize}[itemsep=6pt]
        \item Claim 1: For a pair of vertices $i\in V_{\le\dmax}$ and $\td{j}\in V_{\tdcg_2}$, Criterion 1 is satisfied only if $i=j$.
        \item Claim 2: Suppose the set $\cm_u$ satisfies~\eqref{eq:correctness} for all $u\in V_{\le\dmax+1}$. Then for a pair of vertices $i\in V_{\le\dmax}$ and $\td{j}\in V_{\td{\cg}_2}$, Criterion 2 is satisfied only if $i=j$. 
        \item Claim 3: Suppose the set $\cm_u$ satisfies~\eqref{eq:correctness} for all $u\in V_{\le\dmax+1}$. Then for a pair of vertices $i\in V_{\le\dmax}$ and $\td{j}\in V_{\td{\cg}_2}$, Criterion 3 is satisfied only if $i=j$.
        \item Claim 4: Suppose the set $\cm_u$ satisfies~\eqref{eq:correctness} for all $u\in V_{\le\dmax+1}$. Then for a pair of vertices $i\in V_{\le\dmax+1}$ and $\td{j}\in V_{\td{\cg}_2}$, Criterion 4 is satisfied only if $i=j$.
    \end{itemize}
    
Notice that at the start of Algorithm~\ref{alg:mpmatch}, all sets $\cm_u$ are initialized to be empty, so~\eqref{eq:correctness} holds for all these matching candidate sets.
The four claims then imply that in any iteration of Algorithm~\ref{alg:mpmatch}, a vertex $\td{j}$ is added to set $\cm_i$ only if $i=j$. Therefore, the desired property~\eqref{eq:correctness} holds for all $\cm_u$ throughout the process of Algorithm~\ref{alg:mpmatch}. We are now left to prove the four claims.

 \textbf{Proof of Claim 1.\footnote{We comment that the proof of this claim partially appeared in the proof of Theorem~7 in~\cite{ganassali2024} in the context of network alignment. Our proof adapts this argument to the diffusion–network alignment setting and accounts for the partial, tree-structured observation induced by the diffusion process.}}
    We prove this claim by contradiction. Suppose $i\neq j$, and the vertex pair $i$ and $\td{j}$ satisfies the Criterion 1. By the definition of the criterion, we know that there exists three children $i_1,i_2,i_3$ of $i$ in $\tic$, and three neighbors $\td{j}_1,\td{j}_2,\td{j}_3$ of $\td{j}$ in $\td{\cg}_2$ such that 
    \[
    L_{l-1}(\ct^\mathrm{IC}_{i_k,l-1},\td{\ct}^{\setminus\td{j}}_{\td{j}_k,l-1})>\exp\left(\frac{(\lambda s q)^{l-1}}{\log n}\right), \forall j\in [3].
    \]
    From event $\ce_2$, we know that 
    \[
    V_{\mathcal{T}^\mathrm{IC}_{i_k,l-1}}\cap V_{\td{\ct}^{\setminus\td{j}}_{\td{j}_k,l-1}}\neq\emptyset,\forall k\in[3].
    \]
    We will show that this implies that there is a cycle in the $2l$-neighborhood of $\bar{i}$ in $\bar{\cg}$, which contradicts the property that the $2l$-neighborhood of $\bar{i}$ in $\bar{\cg}$ is a tree required by $\ce_1$.

    First, consider the two trees $\mathcal{T}^\mathrm{IC}_{i_1,l-1}$ and $\td{\ct}^{\setminus\td{j}}_{\td{j}_1,l-1}$.
    Let $x\in [n]$ be such that $x\in V_{\mathcal{T}^\mathrm{IC}_{i_1,l-1}}$ and $\td{x}\in V_{\td{\ct}^{\setminus\td{j}}_{\td{j}_1,l-1}}$. Since $x\in V_{\mathcal{T}^\mathrm{IC}_{i_1,l-1}}$, we know that there exists a path $(i,y_1,y_2,\ldots,y_t)$ of length $t\in [l]$ with $y_1=i_1$ and $y_t=x$ in graph $\cg_1'$. Similarly, because $\td{x}\in V_{\td{\ct}^{\setminus\td{j}}_{\td{j}_1,l-1}}$, there exists a path $(\td{j},\td{z}_1,\td{z}_2,\ldots,\td{z}_{\td{t}})$ of length $\td{t}\in [l]$ with $z_1=j_1$ and $z_{\td{t}}=x$ in graph $\td{\cg}_2$.  From this we are going to argue that there exists a path $(\bar{p}_0,\bar{p}_1,\ldots,\bar{p}_{\bar{t}})$ of length $\bar{t}\in [2l]$ in $\bar{\cg}$ that satisfies the following two properties
    \begin{enumerate}
        \item $p_0=i$ \emph{and} $p_{\bar{t}}=j$;
        \item $p_1=i_1$ \emph{or} $p_{\bar{t}-1}=j_1$.
    \end{enumerate}
    To show this, we consider three different cases.

    \underline{\emph{Case (i):}} $j=y_\tau$ for some $\tau\in [t]$ (top figure in Figure~\ref{fig:cases}). Because the vertices on the path $(i,y_1,y_2,\ldots,y_t)$ are all distinct, we have the desired path $(\bar{p}_0,\bar{p}_1,\ldots,\bar{p}_{\bar{t}})$ of length at most $l$
    by setting $p_0=i, p_1=y_1=i_1,p_2=y_2,\ldots,p_{\bar{t}-1}=y_{\tau-1},p_{\bar{t}}=y_\tau=j$.

    \underline{\emph{Case (ii):}} $i=z_\tau$ for some $\tau\in [\td{t}]$ (middle figure in Figure~\ref{fig:cases}). Because the vertices of the path $(\td{j},\td{z}_1,\td{z}_2,\ldots,\td{z}_{\td{t}})$ are all distinct, we have the desired path $(\bar{p}_0,\bar{p}_1,\ldots,\bar{p}_{\bar{t}})$ of length at most $l$ by setting $p_{0}=z_{\tau}=i,p_{1}=z_{\tau-1},p_{2}=z_{\tau-2},\ldots,p_{\bar{t}-1}=z_1=j_1,p_{\bar{t}}=j$.

    \underline{\emph{Case (iii):}} $j\neq y_\tau$ for all $\tau\in [t]$ and $i\neq z_\tau$ for all $\tau\in [\td{t}]$ (bottom figure in Figure~\ref{fig:cases}). By concatenating the two paths, we know that 
    \[
    (\bar{i},\bar{y}_1,\bar{y}_2,\ldots,\bar{y}_{t-1},\bar{y}_t=\bar{z}_{\td{t}}=\bar{x},\bar{z}_{\td{t}-1},\ldots,\bar{z}_{2},\bar{z}_{1},\bar{j})
    \]
    is a walk of length at most $2l$ between $\bar{i}$ and $\bar{j}$ in $\bar{\cg}$. We can then erase the cycles in this walk to obtain a path $(\bar{p}_0,\bar{p}_1,\ldots,\bar{p}_{\bar{t}})$ of length at most $2l$ with $p_0=i$ and $p_{\bar{t}}=j$. Moreover, because both vertices $\bar{i}$ and $\bar{j}$ only appear once in this walk, the resulting path must satisfy both $p_1=i_1$ and $p_{\bar{t}-1}=j_1$.

    \begin{figure}[htbp]
    \centering
    \includegraphics[width=0.5\linewidth]{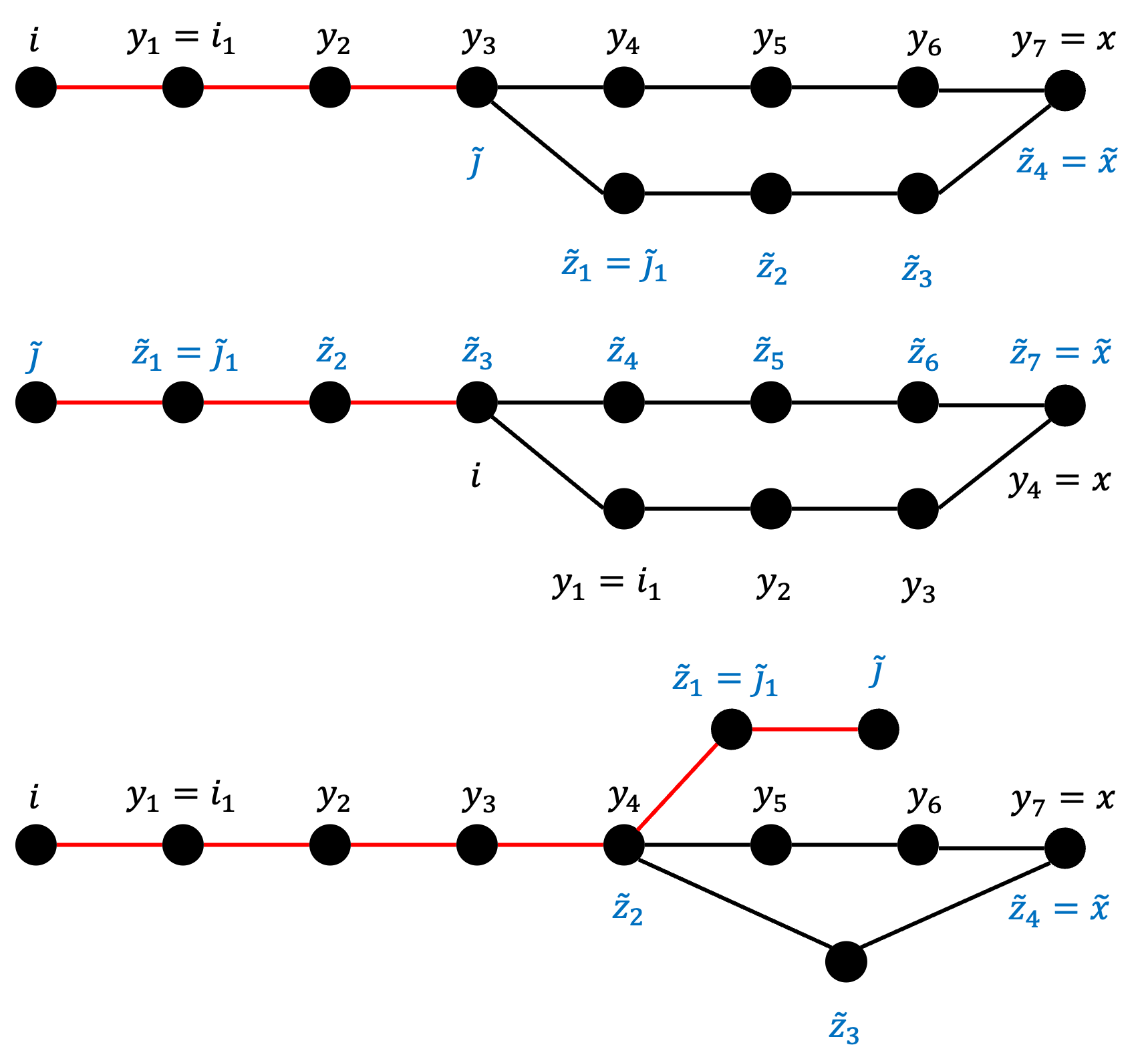}
    \caption{Intersection patterns of the two paths corresponding to cases (i), (ii) and (iii). The desired path $(\bar{p}_0,\bar{p}_1,\ldots,\bar{p}_{\bar{t}})$ is highlighted by the edges in red in the three figures respectively.}
    \label{fig:cases}
\end{figure}

    By the same argument, it follows by   $V_{\mathcal{T}^\mathrm{IC}_{i_2,l-1}}\cap V_{\td{\ct}^{\setminus\td{j}}_{\td{j}_2,l-1}}\neq\emptyset$ that there exists a path $(\bar{q}_0,\bar{q}_1,\ldots,\bar{q}_{\bar{t'}})$ of length $\bar{t'}\in [2l]$ that satisfies both properties
    \begin{enumerate}
        \item $q_0=i$ \emph{and} $q_{\bar{t'}}=j$;
        \item $q_1=i_2$ \emph{or} $q_{\bar{t'}-1}=j_2$.
    \end{enumerate}
    Because $V_{\mathcal{T}^\mathrm{IC}_{i_3,l-1}}\cap V_{\td{\ct}^{\setminus\td{j}}_{\td{j}_3,l-1}}\neq\emptyset$, there exists a path $(\bar{r}_0,\bar{r}_1,\ldots,\bar{r}_{\bar{t''}})$  of length $\bar{t''}\in [2l]$ that satisfies both properties
    \begin{enumerate}
        \item $r_0=i$ \emph{and} $r_{\bar{t''}}=j$;
        \item $r_1=i_3$ \emph{or} $r_{\bar{t''}-1}=j_3$.
    \end{enumerate}

    However, because $i_1\neq i_2\neq i_3$ and $j_1\neq j_2\neq j_3$, at least two of these three paths in $\bar{\cg}$ are non-identical paths from $\bar{i}$ to $\bar{j}$ in $\bar{\cg}$. This implies that there exists a cycle in the $2l$-neighborhood of $\bar{i}$ in $\bar{\cg}$, which contradicts to event $\ce_1$.
    
    \textbf{Proof of Claim 2.} Assume that~\eqref{eq:correctness} holds for all $v\in V_{\le \dmax+1}$. Suppose $i\neq j$, and $i$ and $\td{j}$ satisfy Criterion 2. By definition, we know that $i$ has two children $i_1,i_2$ in $\tic$ and $\td{j}$ has two neighbors $\td{j}_1,\td{j}_2$ in $\td{\cg}_2$ such that $\td{j}_1\in \cm_{i_1}$ and $$L_{l-1}(\ct^\mathrm{IC}_{i_2,l-1},\td{\ct}^{\setminus\td{j}}_{\td{j}_2,l-1})>\exp\left(\frac{(\lambda s q)^{l-1}}{\log n}\right).$$
    From the argument in the proof of Claim 1, we know that there exists a path $(\bar{p}_0,\bar{p}_1,\ldots,\bar{p}_{\bar{t}})$ of length at most $2l$ in $\bar{\cg}$ that satisfies
    \begin{enumerate}
        \item $p_0=i$ and $p_{\bar{t}}=j$;
        \item $p_1=i_2$ or $p_{\bar{t}-1}=j_2$;
    \end{enumerate}

    Moreover, by~\eqref{eq:correctness}, we know that $j_1=i_1$, and there is another path $(\bar{i},\bar{i}_1,\bar{j})$ between $\bar{i}$ and $\bar{j}$ in $\bar{\cg}$. The two path we found are non-identical, and this implies that there exists a cycle in the $2l$-neighborhood of $\bar{i}$ in $\bar{\cg}$, and hence leads to a contradiction with $\ce_1$.

     \textbf{Proof of Claim 3.} Assume that~\eqref{eq:correctness} holds for all $v\in V_{\dmax+1}$. Suppose $i\neq j$, and $i$ and $\td{j}$ satisfy Criterion 3. By definition, we know that $i$ has two children $i_1,i_2$ in $\tic$ and $\td{j}$ has two neighbors $\td{j}_1,\td{j}_2$ in $\td{\cg}_2$ such that $\td{k}\in\cm_{i_k},\forall k\in[2]$. Property~\eqref{eq:correctness} immediately implies that there exists a cycle $(\bar{i},\bar{i}_1,\bar{j},\bar{i}_2,\bar{i})$ in $\bar{\cg}$, which contradicts $\ce_1$.

      \textbf{Proof of Claim 4.} Assume that~\eqref{eq:correctness} holds for all $v\in V_{\le\dmax+1}$. Suppose $i\neq j$, and $i$ and $\td{j}$ satisfy Criterion 4. By definition, we know that there exists a neighbor $\td{j}_1$ of $\td{j}$ in $\td{\cg}_2$ such that $\td{j}_1\in\cm_{i_1}$, where $i_1$ is the parent of $i$ in $\tic$, and 
    \[
    L_{l-1}(\ct^\mathrm{IC}_{i,l-1},\td{\ct}^{\setminus\td{j}_1}_{\td{j},l-1})>\exp\left(\frac{(\lambda s q)^{l-1}}{\log n}\right).
    \]
    Notice that~\eqref{eq:correctness} implies $j_1=i_1$.
    By event $\ce_2$, we know that there exists $x\in[n]$ such that $x\in V_{\mathcal{T}^\mathrm{IC}_{i,l-1}}$ and $\td{x}\in V_{\td{\ct}^{\setminus\td{j}_1}_{\td{j},l-1}}$. Therefore, there exists a path $(j_1,q_1,\ldots,q_t,x)$ in $\cg_1$ of length at most $l$ that satisfies $q_1=i$, and a path $(\td{j_1},\td{r}_1,\ldots,\td{r}_{\td{t}},\td{x})$ in $\td{\cg}_2$ of length at most $l$ that satisfies $\td{q}_1=j$. Since $i\neq j$, there exist two non-identical paths between $\bar{j}_1$ and $\bar{x}$ in the union graph $\bar{\cg}$. This contradicts the event $\ce_1$.
\end{proof}
With Lemma~\ref{lem:sufficient} in hand, Proposition~\ref{prop:noerror} easily follows from the following two lemmas. 
\begin{lemma}
    \label{prop:E1}
    Suppose $\lambda sq>1$. We have 
    \begin{equation}
    \label{eq:e1_uncond}
        \P(\ce_1^c)=O(n^{-\epsilon/3}).
    \end{equation}
    Moreover, for any vertex $v\in \{2,\ldots,d\}$ and depth $d\in [\dmax]$,
    \begin{equation}
    \label{eq:e1_cond}
        \P(\ce_1^c\cond v\in V_d)=O(n^{-\epsilon/3}).
    \end{equation}
\end{lemma}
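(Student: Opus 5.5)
We plan to prove Lemma~\ref{prop:E1} with a first-moment (union-bound) argument over the vertices of $V_{\le\dmax}$, which we write as paths in $\cg_1'$. Fix $i\in[n]$ and $d\le\dmax$. The event $i\in V_{d}$ implies a self-avoiding path $P=(1=x_0,x_1,\ldots,x_d=i)$ in $\cg_1'$, so $\bar{\cg}$ contains the edges of $P$. We then bound
\[
\P(\ce_1^c)\le\sum_{d=0}^{\dmax}\sum_{P}\P\bigl(P\subseteq\cg_1',\ \text{$N_{\bar{\cg}}(\bar{x}_d,2l)$ contains a cycle}\bigr).
\]
There are at most $n^{d-1}$ choices of $P$, and $\P(P\subseteq\cg_1')=(\lambda q/n)^d$. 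The sum of the first factors is therefore $\sum_{d\le\dmax}(\lambda q)^d/n\cdot n = O((\lambda q)^{\dmax})=O(n^{1-\eps})$.

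The key step is to show that, conditional on $P\subseteq\cg_1'$, the probability that $N_{\bar{\cg}}(\bar{x}_d,2l)$ contains a cycle is $O(n^{-1+o(1)})$. Conditioning on $P\subseteq\cg_1'$ only forces the edges of $P$ to be present in $\bar{\cg}$; all other pairs stay independent with probabilities at most $\bar{\lambda}/n$. A cycle within distance $2l$ of $\bar x_d$ is witnessed by a small connected subgraph $H$: a cycle of length at most $4l+1$ joined to $\bar x_d$ by a path of length at most $2l$, so $H$ is unicyclic with $v_H=e_H$ and $v_H\le 6l+1$. Let $k$ be the number of edges of $H$ that are not in $P$. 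Since $P$ is itself a tree, $H\cup P$ either reuses vertices of $P$, which saves vertex choices but at the same time forces extra new edges, or it is attached externally. We will count the configurations as $H\cup P$ and compare excess: the number of new vertices is at most $k-1$, while the $k$ new edges cost $(\bar\lambda/n)^k$. This gives a bound of $n^{-1}(6l)^{O(1)}\bar\lambda^{6l}\cdot\mathrm{poly}(l)=n^{-1+o(1)}$, using $l=\lfloor\sqrt{\log n}\rfloor$. The main technical point is to check carefully that adding $H$ to the tree $P$ always creates at least one cycle beyond what $P$ already provides, so that the new edges exceed the new vertices by at least one. This holds because $H\cup P$ is connected and contains a cycle, so its number of edges is at least its number of vertices, while $P$ has $e_P=v_P-1$. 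Multiplying, we obtain $\P(\ce_1^c)=O(n^{1-\eps}\cdot n^{-1+o(1)})=O(n^{-\eps/3})$. This gives~\eqref{eq:e1_uncond}. The hypothesis $\lambda sq>1$ is not really needed here except to ensure that $\lambda q>1$, so that the geometric sum is dominated by its top term.

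For the conditional bound~\eqref{eq:e1_cond}, we first lower bound $\P(v\in V_d)$. Following the earlier sketch, we expect this to be of order $(\lambda q)^d/n$ up to constants, by a second-moment or Galton--Watson comparison on the number of paths. We then write $\P(\ce_1^c\mid v\in V_d)\le \P(\ce_1^c,\ v\in V_d)/\P(v\in V_d)$ and bound the numerator by summing over pairs of paths: a path $Q$ from $1$ to $v$ of length $d$, and a path $P$ from $1$ to some $i$ of length $d'\le \dmax$, together with a cycle structure $H$ near $i$. The main obstacle is the overlap between $P$ and $Q$. We plan to decompose $P$ as a prefix shared with $Q$, followed by segments alternating between on-$Q$ and off-$Q$ portions. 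If $P$ leaves $Q$ and later rejoins it, this creates a cycle in $P\cup Q$ and yields an extra factor $n^{-1}$, which pays for the lost $1/n$ normalization. If instead $P$ branches off $Q$ exactly once, the count is $\sum_{d'}(\lambda q)^{d}/n\cdot(\lambda q)^{d'-\tau}$ with $\tau$ the branching depth, which is at most $O(\dmax\,(\lambda q)^{d}n^{-\eps}\cdot n/n)$. After dividing by $\P(v\in V_d)$, this gives $O(n^{1-\eps}\log n)$ configurations, each having cycle probability $n^{-1+o(1)}$ by the same excess count now applied to the tree $P\cup Q$. This yields $O(n^{-\eps+o(1)})\le O(n^{-\eps/3})$. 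The remaining work is the bookkeeping of how $H$ may intersect $P\cup Q$, which is handled by the same edges-minus-vertices excess argument.
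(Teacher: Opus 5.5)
Your argument is correct in outline, but it takes a genuinely different route from the paper.

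\textbf{The paper's route.} The paper works inside the conditional measure:
\begin{enumerate}
\item It restricts to the layer-size events $\ca_k$ (Lemma~\ref{lem:neighborhood_size_cond}).
\item It uses exchangeability to get $\P(i\in V_k\mid\ca_k,v\in V_d)\le K(\lambda q)^k\log n/(n-2)$.
\item It conditions on the highest-order geodesics from $1$ to $v$ and from $1$ to $i$, and proves their union is a tree with at most two vertices per level around $i$ (Lemma~\ref{lem:path_overlapp}).
\item It removes the remaining decreasing conditioning events by Harris's inequality, and finishes with the exploration bound of Lemma~\ref{lem:cyclefree_given_structure}.
\end{enumerate}

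\textbf{Your route.} You do a direct first-moment count of configurations $(Q,P,H)$ in $\bar{\cg}$, using an edges-minus-vertices excess argument. You allow overlapping or cyclic path configurations and pay for each with an extra factor $1/n$. For the conditional bound you divide by $\P(v\in V_d)$. This is more elementary: there is no FKG, no geodesic ordering, and no concentration of layer sizes. It also yields a slightly sharper rate, $n^{-\eps+o(1)}$. The paper's approach, by contrast, never needs to know $\P(v\in V_d)$ precisely, and it reuses structural lemmas needed elsewhere.

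\textbf{What your route needs that you only sketch.}
\begin{enumerate}
\item \emph{A sharp lower bound on $\P(v\in V_d)$.} You need $\P(v\in V_d)\ge c(\lambda q)^d/n$ uniformly over $d\le\dmax$. This could come from $\E|S_{\cg_1'}(1,d)|\ge(\lambda q)^d\bigl(1-O((\lambda q)^d/n)\bigr)$, via a second-moment or branching comparison, combined with exchangeability. The cruder $\Omega(1/n)$-type bound the paper uses in the conditional part of Lemma~\ref{prop:E2} would leave an uncompensated factor $(\lambda q)^d$, which can be as large as $n^{1-\eps}$.
\item \emph{One factor $1/n$ per excursion.} The bookkeeping must charge a separate factor $1/n$ to each excursion of $P$ off and back onto $Q$. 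Each reattachment point has $O(\log n)$ choices, so a single $1/n$ shared by arbitrarily many excursions would not suffice. In those cyclic cases you should also drop the requirement on $H$, since the cycle near $i$ may be the very one created by $P\cup Q$. The excursion factors alone then give $n^{-\eps}\,\mathrm{polylog}(n)$ after normalization.
\item \emph{The combinatorial factor for $H$.} The vertices of $H$ lying on $P$ (or on $P\cup Q$) range over up to $\dmax+1$ positions each. The combinatorial factor is therefore $(\dmax+1)^{O(l)}=n^{o(1)}$, not $(6l)^{O(1)}$. This is still harmless, but it is what actually justifies your $n^{-1+o(1)}$ claim.
\end{enumerate}
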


\begin{lemma}
    \label{prop:E2}
    Suppose $\lambda sq>1$. We have 
    \begin{equation}
    \label{eq:e2_uncond}
        \P(\ce_2^c)=\td{O}(n^{-4/3}).
    \end{equation}
    Moreover, for any vertex $v\in \{2,\ldots,d\}$ and depth $d\in [\dmax]$,
    \begin{equation}
    \label{eq:e2_cond}
        \P(\ce_2^c\cond v\in V_d)=\td{O}(n^{-1/3}).
    \end{equation}
\end{lemma}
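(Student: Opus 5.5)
We prove Lemma~\ref{prop:E2} by a union bound over triplets $(i,j,k)$, using that the likelihood ratio has mean one under the independent law together with a Markov bound. Write $\tau_n:=\exp\left((\lambda sq)^{l-1}/\log n\right)$. Since $\lambda sq>1$ and $l=\lfloor\sqrt{\log n}\rfloor$, we have $(\lambda sq)^{l-1}/\log n=\omega(\log n)$. Hence $\tau_n=n^{\omega(1)}$, and this superpolynomial slack absorbs every polynomial loss below.

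\textbf{Step 1 (reduction to a fixed triplet).} Fix $(i,j,k)$ with $j\neq k$. Bound the probability that $i\in V_{\le\dmax+1}$, that the two trees are vertex-disjoint, and that $L_{l-1}(\ct^\rmic_{i,l-1},\tdct^{\setminus\td{j}}_{\td{k},l-1})>\tau_n$. The natural way to decouple is an exploration argument. Explore $\ct^\rmic_{i,l-1}$, meaning the depth-$(l-1)$ descendants of $i$ in the BFS tree of $\cg_1'$ from $1$; this is determined by $\cg_1'$ restricted to edges touching its own vertex set $A$ plus the BFS structure above. Separately explore the BFS tree of $\td k$ in $\tdcg_2\setminus\td j$ with vertex set $B$. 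On the event $A\cap B=\emptyset$, the edges revealed in the two explorations involve disjoint index pairs except for the ``negative'' information between $A$ and $B$. By the independence of the edge-indicator pairs in~\eqref{eq:asym_ER_pair}, the two trees are then nearly independent.

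\textbf{Step 2 (comparison with $\mathbb{Q}$).} Show that, on a high-probability event that both trees have at most $n^{1/3}$ vertices, the joint probability of any realized disjoint pair $(t,\td t)$ is at most $C$ times $\mathbb{Q}^{(\lambda/s,qs,s)}_{l-1}(t,\td t)$. Multiply this by the probability that $i$ lies in $V_{\le\dmax+1}$ with the prescribed subtree. The tree-size bound holds because a Galton--Watson tree with mean $\lambda sq$ or $\lambda$ over depth $l=\sqrt{\log n}$ has size $e^{O(\sqrt{\log n})}$ except with probability $n^{-\omega(1)}$. The constant factor comes from the standard comparison of $\mathrm{Binom}(n-o(n),c/n)$ offspring counts with $\poi(c)$ on bounded-degree trees; the negative information between $A$ and $B$ costs at most $(1-\bar\lambda/n)^{-|A||B|}=1+o(1)$.

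\textbf{Step 3 (Markov bound).} Markov's inequality under $\mathbb{Q}$ gives $\mathbb{Q}(L>\tau_n)\le 1/\tau_n$. The marginal factor for $i$ contributes $\P(i\in V_{\le\dmax+1})$, whose sum over $i$ is $\E|V_{\le\dmax+1}|=\td O(n^{1-\eps})$. Summing over $j,k$ costs $n^2$. Altogether this gives $n^{3}\tau_n^{-1}+n^{-\omega(1)}$, which is far smaller than $\td O(n^{-4/3})$. The stated rate is therefore dominated by the size-truncation and tree-approximation errors, which I expect to control at polynomial order $n^{-4/3}$ using the $n^{1/3}$ truncation.

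\textbf{Step 4 (the conditional bound).} Use $\P(\ce_2^c\mid v\in V_d)\le\P(\ce_2^c)/\P(v\in V_d)$. It then suffices to show $\P(v\in V_d)=\Omega(n^{-1})$ for $d\in[\dmax]$, which together with Step 3 yields $\td O(n^{-1/3})$. Let $|V_{d-1}|$ denote the number of vertices at depth $d-1$. By exchangeability of the non-root vertices, $\P(v\in V_d)\ge \frac{1}{n}\P(|V_d|\ge 1)$. Since $\lambda sq>1$ and $d\le\dmax$, survival of the exploration to depth $d$ happens with probability bounded below by the Galton--Watson survival probability $1-p^{\mathrm{ext}}_{\lambda q}>0$ minus the coupling error. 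Alternatively, use the single-edge bound $\P(v\in V_d)\ge\P(|V_{d-1}|\ge1)\cdot\Omega(1/n)$.

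\textbf{Main obstacle.} The hardest step is Step 2, bounding the joint law by a constant multiple of $\mathbb{Q}$. The conditioning on $i\in V_{\le\dmax+1}$ and on the BFS ordering introduces dependence through the path from $1$ to $i$, and $\td k$'s BFS tree may use edges correlated with $\cg_1'$ via~\eqref{eq:asym_ER_pair}. My first attempt would handle this by revealing the $\cg_1'$-BFS from $1$ only up to depth$(i)$ and treating the remaining unexplored pairs as fresh independent edges. Disjointness then guarantees that the $\tdcg_2$ exploration of $B$ touches only pairs whose $\cg_1'$-status is revealed to be ``absent''. Conditioning on an absent $\cg_1'$ edge changes the $\tdcg_2$ edge probability only by a factor $1+O(1/n)$, which is harmless.
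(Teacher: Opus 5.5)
Your skeleton is the paper's: restrict to a size-control event, take a union bound over triplets, compare the law of a vertex-disjoint tested pair with $\mathbb{Q}^{(\lambda/s,qs,s)}_{l-1}$ up to a constant, apply Markov with the superpolynomial threshold, and for \eqref{eq:e2_cond} divide by $\P(v\in V_d)=\Omega(1/n)$. The gap is in the comparison step, which you rightly flag as the crux, and the way you propose to carry it out fails.

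\emph{Order of revelation.} You reveal the $\cg_1'$-BFS from $1$ down to $\mathrm{depth}(i)$ first, then explore $\tdct^{\setminus\td j}_{\td k,l-1}$. You assert that disjointness makes the $\tdcg_2$-exploration touch only pairs whose $\cg_1'$-status is ``absent''. But $\no_{i,j,k}$ only separates $B$ from the vertex set $A$ of $i$'s own subtree. $B$ may contain many vertices of $V_{\le\mathrm{depth}(i)}$: ancestors of $i$, their other children, $k$ itself. By \eqref{eq:asym_ER_pair}, a revealed $\cg_1'$-edge inside $B$ is a $\tdcg_2$-edge with probability $s$, not $\lambda/n$. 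So, conditionally on the upper part of $\tic$, the BFS tree of $\td k$ can copy that structure. Its law is then not within a constant factor of $\gw_{l-1,\lambda}$ realization by realization. For the same reason you cannot factor out $\P(\rmin_i)$ as in Steps 2--3, since $\rmin_i$ is correlated with $\tdct$ (e.g.\ when $k$ is near the root $1$).

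The missing idea is to reverse the order, as the paper does. First bound the \emph{marginal} $\P(\tdct\cong\td t)\le O(1)\,\gw_{l-1,\lambda}(\td t)$, which involves only $\tdcg_2\setminus\td j$. Then, conditionally on the labeled realization of $\tdct$, run the canonical-order exploration of all of $\tic$. On $\no$, each $v_{d,r}\in A$ draws its children from pairs $(v_{d,r},w)$ with $w\notin B$ not yet explored. Every pair revealed by the $\tdcg_2$-exploration has an endpoint in $B$, so these pairs are fresh and each offspring count is $\bin(n-X,\lambda q/n)$. The event $\rmin_i$ is simply bounded by $1$. Losing the corresponding factor $n$ is irrelevant against $\tau_n=n^{\omega(1)}$.

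\emph{Truncation level.} Truncating at $n^{1/3}$ is too generous. The excluded set in the $\tic$-exploration has size up to $|N_{\cg_1'}(1,\dmax+l)|=n^{1-\eps+o(1)}$. So by Lemma~\ref{lem:binom-poi-gap} each vertex of $t$ costs a factor up to $\exp(\lambda q\,n^{-\eps+o(1)})$ relative to $\poi(\lambda q)$, and this is essentially sharp for leaves. On trees with about $n^{1/3}$ vertices, many of them leaves at depth at most $l-2$, the true ratio reaches $\exp(n^{1/3-\eps+o(1)})$ (recall $\lambda q>1$). Your uniform constant-factor claim is therefore false. This worst case also swamps $\tau_n=\exp(e^{O(\sqrt{\log n})})$ whenever $\eps<1/3$. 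The paper truncates at $n^{\eps/4}$ (event $\barn$) and bounds the explored set by $n^{1-\eps/2}$ (event $\cn$), so the product is $\exp(O(n^{-\eps/4}))$. The failure probability $\P(\barn^c)=\td O(n^{-4/3})$ comes from Lemma~\ref{lem:neighborhood_size} with a union bound over the $n$ roots. This is exactly where the stated rate comes from.

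\emph{Step 4.} A Galton--Watson coupling does not reach depth $\Theta(\log n)$, so it does not by itself give $\P(V_d\neq\emptyset)=\Theta(1)$ for all $d\le\dmax$. The paper obtains this from the giant component together with the neighborhood-size bound (Lemma~\ref{lem:non-empty-neighborhood}).
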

We prove these two propositions in the next two subsections respectively.

\subsection{Proof of Lemma~\ref{prop:E1}}
We first establish the conditional probability bound in~\eqref{eq:e1_cond}. The unconditional probability bound in~\eqref{eq:e1_uncond} then follows directly from intermediate steps of this proof.

\subsubsection{Proof of the conditional probability bound~\eqref{eq:e1_cond}}

Recall that we defined 
\[
\ce_1=\{\nexists i\in V_{\le \dmax}:\text{the $2l$-neighborhood of $\bar{i}$ in $\bar{\cg}$  contains a cycle}\}.
\]
By Lemma~\ref{lem:TIC}, we know that $V_{\le d}=N_{\cg_1'}(1,d)$ for each $d\ge 0$. So we can equivalently write $\ce_1$ as 
\[
\ce_1=\{\nexists i\in N_{\cg_1'}(1,\dmax):\text{the $2l$-neighborhood of $\bar{i}$ in $\bar{\cg}$  contains a cycle}\}.
\]
We want to show that
\[
\P(\ce_1^c\cond v\in V_d)=O(n^{-\eps/3}).
\]

To prove this probability bound, 
we first define a sequence of auxiliary events that bound the neighborhood sizes of vertex $1$ in $\cg_1'$. For each $k\in\{0,\ldots,\dmax\}$, define
\[
\ca_k=\{|N_{\cg_1'}(1,k)|\le K(\lambda q)^k\log n\}
\]
where
$K:=\frac{\lambda q}{\lambda q-1}\prod_{h=0}^\infty(1+(\lambda q)^{-h/2})(1+(\lambda q)^{-h}).$ By the assumption that $\lambda sq>1$, we have
\begin{align*}
    &\prod_{h=0}^\infty(1+(\lambda q)^{-h/2})(1+(\lambda q)^{-h})\le \exp\left(\sum_{h=0}^{\infty}(\lambda q)^{-h/2}+(\lambda q)^{-h}\right)<\infty.
\end{align*}
This shows that $K$ is a constant.  It follows by Lemma~\ref{lem:neighborhood_size_cond} in Appendix~\ref{appd:properties} that these neighborhood size bounds hold with high probability:
    \[
    \P(\cap_{k=0}^{\dmax}\ca_k\cond u\in V_d)\ge 1-O(n^{-\eps/3}).
    \]
We also decompose event $\ce_1$ as $\ce_1=\cap_{k=0}^\dmax\ce_{1,k}$, where 
    \[
    \ce_{1,k}=\{\nexists i\in [n]:i\in S_{\cg_1'}(1,k)\text{ and }\text{the $2l$-neighborhood of $\bar{i}$ in $\bar{\cg}$  contains a cycle}\}.
    \]

We then have
\begin{align}
        \P(\ce_1^c\cond v\in V_d)&=\P(\cup_{k=0}^\dmax\ce_{1,k}^c\cond v\in V_d)\nonumber\\
        &\le \P((\cup_{k=0}^\dmax\ce_{1,k}^c)\cap (\cap_{k=0}^{\dmax}\ca_k)\cond v\in V_d)+\P((\cap_{k=0}^{\dmax}\ca_k)^c\cond v\in V_d)\nonumber\\
        &\le \P(\cup_{k=0}^{\dmax}(\ce_{1,k}^c\cap(\cap_{h=0}^{\dmax}\ca_h))\cond v\in V_d)+O(n^{-1/3})\nonumber\\
        &\le O(n^{-1/3})+\sum_{k=0}^{\dmax}\P(\ce_{1,k}^c\cap(\cap_{h=0}^{\dmax}\ca_h)\cond v\in V_d)\nonumber\\
        &\le O(n^{-1/3})+\sum_{k=0}^{\dmax}\P(\ce_{1,k}^c\cap\ca_k\cond v\in V_d)\nonumber\\
        &\le O(n^{-1/3})+\sum_{k=0}^{\dmax}\P(\ce_{1,k}^c\cond\ca_k,v\in V_d),\label{eq:e1_chain}
    \end{align}
and we focus on bounding each term in the summation.
In particular, we show that 
\begin{equation}
    \label{eq:e1_k}
    \P(\ce_{1,k}^c\cond\ca_k,v\in V_d)\le K(\log n)(\lambda q)^k\cdot O( n^{-1+\eps/4}).
\end{equation}
for each $k\in \{0,\ldots,\dmax\}$. Substituting~\eqref{eq:e1_k} into~\eqref{eq:e1_chain} yields
\begin{align*}
    \P(\ce_1^c\cond v\in V_d)&\le O(n^{-1/3})+\sum_{k=0}^{\dmax}K(\log n)
    (\lambda q)^k\cdot O( n^{-1+\eps/4})\\
    &\le O(n^{-1/3})+O( n^{-1+\eps/4})\sum_{k=0}^{\dmax}K(\log n)(\lambda q)^k\\
    &=O(n^{-\eps/3}),
\end{align*}
where the last equality follows because $\sum_{k=0}^{\dmax}K(\log n)(\lambda q)^k=O((\lambda q)^{\dmax}\log n)$ and the fact that $\dmax=\lfloor\frac{(1-\eps)\log n}{\log(\lambda q)}\rfloor$. Therefore, it suffices to prove~\eqref{eq:e1_k}. We separately consider two cases: $k\neq 0$ and $k=0$.

\noindent\underline{\textbf{Case 1. $k\neq 0$:}} For each vertex $i\in[n]$, define events
    \[
     \cb_{i,k}=\{i\in S_{\cg_1'}(1,k)\},
     \]
     and 
     \[
     \cc_i=\{\text{the $2l$-neighborhood of $\bar{i}$ in $\bar{\cg}$  contains a cycle}\}.
     \]
 By the union bound, we have
    \begin{align}
         \P(\ce_{1,k}\cond\ca_k,v\in V_{d})&\le \sum_{i=2}^{n}\P(\cb_{i,k}\cap\cc_i|\ca_k,v\in V_{d}).\label{eq:bikci}
    \end{align}

To bound this summation, we further consider two subcases $i\neq v$ and $i= v$.

\noindent\underline{\textbf{Case 1.1. $i\neq v$:}} We can apply the chain rule to get
    \begin{align}
        &\P(\cb_{i,k}\cap\cc_i|\ca_k,v\in V_d)\nonumber\\
        &\le \P(\cb_{i,k}|\ca_k,v\in V_d)\cdot\P(\cc_i\cond \ca_k,v\in V_d,i\in V_k)\nonumber\\
        &\leq \frac{K(\lambda q)^k\log n}{n-2}\cdot\P(\cc_i\cond \ca_k,v\in V_d,i\in V_k),\label{eq:ineqv}
    \end{align}
     where the last inequality follows by symmetry and the fact that $$|S_{\cg_1'}(1,k)|\le |N_{\cg_1'}(1,k)|\le K(\lambda q )^k\log n$$ under event $\ca_k$. To proceed, we express the two conditioned events $i\in V_k$ and $v\in V_d$ as path-existence events in the underlying graph. Conditioning on these path events allows us to bound the probability that the neighborhood of $\bar{i}$ in $\bar{\cg}$ being a tree more directly. 
     
     Notice that the event $i\in V_k$ is equivalent as the event that there exists a path of length $k$ between $1$ and $i$ in $\cg_1'$, while there exists no path of length less than $k$ between them. The event $v\in V_d$ can be rewritten in a similar way. 
     We define $\mathcal{P}^d_{1,v}$ as the collection of all possible paths of length $d$ from $1$ to $v$ in $\cg_1'$. The total order of the elements in $\mathcal{P}^d_{1,v}$ is defined as follows: For two distinct paths $p=(v_0,v_1,\ldots,v_{d})$ and $p'=(v_0',v_1',\ldots,v_{d}')$ with $v_0=v_0'=1$ and $v_{d}=v_{d}'=v$. We say $p$ has a higher order than $p'$, denoted $p\prec p'$, if there exists an index $d'\in[d-1]$ such that $v_{d'}<v_{d'}'$ and  $v_t=v_t',\forall t< d'$.
     We also define $\mathcal{P}^k_{1,i}$ as the collection of all possible paths of length $k$ from $1$ to $i$ in $\cg_1'$. The total order of the elements in $\mathcal{P}^k_{1,i}$ is analogously defined. With the definitions of these two collections and their corresponding total ordering, we have the following lemma.

     \begin{lemma}
         \label{lem:path_overlapp}
         Suppose $v\in V_d$ and $i\in V_k$.
         Assume $p\in \mathcal{P}^d_{1,v}$ is the highest order $d$-path between $1$ and $v$ in $\cg_1'$ and $p'\in \mathcal{P}^k_{1,i}$ is the highest order $k$-path between $1$ and $i$ in $\cg_1'$. Let $\ch_{p,p'}$ denote the union graph of the two path $p$ and $p'$. Then $\ch_{p,p'}$ must satisfy the following two properties:
         \begin{enumerate}
             \item $\ch_{p,p'}$ is a tree
             \item For each $h\ge 0$, the number of vertices in $\ch_{p,p'}$ that are at distance $h$ from $i$ is at most $2$.
         \end{enumerate}
     \end{lemma}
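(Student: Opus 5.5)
The plan is to use two facts: both paths are geodesics from the root $1$, and each is the lexicographically first one. Together these force the two paths to agree on an initial segment and then separate for good. Once that is established, $\ch_{p,p'}$ is a "spider" with at most three legs, and both properties can be read off directly.

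\emph{Step 1 (positions are determined by distance).} Since $v\in V_d$ and $i\in V_k$, Lemma~\ref{lem:TIC} gives $\dist_{\cg_1'}(1,v)=d$ and $\dist_{\cg_1'}(1,i)=k$. So $p=(v_0,\ldots,v_d)$ and $p'=(v_0',\ldots,v_k')$ are shortest paths, and every vertex $w$ on either path sits at position $\dist_{\cg_1'}(1,w)$. Hence if a vertex $w$ lies on both paths, then $w=v_t=v_t'$ for the same index $t=\dist_{\cg_1'}(1,w)$.

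\emph{Step 2 (prefix property of the highest-order path).} I will show that the prefix $(v_0,\ldots,v_t)$ of $p$ is the highest-order path in $\mathcal{P}^t_{1,v_t}$. Suppose some $r\in\mathcal{P}^t_{1,v_t}$ satisfies $r\prec(v_0,\ldots,v_t)$. Concatenate $r$ with the suffix $(v_t,\ldots,v_d)$. The result is a walk of length $d$ from $1$ to $v$, and because $d$ is the distance it is a genuine path. Its first index of disagreement with $p$ lies inside the prefix, where $r$ has the smaller label. This contradicts the maximality of $p$. The same argument applies to $p'$. Since the highest-order element of $\mathcal{P}^t_{1,w}$ is unique, any common vertex $w=v_t=v'_t$ forces $v_s=v'_s$ for all $s\le t$.

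\emph{Step 3 (structure and conclusion).} Let $m$ be the largest index with $v_m=v'_m$; this exists because $v_0=v'_0=1$. By Step 2, the common vertices of the two paths are exactly $v_0,\ldots,v_m$, and the remainders $(v_{m+1},\ldots,v_d)$ and $(v'_{m+1},\ldots,v'_k)$ are vertex-disjoint from each other. An edge shared by both paths joins two common vertices, so it lies in the common prefix. Therefore $\ch_{p,p'}$ is the path $v_0\cdots v_m$ with two vertex-disjoint paths hanging off $v_m$, of lengths $d-m$ and $k-m$ (one or both may be empty). This graph is a tree, which gives property 1.

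For property 2, $i=v'_k$ is the end of one leg. Walking from $i$ to the center $v_m$ meets exactly one vertex at each distance $0,\ldots,k-m$. Every vertex farther from $i$ lies on one of the two remaining legs, toward $1$ or toward $v$, and each leg contains at most one vertex at any given distance from $i$. So at most two vertices lie at each distance.

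The main obstacle is Step 2. It needs care on two points: the concatenated walk must be a genuine path, which is where geodesicity is used, and the lexicographic comparison must be decided within the prefix. The degenerate cases also need checking: $i$ lying on $p$, $v$ lying on $p'$, and $m=0$. All of them fit the spider picture with some legs empty.
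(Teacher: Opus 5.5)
Your proof is correct and follows essentially the same route as the paper. Geodesicity rules out a common vertex sitting at different positions on the two paths, and lexicographic minimality rules out the paths re-merging after they diverge (your prefix property plus uniqueness is a repackaging of the paper's path-swapping argument), which gives the three-legged spider; your Step 1 additionally lets you treat $k=d$ and $k\neq d$ uniformly, whereas the paper handles the two cases separately.
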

     \begin{proof}[Proof of Lemma~\ref{lem:path_overlapp}]
         To prove the lemma, we consider two separate case: $k=d$ and $k\neq d$.

         \underline{\textit{Case $k=d$:}} We denote $p=(v_0,v_1,\ldots,v_d)$ and $p'=(v_0',v_1',\ldots,v_d')$. Here $v_0=v_0'=1$, $v_d=v$ and $v_d'=i$. We claim the following property that $p$ and $p'$ must satisfy: Let $d^*=\arg\min_{t\in[d] }\{v_{t}\neq v_{t}'\}$. Then we have 
    \begin{equation}
    \label{eq:structure_h}
    \{v_{d^*},v_{d^*+1}\ldots,v_d\}\cap\{v_{d^*}',v_{d^*+1}'\ldots,v_d'\}=\emp.
    \end{equation}
    To see the claim, suppose that we have some $t>d^*$ such that $v_t=v_t'$. Let us assume without generality that the index of $v_{d^*}$ is larger than the index of $v_{d^*}'$. Then we know that $$p=(v_0,\ldots,v_{d^*-1},v_{d^*},v_{d^*+1},\ldots,v_{t-1},v_t,\ldots,v_d)$$ and 
    \[
    p^*=(v_0,\ldots,v_{d^*-1},v_{d^*}',v_{d^*+1}',\ldots,v_{t-1}',v_t,\ldots,v_d)
    \]
    are two distinct path between $1$ and $v$ in $\cg_1$. However, since the index of $v_{d^*}$ is larger than the index of $v_{d^*}'$, we know that $p^*$ has a higher order than $p$, contradicting the fact that $p$ is the highest order path between $1$ and $v$. 
    
    Now suppose that we have two indices $t,t'\ge d^*$ with $t> t'$ such that $v_{t}=v_{t'}'$. Then we know that there exists a path 
    \[
    (v_0,\ldots,v_{d^*-1},v_{d^*}',\ldots,v_{t'}',v_{t+1},v_{t+2},\ldots,v_d)
    \]
    between $1$ and $v$ in $\cg_1'$ with length less than $d$, and this contradicts with the fact that $v\in V_d$. When we have two indices $t,t'\ge d^*$ with $t< t'$, we can use a similar argument to conclude that there exists a path of length less than $d$ between $1$ and $i$, which contradicts with $i\in V_d$. This completes the proof of the claim. 

    The claim implies that $\ch_{p,p'}$ is a tree consists of three disjoint paths, that are from vertex $v_{d^*-1}$ to vertices $1$, $u$ and $i$ respectively. It is possible that $d^*=1$, in which case the path from $v_{d^*-1}$ to $1$ is a trivial path with only a single vertex and no edges.
    In all these scenarios, there are at most $2$ vertices in $\ch_{p,p'}$ that are at distance $h$ from $i$ for each $h\ge 0$.
    Therefore, the two properties in Lemma~\ref{lem:path_overlapp} all hold.
    In Figure~\ref{fig:overlap}, we provide examples of the two possible overlapping patterns of the two paths and the corresponding union graph $\ch_{p,p'}$.
    
    \begin{figure}[htbp]
    \centering
    \includegraphics[width=0.5\linewidth]{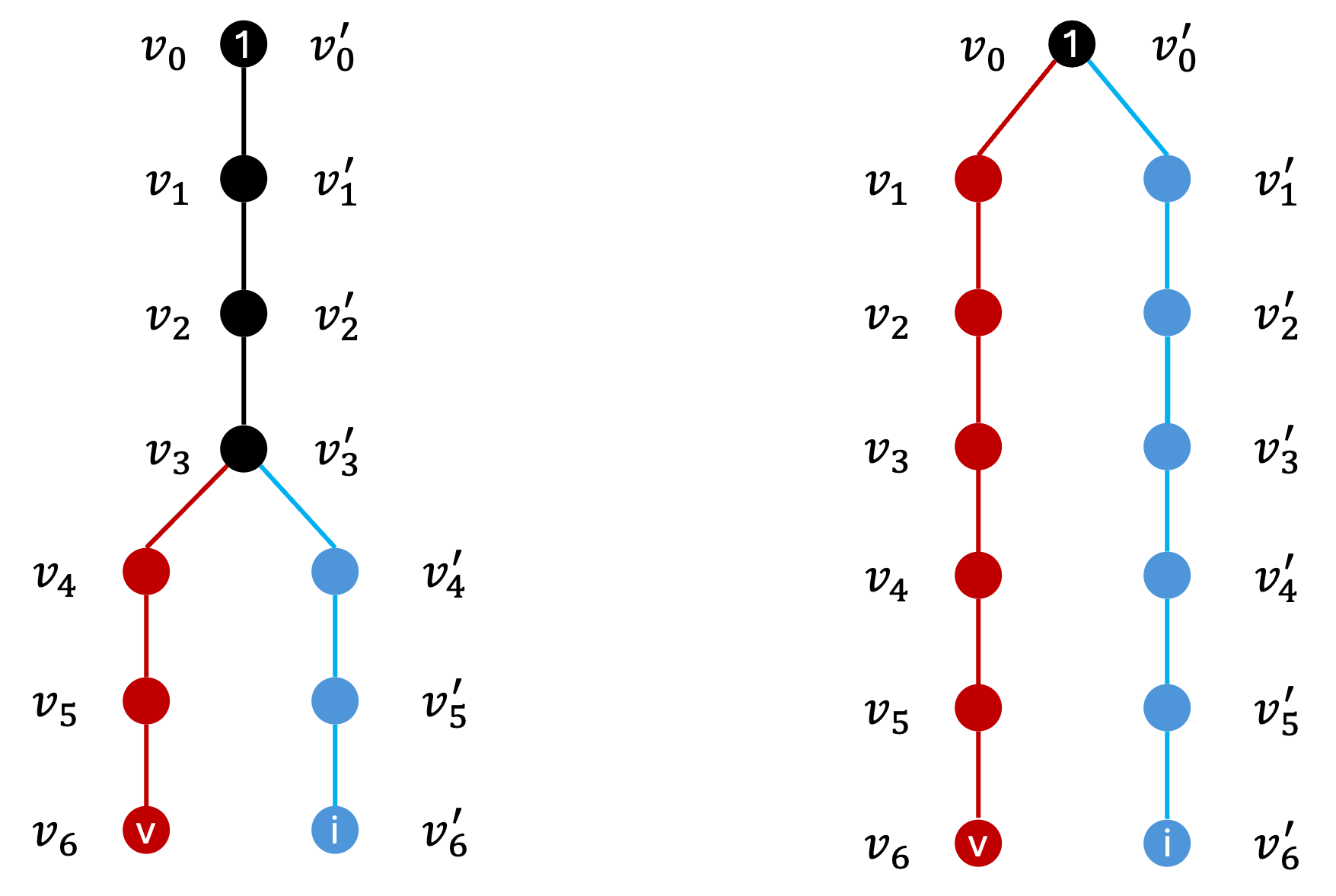} 
    \caption{Two examples of the overlap between $p$ and $p'$. In both examples, we take $k=d=6$. On the left-hand side, we have $d^*=4$. The three black vertices are shared by $p$ and $p'$, and the red (resp. blue) vertices are exclusive to $p$ (resp. $p'$). The union graph $\ch_{p,p'}$ is consist of three disjoint paths from vertex $v_{3}$ to vertices $1$, $v$ and $i$ respectively. On the right-hand side, we have $d^*=1$, and the two paths only share vertex $1$. In this scenario, $\ch_{p.p'}$ reduces to a path from $v$ to $i$.}
    \label{fig:overlap}
\end{figure}

    \underline{\textit{Case $k\neq d$:}} In this proof, we assume that $k<d$. The proof in the case of $k>d$ follows by an analogous argument. We denote $p=(v_0,v_1,\ldots,v_d)$ and $p'=(v_0',v_1',\ldots,v_k')$. First, assume that $v_t=v_t'$ for each $t\le k$, i.e., $p'$ is a subgraph of $p$. Then $\ch_{p,p'}$ is a path from $1$ to $u$, and it clearly satisfies the two properties in the lemma. Now suppose that the previous assumption does not hold, and let $d^*\le k$ be the smallest index such that $v_{d^*}\neq v_{d^*}'$. Then by the same argument as in the case of $k=d$, we know that $\{v_{d^*},v_{d^*+1}\ldots,v_d\}\cap\{v_{d^*}',v_{d^*+1}'\ldots,v_k'\}=\emp$. This again implies that $\ch_{p,p'}$ is a tree consists of three disjoint paths, that are from vertex $v_{d^*-1}$ to vertices $1$, $u$ and $i$ respectively. Therefore, the two properties in the lemma are satisfied.
     \end{proof}
    For two paths $p\in \mathcal{P}^d_{1,v}$ and $p'\in \mathcal{P}^k_{1,i}$, we say the pair $(p,p')$ is \emph{admissible} if they satisfy the property stated in Lemma~\ref{lem:path_overlapp}. We use $\cd^{(1,v,d)}_p$ to denote the event that $p$ is the highest-order $d$-path between $1$ and $v$ in $\cg_1'$, and $\cd^{(1,i,k)}_{p'}$ to denote the event that $p'$ is the highest-order $d$-path between $1$ and $i$ in $\cg_1'$. Then we have
    \begin{align}
        &\P(\cc_i\cond \ca_k,v\in V_d,i\in V_k)\nonumber\\
        &=\sum_{\substack{p\in \mathcal{P}^d_{1,v},p'\in\mathcal{P}^k_{1,i}\nonumber\\(p,p')\text{ admissible}}}\P(\cc_i\cap\cd^{(1,v,d)}_p\cap\cd^{(1,i,k)}_{p'}\cond\ca_k,v\in V_d,i\in V_k)\\
        &\le \max_{\substack{p\in \mathcal{P}^d_{1,v},p'\in\mathcal{P}^k_{1,i}\nonumber\\(p,p')\text{ admissible}}}\P(\cc_i\cond \cd^{(1,v,d)}_p,\cd^{(1,i,k)}_{p'},\ca_k,v\in V_d,i\in V_k)\\
        &=\max_{\substack{p\in \mathcal{P}^d_{1,v},p'\in\mathcal{P}^k_{1,i}\nonumber\\(p,p')\text{ admissible}}}\P(\cc_i\cond \ca_k, \{\text{paths $p,p'$ exist in $\cg_1'$}\},\nonumber\\
        &\quad\quad\quad\quad\quad\quad\quad\quad\quad\{\text{no path in $\mathcal{P}^d_{1,v}$ with order higher than $p$ exists in $\cg_1'$ }\},\nonumber\\
        &\quad\quad\quad\quad\quad\quad\quad\quad\quad\{\text{no path in $\mathcal{P}^k_{1,i}$ with order higher than $p'$ exists in $\cg_1'$ }\},\nonumber\\
        &\quad\quad\quad\quad\quad\quad\quad\quad\quad \{\text{no path between $1$ and $u$ with length less than $d$ exists in $\cg_1'$}\},\nonumber\\
        &\quad\quad\quad\quad\quad\quad\quad\quad\quad \{\text{no path between $1$ and $i$ with length less than $k$ exists in $\cg_1'$}\})\nonumber
    \end{align}
    Notice that $\cc_i$ is an increasing graph property in $\bar{\cg}$ because if $\cc_i$ holds in $\bar{\cg}$, then the property still hold after adding an arbitrary edge into $\bar{\cg}$.  On the other hand events $\ca_k$,  
    $$ \{\text{no path in $\mathcal{P}^d_{1,v}$ with order higher than $p$ exists in $\cg_1'$ }\},$$
    $$\{\text{no path in $\mathcal{P}^k_{1,i}$ with order higher than $p'$ exists in $\cg_1'$ }\},$$
    $$\{\text{no path between $1$ and $u$ with length less than $d$ exists in $\cg_1'$}\}$$
    and
    $$\{\text{no path between $1$ and $i$ with length less than $k$ exists in $\cg_1'$}\}$$
    are decreasing graph properties in $\cg_1'$ because if they hold in $\cg_1'$, removing any edge from $\cg_1'$ does not break the properties.
    Since $\bar{\cg}$ is the union graph of $\cg_1'$ and $\cg_2$, we can use Harris inequality~\citep{harris1960lower} to obtain
    \begin{align*}
        &\P(\cc_i\cond \ca_k,v\in V_d,i\in V_k)\\
        &\le\max_{\substack{p\in \mathcal{P}^d_{1,v},p'\in\mathcal{P}^k_{1,i}\nonumber\\(p,p')\text{ admissible}}}\P(\cc_i\cond\{\text{paths $p,p'$ exist in $\cg_1'$}\}).
    \end{align*}
    Recall that $\cc_i$ is the event that the $2l$-neighborhood of vertex $\bar{i}$ in $\bar{\cg}$ contains a cycle, and that $\cg_1'$ is a subgraph of $\bar{\cg}$. Therefore, for any two paths $p,p'$, we have
    \begin{equation*}
        \P(\cc_i\cond\{\text{paths $p,p'$ exist in $\cg_1'$}\})=\P(\cc_i\cond\{\text{paths $p,p'$ exist in $\bar{\cg}$}\}).
    \end{equation*}
    It then follows that
    \begin{equation*}
        \P(\cc_i\cond \ca_k,v\in V_d,i\in V_k)\le \max_{\substack{p\in \mathcal{P}^d_{1,v},p'\in\mathcal{P}^k_{1,i}\nonumber\\(p,p')\text{ admissible}}}\P(\cc_i\cond\{\text{paths $p,p'$ exist in $\bar{\cg}$}\}).
    \end{equation*}
    
    In the sparse \erdos--\renyi graph $\bar{\cg}\sim\mathrm{ER}(n,\bar{\lambda}/n)$, the $2l$-neighborhood of a fixed vertex $\bar{i}$ contains a cycle with probability at most $n^{-1+o(1)}$. In our setting, we need to show that this property continues to hold even after conditioning on the presence of two paths $p$ and $p'$. The following lemma provides an upper bound on the probability that the neighborhood of a vertex contains a cycle give the existence of a well-behaved structure in the graph. We defer the proof of this lemma to Appendix~\ref{appd:properties}.
    \begin{restatable}{lemma}{lemcyclefree}
    \label{lem:cyclefree_given_structure}
        Consider $\cg\sim\mathrm{ER}(n,\frac{\mu}{n})$, where $\mu>1$ is a constant independent of $n$. Let $i$ be an arbitrary but fixed vertex in $\cg$, and $\ch$ be a tree labeled with indices in $[n]$ that satisfies:
        \begin{enumerate}
            \item $\ch$ has at most $K\log n$ vertices for some constant $K>0$;
            \item Vertex $i$ is in $\ch$;
            \item There exists a constant $C>0$ such that the number of vertices in $\ch$ at distance $d$ from $i$ is at most $C$ for every $d\ge 0$.
        \end{enumerate}
        Then we have
        \begin{equation}
            \P(\text{the $2\sqrt{\log n}$-neighborhood of $i$ in $\cg$ contains a cycle}\cond \ch\subset\cg)=O(n^{-1+\gamma})
        \end{equation}
        for any constant $\gamma>0$.
    \end{restatable}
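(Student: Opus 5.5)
The plan is a first-moment count of cycles near $i$, with the conditioning on $\ch\subset\cg$ handled by the independence of edges in $\mathrm{ER}(n,\mu/n)$. Conditioned on $\ch\subset\cg$, the edges of $\ch$ are present deterministically. Every other pair of vertices is still an edge independently with probability $\mu/n$. Write $r=2\sqrt{\log n}$. If the $r$-neighborhood of $i$ contains a cycle, then there is a minimal ``witness'' subgraph $W$ of $\cg\cup\ch$. It consists of a cycle $C$ together with a shortest path $P$ from $i$ to $C$, each of length at most $2r+1$; we may take $|C|\le 2r+1$, since a cycle in a ball of radius $r$ yields one of length at most $2r+1$ through BFS-tree arguments. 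Since $\ch$ is a tree, $W$ cannot lie entirely inside $\ch$. So $W$ uses at least one edge outside $\ch$, and we union-bound over the possible witnesses.

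To organize the count, I would contract $\ch$ and view $W\cup\ch$ as $\ch$ plus a collection of ``new'' edges $F=E(W)\setminus E(\ch)$. Let $a$ be the number of vertices of $W$ outside $V(\ch)$. The excess satisfies $|F|\ge a+1$: adding $a$ new vertices to the connected tree $\ch$ and creating a cycle requires at least $a+1$ new edges. The probability that all edges in $F$ are present is $(\mu/n)^{|F|}$. The number of choices of the $a$ outside vertices is at most $n^a$. The main point is to bound the number of ways the new edges can attach to vertices of $\ch$. Here property 3 is used: $W$ lies within distance $O(r)$ of $i$, so any vertex of $\ch$ that $W$ touches is at distance $O(r)$ from $i$ in $\cg$. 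I need to be careful, though, since distance in $\cg$ is not distance in $\ch$.

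I would instead structure $W$ as follows. It is a path from $i$ plus a cycle, of total length $L\le 4r+2$, which we traverse as a walk from $i$. Each step of the walk either follows an edge of $\ch$ or uses a new edge. Following an $\ch$-edge from the current vertex offers at most about $2C$ choices, because a vertex at $\ch$-distance $h$ from $i$ has neighbors only at distances $h\pm1$, and each such layer has at most $C$ vertices. Taking a new edge to a fresh outside vertex costs a factor $n\cdot\mu/n=\mu$. Taking a new edge to a previously visited vertex or to a vertex of $\ch$ costs $\mu/n$ times the number of targets. There are at most $K\log n+L$ targets, so this step costs $O(\log n/n)$.

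Since $W$ is not contained in the tree $\ch$, at least one step of the walk must close onto an already-present vertex via a new edge; this is what creates the excess. The total bound is
\[
\sum_{L\le 4r+2} L^{O(1)}\,(2C+\mu)^{L}\cdot O\!\left(\frac{\log n}{n}\right).
\]
The exponent $L$ is $O(\sqrt{\log n})$, so $(2C+\mu)^{O(\sqrt{\log n})}=n^{o(1)}$, and the whole sum is $n^{-1+o(1)}=O(n^{-1+\gamma})$.

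The main obstacle is making this walk encoding rigorous. The cycle may run partly along $\ch$-edges, and the starting vertex $i$ lies in $\ch$. I must also check that the ``closing'' step always involves a new edge with only $O(\log n)$ possible targets, and not merely the choices offered by $\ch$-steps. I expect to handle this with a case split on whether the first repeated vertex of the walk is reached by an $\ch$-edge or by a new edge. The $\ch$-edge case is impossible unless an earlier new edge already left and re-entered $\ch$, and that earlier re-entry supplies the $O(\log n/n)$ factor.
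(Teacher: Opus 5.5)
Your argument is correct, but it takes a genuinely different route from the paper's. The paper runs a layer-by-layer BFS exploration from $i$. Conditioning on good behavior of layers $1,\dots,h-1$, it bounds four events at layer $h$:
\begin{itemize}
\item an edge inside $S(i,h)$;
\item a layer-$h$ vertex with two neighbors in layer $h-1$;
\item an overshoot of $|S(i,h)|$ beyond roughly $\mu^h\log n$, controlled by Bennett's inequality;
\item the exploration reaching a vertex of $\ch$ at the ``wrong'' layer, i.e.\ $S(i,h)\cap V(\ch)\neq H_h$.
\end{itemize}
In that proof, property~3 enters only through the additive $C$ in the layer-size domination and in the two-parent bound.

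You instead do a single first-moment union bound over lollipop witnesses. Conditioning on $\ch\subset\cg$ merely fixes those edges, so the bound is immediate. Your walk encoding forces at least one step that is a new edge into a visited vertex or a vertex of $\ch$. The key observation is that once the walk leaves $V(\ch)$ it can neither re-enter nor close without such a step, and a walk using only $\ch$-edges would put a cycle inside a tree. This is what makes the count clean.

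Your route is more elementary: it needs no concentration inequality and no nested conditioning. It also shows that the only thing that matters is $r=O(\sqrt{\log n})$. All your losses ($(C+1+\mu)^L$, $3^L$, $L^{O(1)}$) are $e^{O(\sqrt{\log n})}=n^{o(1)}$. In fact you could drop property~3 entirely: bounding each $\ch$-step by the crude degree bound $K\log n$ still gives $(K\log n)^{O(\sqrt{\log n})}=n^{o(1)}$.

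The paper's exploration tracks layer sizes at the scale $\mu^h\log n$. It therefore loses only polylogarithmic factors relative to the heuristic $(\text{ball size})^2/n$, and it reuses the same exploration machinery as the neighborhood-size lemmas and the coupling in Appendix~\ref{appd:proof_tv}. That sharper bookkeeping would matter at larger radii (order $c\log n$), where your exponential-in-$L$ losses would become polynomial in $n$. There you would need a finer encoding of the part of the witness that runs inside $\ch$.
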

    By Lemma~\ref{lem:path_overlapp} and the fact that $d,k\le \dmax=\lfloor\frac{(1-\eps)\log n}{\log(\lambda q)}\rfloor$, we know that the graph $\ch_{p,p'}$ satisfies the three conditions stated in Lemma~\ref{lem:cyclefree_given_structure}. Therefore, it follows by Lemma~\ref{lem:cyclefree_given_structure} that
    \begin{equation}
        \label{eq:ineqv_ci}
        \P(\cc_i\cond \ca_k,v\in V_d,i\in V_k)=O(n^{-1+\eps/4}).
    \end{equation}
    Substituting~\eqref{eq:ineqv_ci} into~\eqref{eq:ineqv} gives 
    \begin{equation}
        \label{eq:ineqv_final}
        \P(\cb_{i,k}\cap\cc_i|\ca_k,v\in V_d)\le \frac{K(\lambda q)^k\log n}{n-2}\cdot O(n^{-1+\eps/4}).
    \end{equation}

    \noindent\underline{\textbf{Case 1.2. $i= v$:}}
    In this case, we can write
    \begin{align}
        \P(\cb_{i,k}\cap\cc_i\cond \ca_k,i\in V_d)&=\P(\cc_i\cond \ca_k,i\in V_k)\indi\{k=d\}\nonumber\\
        &\le \P(\cc_i\cond \ca_k,i\in V_k).
    \end{align}
    This is because $\cb_{i,k}$ is the event $i\in V_k$. Therefore, we have $\P(\cb_{i,k}\cap\cc_i\cond \ca_k,i\in V_d)=0$ if $k\neq d$ and $\P(\cb_{i,k}\cap\cc_i\cond \ca_k,i\in V_d)=\P(\cc_i\cond \ca_k,i\in V_k)$ if $k=d$. Recall the definition of the set $\mathcal{P}^k_{1,i}$. We have
    \begin{align*}
        &\P(\cc_i\cond \ca_k,i\in V_k)\\
        &=\sum_{\substack{p'\in \mathcal{P}^k_{1,i}}}\P(\cc_i\cap\cd^{(1,i,k)}_{p'}\cond\ca_k,i\in V_k)\\
        &\le \sum_{\substack{p'\in \mathcal{P}^k_{1,i}}}\P(\cc_i\cond \cd^{(1,i,k)}_{p'},\ca_k,i\in V_k)\\
        &=\max_{\substack{p'\in \mathcal{P}^k_{1,i}}}\P(\cc_i\cond \ca_k, \{\text{path $p'$ exists in $\cg_1'$}\},\nonumber\\        
        &\quad\quad\quad\quad\quad\quad\quad\quad\{\text{no path in $\mathcal{P}^k_{1,i}$ with order higher than $p'$ exists in $\cg_1'$ }\},\nonumber\\
        &\quad\quad\quad\quad\quad\quad\quad\quad \{\text{no path between $1$ and $i$ with length less than $k$ exists in $\cg_1'$}\})\nonumber
    \end{align*}
    Similarly to the previous case, we can apply Harris inequality to obtain
    \begin{equation}
        \P(\cc_i\cond \ca_k,i\in V_k)\le \max_{\substack{p'\in \mathcal{P}^k_{1,i}}}\P(\cc_i\cond \text{path $p'$ exists in $\cg_1'$})
    \end{equation}
    Notice that path $p'$ can be viewed as a subgraph of $\bar{\cg}$ that satisfies the requirements in Lemma~\ref{lem:cyclefree_given_structure}. Therefore, Lemma~\ref{lem:cyclefree_given_structure} implies that
    \begin{equation}
    \label{eq:i=v_final}
        \P(\cc_i\cond \ca_k,i\in V_k)=O(n^{-1+\eps/4}).
    \end{equation}
    Finally,~\eqref{eq:ineqv_final} and~\eqref{eq:i=v_final} together imply that
    \begin{align}
        \P(\ce_{1,k}\cond \ca_k,v\in V_d)&\le \P(\cb_{i,k}\cap\cc_i\cond \ca_k,i\in V_d)+\sum_{i\neq v} \P(\cb_{i,k}\cap\cc_i|\ca_k,v\in V_{d})\nonumber\\
        &\le
       K(\lambda q)^k\log n\cdot O(n^{-1+\eps/4}),
    \end{align}
    which completes the proof of~\eqref{eq:e1_k} in the case of $k\neq 0$.

    \noindent\underline{\textbf{Case 2. $k= 0$:}}
    In this case, we want to show that
    \[
    \P(\ce_{1,0}\cond\ca_0,v\in V_d)\le K\log n\cdot O(n^{-1+\eps/4}).
    \]
    However, the set $S_{\cg_1'}(1,0)$ contains the vertex vertex $1$. Therefore, we have
    \[
    \P(\ce_{1,0}\cond\ca_0,v\in V_d)=\P(\cc_1\cond v\in V_d).
    \]
    It follows that
    \begin{align*}
        \P(\cc_1\cond v\in V_d)
        &=\sum_{\substack{p\in \mathcal{P}^d_{1,v}}}\P(\cc_1\cap\cd^{(1,v,d)}_{p}\cond v\in V_d)\\
        &\le \sum_{\substack{p\in \mathcal{P}^d_{1,v}}}\P(\cc_1 \cond \cd^{(1,v,d)}_{p},v\in V_d)\\
        &=\max_{\substack{p\in \mathcal{P}^d_{1,v}}}\P(\cc_1\cond \{\text{path $p$ exists in $\cg_1'$}\},\nonumber\\        
        &\quad\quad\quad\quad\quad\quad\{\text{no path in $\mathcal{P}^d_{1,v}$ with order higher than $p$ exists in $\cg_1'$ }\},\nonumber\\
        &\quad\quad\quad\quad\quad\quad \{\text{no path between $1$ and $v$ with length less than $d$ exists in $\cg_1'$}\})\nonumber\\
        &\le \max_{\substack{p\in \mathcal{P}^d_{1,v}}}\P(\cc_1\cond \{\text{path $p$ exists in $\cg_1'$}\}).
    \end{align*}
    It then follows by Lemma~\ref{lem:cyclefree_given_structure} that
    \begin{equation}
        \P(\ce_{1,0}\cond\ca_0,v\in V_d)=\P(\cc_1\cond v\in V_d)=O(n^{-1+\eps/4}),
    \end{equation}
which proves~\eqref{eq:e1_k} in the case of $k=0$.

\subsubsection{Proof of the unconditional probability bound~\eqref{eq:e1_uncond}}
Recall that for each $k\in \{0,\ldots,\dmax\}$, we define event 
\[
\ca_k=\{|N_{\cg_1'}(1,k)|\le K(\lambda q)^k\log n\}
\]
and
\[
\ce_{1,k}=\{\nexists i\in [n]:i\in S_{\cg_1'}(1,k)\text{ and }\text{the $2l$-neighborhood of $\bar{i}$ in $\bar{\cg}$  contains a cycle}\}.
\]
By Lemma~\ref{lem:neighborhood_size} in Appendix~\ref{appd:properties}, we have
\[
\P(\cap_{k=0}^{\dmax}\ca_k)\ge 1-\td{O}(n^{-1/3}).
\]
It follows by the chain rule that
\begin{align*}
    \P(\ce_1^c)&=\P(\cup_{k=0}^\dmax\ce_{1,k}^c)\\
        &\le \P((\cup_{k=0}^\dmax\ce_{1,k}^c)\cap (\cap_{k=0}^{\dmax}\ca_k))+\P((\cap_{k=0}^{\dmax}\ca_k)^c)\\
        &\le \P((\cap_{k=0}^{\dmax}\ca_k)^c)+\sum_{k=0}^{\dmax}\P(\ce_{1,k}^c|\ca_k)\\
        &=\td{O}(n^{-1/3})+\sum_{k=0}^{\dmax}\P(\ce_{1,k}^c|\ca_k).
\end{align*}
It now suffices to show that for each $k\in\{0,\ldots,\dmax\}$
\begin{equation}
    \label{eq:e1_k_uncond}
    \P(\ce_{1,k}^c|\ca_k)\le K(\log n)(\lambda q)^k\cdot O(n^{-1+\eps/4}).
\end{equation}
This is because~\eqref{eq:e1_k_uncond} implies
\begin{align*}
    \P(\ce_1^c)\le \td{O}(n^{-1/3})+O(n^{-1+\eps/4})\cdot\sum_{k=0}^\dmax K(\log n)(\lambda q)^k=O(n^{-\eps/3}),
\end{align*}
which completes the proof of~\eqref{eq:e1_uncond}. To prove~\eqref{eq:e1_k_uncond}, we again separately consider the cases of $k\neq 0$ and $k=0$.

\noindent\underline{\textbf{Case 1. $k\neq 0$:}} For each vertex $i\in [n]$, recall the definition of events $\cb_{i,k}$ and $\cc_i$ in the previous section. We have
\begin{align*}
    \P(\ce_{1,k}^c|\ca_k)&\le \sum_{i=2}^n \P(\cb_{i,k}\cap\cc_i\cond \ca_k)\\
    &\le \sum_{i=2}^n\P(\cc_i\cond\ca_k,i\in V_k)\cdot\P(\cb_{i,k}\cond \ca_k)\\
    &\le \frac{K(\log n)(\lambda q)^k}{n-1} \sum_{i=2}^n \P(\cc_i\cond\ca_k,i\in V_k),
\end{align*}
where the last inequality follows because we have $|S_{\cg_1'}(1,k)|\le K(\log n)(\lambda q)^k$ on the event $\ca_k$.
    It follows by~\eqref{eq:i=v_final} in the previous subsection that for any $i\in \{2,\ldots,n\}$ and $k\in [\dmax]$, we have
    \[
    \P(\cc_i\cond\ca_k,i\in V_k)=O(n^{-1+\eps/4}).
    \]
    This implies that
    \[
    \P(\ce_{1,k}^c|\ca_k)\le K(\log n)(\lambda q)^k\cdot O(n^{-1+\eps/4}),
    \]
    which proves~\eqref{eq:e1_k_uncond} in the case of $k\neq 0$.

    \noindent\underline{\textbf{Case 2. $k= 0$:}} In this case, we have
    \[
    \P(\ce_{1,0}^c\cond\ca_k)=\P(\cc_1),
    \]
    because the set $S_{\cg_1'}(1,0)$ only contains one single vertex $1$. We can set the graph $\ch$ in Lemma~\ref{lem:cyclefree_given_structure} as the trivial graph that only has one vertex and no edge. This implies that
    \[
    \P(\ce_{1,0}^c\cond\ca_k)=\P(\cc_1)=O(n^{-1+\eps/4}),
    \]
    and completes the proof of~\eqref{eq:e1_k_uncond} in the case of $k=0$.

    \subsection{Proof of Lemma~\ref{prop:E2}}
    We prove Lemma~\ref{prop:E2} in this section. In this proof, we take a different path from the proof in the previous section. We start by proving the unconditioned probability bound~\eqref{eq:e2_uncond}. We then proceed to prove the conditional probability bound~\eqref{eq:e2_cond}, by studying the multiplicative gap between $\P(\ce_2^c)$ and $\P(\ce_2^c\cond v\in V_d)$.

    \subsubsection{Proof of the unconditional probability bound~\eqref{eq:e2_uncond}}
Recall that we defined $\ce_2$ as the event that there exist no triplets $(i,j,k)\in [n]^3$ with $j\neq k$ that satisfy all of the following properties:
\begin{enumerate}[itemsep=6pt]
    \item $i \in V_{\le\dmax+1}$;
    \item 
    $L_{l-1}(\mathcal{T}^\mathrm{IC}_{i,l-1},\td{\ct}^{\setminus\td{j}}_{\td{k},l-1})> \exp\left(\frac{(\lambda s q)^{l-1}}{\log n}\right)$;
    \item $V_{\mathcal{T}^\mathrm{IC}_{i,l-1}}\cap V_{\td{\ct}^{\setminus\td{j}}_{\td{k},l-1}}=\emptyset.$
    \end{enumerate}

To simplify the notation, we use $\theta$ to denote the likelihood threshold $\exp\left(\frac{(\lambda s q)^{l-1}}{\log n}\right)$ in this proof. We also omit the superscript from notations $\mathbb{Q}^{(\lambda/s,qs,s)}_{l-1}$, $\mathbb{P}^{(\lambda/s,qs,s)}_{l-1}$ and $L^{(\lambda/s,qs,s)}_{l-1}$ as is clear from the context.
To bound the probability of $\P(\ce_2^c)$, we define the following events corresponding to the three properties above. For an index $i\in [n]$, define  $\mathrm{IN}_i=\{i\in V_{\le \dmax+1}\}$ as the event that $i$ is in the set $V_{\le\dmax+1}$
For a triplet $(i,j,k)\in [n]^3$, define 
\[
\mathrm{PASS}_{i,j,k}=\left\{L_{l-1}(\mathcal{T}^\mathrm{IC}_{i,l-1},\td{\ct}^{\setminus\td{j}}_{\td{k},l-1})> \theta\right\}
\]
as the event that $\mathcal{T}^\mathrm{IC}_{i,l-1}$ and $\td{\ct}^{\setminus\td{j}}_{\td{k},l-1}$ pass the tree correlation test,
and 
$$\mathrm{NO}_{i,j,k}=\left\{V_{\mathcal{T}^\mathrm{IC}_{i,l-1}}\cap V_{\td{\ct}^{\setminus\td{j}}_{\td{k},l-1}}=\emptyset\right\}$$
as the event that the two trees $\mathcal{T}^\mathrm{IC}_{i,l-1}$ and $\td{\ct}^{\setminus\td{j}}_{\td{k},l-1}$ have no overlapping in vertices.
Here, the subtree $\mathcal{T}^\mathrm{IC}_{i,l-1}$ is well-defined only if vertex $i$ is in $\tic$. When this not the case, we extend the definition by setting $\mathcal{T}^\mathrm{IC}_{i,l-1}$ as the trivial tree without any vertices, and let $L_{l-1}(\mathcal{T}^\mathrm{IC}_{i,l-1},\td{\ct}^{\setminus\td{j}}_{\td{k},l-1})=0$. We also define two auxiliary events
\[
\cn=\{|N_{\cg_1'}(1,\dmax+l)|\le n^{1-\epsilon/2}\}
\]
and 
\[
\bar{\cn}=\{|N_{\bar{\cg}}(\bar{i},l-1)|\le n^{\epsilon/4},\forall i\in [n]\}
\]
to control the neighborhood sizes in $\cg_1'$ and $\bar{\cg}$. It follows by Lemma~\ref{lem:neighborhood_size} that
\[
\P(\cn^c)=\td{O}(n^{-7/3}),
\]
and
\[
\P(\bar{\cn}^c)=\td{O}(n^{-4/3}).
\]
By the union bound, we have
\begin{align}
    \P(\ce_2^c)&\le \P(\ce_2^c\cap\cn\cap\bar{\cn})+\td{O}(n^{-4/3})\nonumber\\
    &\le \td{O}(n^{-4/3})+\sum_{(i,j,k)\in [n]^3:j\neq k}\P(\rmin_i\cap\pass_{i,j,k}\cap\no_{i,j,k}\cap\cn\cap\bar{\cn}).
\end{align}

Fix $i,j,k\in[n]^3$ with $j\neq k$. Let $\cx_{l-1}$ denote the set of all unlabeled rooted trees with depth up to $l-1$. We have
\begin{align}
    &\P(\rmin_i\cap\pass_{i,j,k}\cap\no_{i,j,k}\cap\cn\cap\barn)\nonumber\\
    &\le \sum_{t,\td{t}\in \cx_{l-1}}\P\left(\rmin_i\cap\no_{i,j,k}\cap\{\mathcal{T}^\mathrm{IC}_{i,l-1}\cong t\}\cap\{\td{\ct}^{\setminus\td{j}}_{\td{k},l-1}\cong\td{t}\}\cap\cn\cap\barn\right)\indi\{L_{l-1}(t,\td{t})> \theta\}\nonumber\\
    &\le \sum_{\substack{t,\td{t}\in \cx_{l-1}\\|V_t|\le n^{\epsilon/4}\\|V_{\td{t}}|\le n^{\epsilon/4}}}\P\left(\rmin_i\cap\no_{i,j,k}\cap\{\mathcal{T}^\mathrm{IC}_{i,l-1}\cong t\}\cap\{\td{\ct}^{\setminus\td{j}}_{\td{k},l-1}\cong\td{t}\}\cap\cn\right)\indi\{L_{l-1}(t,\td{t})> \theta\},\label{eq:tree_realizations}
\end{align}
where the last inequality follows because event $\barn$ implies both $\mathcal{T}^\mathrm{IC}_{i,l-1}$ and $\td{\ct}^{\setminus\td{j}}_{\td{k},l-1}$ have at most $n^{\eps/4}$ vertices.

To further bound this summation,
we take a slight detour to consider the probability of passing the likelihood test under the independent Galton--Watson tree distribution $\mathbb{Q}_{l-1}$. 
By the definition of $L_{l-1}$ in~\eqref{eq:LR}, we have 
\[
\E_{(\ct,\tdct )\sim \mathbb{Q}_{l-1}}[L_{l-1}(\ct,\tdct)]=
\sum_{t,\td{t}\in \cx_{l-1}}\mathbb{Q}_{l-1}(t,\td{t})\cdot L_{l-1}(t,\td{t})=1,
\]
and it follow by Markov's inequality that 
\begin{align}
    \sum_{\substack{t,\td{t}\in\mathcal{X}_{l-1}}}\mathbb{Q}_{l-1}(t,\td{t})\cdot\indi(L_{l-1}(t,\td{t})> \theta)
    &=\mathbb{Q}_{l-1}\left(L_{l-1}(\ct,\tdct)>\exp\left(\frac{(\lambda s q)^{l-1}}{\log n}\right)\right)\nonumber\\
    &\le \exp\left(-\frac{(\lambda s q)^{l-1}}{\log n}\right)=n^{-\omega(1)},\label{eq:LR_markov}
\end{align}
where the last inequality follows by our choice $l=\lfloor\sqrt{\log n}\rfloor$.

Given~\eqref{eq:LR_markov}, if we can bound the ratio 
\begin{equation}
\label{eq:const_ratio}
    \frac{\P\left(\rmin_i\cap\no_{i,j,k}\cap\{\mathcal{T}^\mathrm{IC}_{i,l-1}\cong t\}\cap\{\td{\ct}^{\setminus\td{j}}_{\td{k},l-1}\cong\td{t}\}\cap\cn\right)}{\mathbb{Q}_{l-1}(t,\td{t})}=O(1)
\end{equation}
for every pair $(t,\td{t})$ that satisfies $|V_t|\le n^{\eps/4}$ and $|V_{\td{t}}|\le n^{\eps/4}$. Then~\eqref{eq:tree_realizations} and~\eqref{eq:LR_markov} together imply that
\[
\P(\rmin_i\cap\pass_{i,j,k}\cap\no_{i,j,k}\cap\cn\cap\barn)=n^{-\omega(1)},
\]
and we get 
\[
\P(\ce_2^c)\le \td{O}(n^{-4/3})+n^3\cdot n^{-\omega(1)}=\td{O}(n^{-4/3}),
\]
which completes the proof of~\eqref{eq:e2_uncond}.

Therefore, in the rest of this proof, we focus on showing~\eqref{eq:const_ratio}.
As is clear of the context, we omit all the subscripts from notations $\rmin_i$ and $\no_{i,j,k}$. We also use short-hand notation $\ct$ and $\tdct$ for  $\mathcal{T}^\mathrm{IC}_{i,l-1}$ and $\td{\ct}^{\setminus\td{j}}_{\td{k},l-1}$ respectively.
Recall that $\mathbb{Q}_{l-1}$ is the product distribution of two independent Galton--Watson tree distributions, and we can write 
\[
\mathbb{Q}_{l-1}(t,\td{t})=\gw_{l-1,\lambda q}(t)\cdot \gw_{l-1,\lambda}(\td{t}).
\]
On the other hand, we can apply the chain rule to write
\begin{align*}
    &\P\left(\rmin\cap\no\cap\{\ct\cong t\}\cap\{\tdct\cong\td{t}\}\cap\cn\right)\\
    &=\P(\tdct\cong\td{t})\cdot \P\left(\rmin\cap\no\cap\{\ct\cong t\}\cap\cn\;\bigg|\; \{\tdct\cong\td{t}\}\right).
\end{align*}
Therefore, to prove~\eqref{eq:const_ratio}, it suffices to show both
\begin{equation}
    \label{eq:bfs_ratio}
    \frac{\P(\tdct\cong\td{t})}{\gw_{l-1,\lambda}(\td{t})}=O(1),
\end{equation}
and 
\begin{equation}
    \label{eq:subtree_ratio}
    \frac{\P\left(\rmin\cap\no\cap\{\ct\cong t\}\cap\cn\;\bigg|\; \{\tdct\cong\td{t}\}\right)}{\gw_{l-1,\lambda q}(t)}=O(1).
\end{equation}

\emph{\textbf{Proof of~\eqref{eq:bfs_ratio}.}}
The desired equation~\eqref{eq:bfs_ratio} essentially states that the law of a local BFS tree $\tdct$ in graph $\tdcg_2$ is within a constant factor from the Galton--Watson tree distribution $\gw_{l-1,\lambda}(\td{t})$. 
To bound the ratio, we consider a standard \emph{sequential exploration process} of sampling the BFS tree rooted at $\td{k}$ in the graph $\td{\cg}_2\setminus \td{j}$. The process begins by sampling the children of the root vertex $\td{k}$; the number of such children is distributed as $\bin(n-2,\lambda/n)$, since vertices $\td{j}$ and $\td{k}$ are removed from consideration. 
The exploration then proceeds breadth-first, following the canonical order of vertices within each depth. When a vertex $\td{u}$ is explored, the number of its children is distributed as $\bin(n-2-x_{\td{u}},\lambda /n)$, where $x_{\td{u}}$ denotes the number of vertices already uncovered by the exploration up to that point. Equivalently, $x_{\td{u}}$ is the sum of binomial random variables sampled in the previous steps. This sequential sampling process continues layer by layer up to depth $l-1$, and the resulting tree has the same distribution as $\tdct$.

To characterize the probability $\P(\tdct\cong\td{t})$, we introduce some auxiliary notation for the unlabeled tree $\td{t}$. For each vertex $\td{u}$ of $\td{t}$ at depth at most $l-2$, we fix an arbitrary ordering of its children. This induces a total order on the vertices at each depth $d$ of $\td{t}$, defined recursively as follows: a vertex is ranked higher if its parent is ranked higher among vertices at depth $d-1$; among vertices sharing the same parent, the order is determined by the prescribed ordering of siblings. For each $d\le l-2$, let $\td{c}_{d,r}$ denote the number of children of the $r$-th vertex at depth $d$ of $\td{t}$ and define
\[
\td{x}_{d,r}=\sum_{d'=0}^{d-1}\sum_{r'=1}^{|V_{\td{t},d'}|}\td{c}_{d',r'} +\sum_{r'=1}^{r-1}\td{c}_{d,r'}.
\]

With this notation, we have
\begin{align*}
    \P(\tdct\cong\td{t})=\prod_{d=0}^{l-2}\prod_{r=1}^{|V_{\td{t},d}|}\P(\bin(n-2-\td{x}_{d,r},\lambda/n)=\td{c}_{d,r}).
\end{align*}
An upper bound on the multiplicative gap between binomial and Poisson distributions is provided in Lemma~\ref{lem:binom-poi-gap} in Appendix~\ref{appd:technical}.
Applying this lemma yields
\begin{align*}
    \P(\tdct\cong\td{t})&\le \prod_{d=0}^{l-2}\prod_{r=1}^{|V_{\td{t},d}|}\exp\left(\frac{\lambda}{n}(2+\td{x}_{d,r}+\td{c}_{d,r})\right)\P(\poi(\lambda) =\td{c}_{d,r})\\
    &\stackrel{(a)}{\le } \gw_{l-1,\lambda}(\td{t})\cdot  \prod_{d=0}^{l-2}\prod_{r=1}^{|V_{\td{t},d}|}\exp\left(\frac{\lambda}{n}(2+|V_{\td{t}}|)\right)\\
    &\stackrel{(b)}{\le}\gw_{l-1,\lambda}(\td{t})\cdot\exp\left(\frac{\lambda|V_{\td{t}}|}{n}(2+|V_{\td{t}}|)\right)
\end{align*}
where (a) follows because we always have $\td{x}_{d,r}+\td{c}_{d,r}\le |V_{\td{t}}|$ and (b) follows because there are at most $|V_{\td{t}}|$ terms in the product. Under the assumption that $|V_{\td{t}}|\le n^{\eps/4}$, we have
\[
\exp\left(\frac{\lambda|V_{\td{t}}|}{n}(2+|V_{\td{t}}|)\right)=O(1),
\]
which completes the proof of~\eqref{eq:bfs_ratio}.

\emph{\textbf{Proof of~\eqref{eq:subtree_ratio}.}}
By the chain rule, it suffices to show that 
\begin{equation*}
    \frac{\P\left(\no\cap\{\ct\cong t\}\cap\cn\;\bigg|\; \{\td{\ct}\cong\td{t}\}\cap\rmin\right)}{\gw_{l-1,\lambda q}(t)}=O(1).
\end{equation*}

Recall the definition of quantities $\td{c}_{d,r}$ in the context of tree $\td{t}$. For tree $t$, we again fix an arbitrary ordering of the children of each vertices within depth $l-2$, and give a total order of the vertices at each depth. We use $c_{d,r}$ to denote the number of children of the $r$-th vertex at depth $d$ of $t$. We also introduce a few notations regarding the subtree $\ct$ in $\tic$. For each depth $d\in \{0,\ldots,l-2\}$, let $v_{d,r}$ denote the $r$-th vertex at depth $d$ of $\ct$. Here the ranking among the vertices at the same depth in $\ct$ follows the canonical ordering of the vertices in $\tic$. We use $C_{d,r}$ to denote the number of children of $v_{d,r}$, and use $X_{d,r}$ to denote the number of vertices that have already been explored in $\tic$ before generating the children of $v_{d,r}$. In other words, $X_{d,r}$ is the total number of vertices in $\tic$ that with depth smaller or equal to $v_{d,r}$ \emph{plus} the sum of the number of children of the vertices in $\tic$ that are in the same depth as $v_{d,r}$ with a higher canonical order (see Figure~\ref{fig:X} for an illustration). We use $\ca_{d,r}$ to denote the event that $v_{d,r}$ does not have any children that correspond to the vertices in $V_{\td{\ct}}$.

\begin{figure}[htbp]
    \centering
    \includegraphics[width=0.4\linewidth]{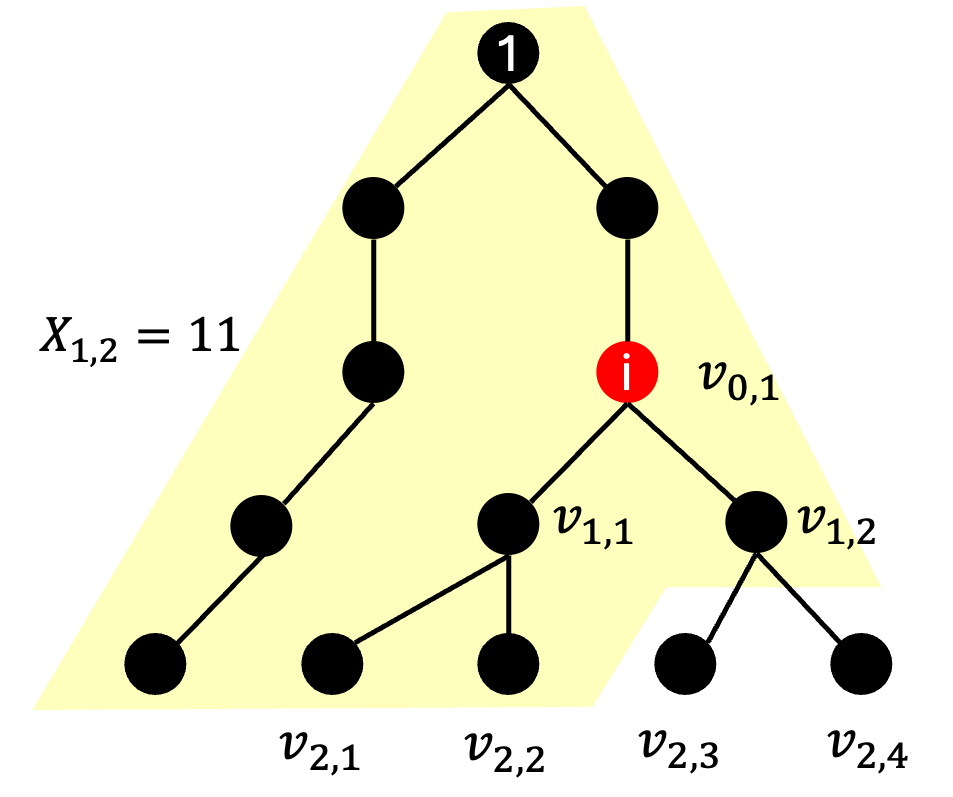}
    \caption{An illustration of the notations defined. The vertex with index $1$ at the top is the root the diffusion tree $\tic$. The red vertex with index $i$ (also vertex $v_{0,1}$ under our notation), is the root of the subtree $\ct$. The eleven vertex in the highlighted area are explored before generating the children of $v_{1,2}$, so we have $X_{1,2}=11$.}
    \label{fig:X}
\end{figure}

By these definitions, we have the following three properties:
\begin{enumerate}
    \item Event $\{\ct=t\}$ implies $\{C_{d,r}=c_{d,r}\}$ for every $d$ and $r$.
    \item Event $\cn$ implies $X_{d,r}\le n^{1-\eps/2}$ for every $d$ and $r$.
    \item Event $\no$ implies $\ca_{d,r}$ for every $d$ and $r$.
\end{enumerate}

By these properties, we have
\begin{align}
    &\P\left(\no\cap\{\ct\cong t\}\cap\cn\;\bigg|\; \{\td{\ct}\cong\td{t}\}\cap\rmin\right)\nonumber\\
    &\le \P\left(\cap_{(d,r)}\left(\ca_{d,r}\cap\{C_{d,r}=c_{d,r}\}\cap\{X_{d,r}\le n^{1-\eps/2}\}\right)\;\bigg|\; \{\td{\ct}\cong\td{t}\}\cap\rmin\right)\nonumber\\
    &= \prod_{(d,r)}\P\bigg(\ca_{d,r}\cap\{C_{d,r}=c_{d,r}\}\cap\{X_{d,r}\le n^{1-\eps/2}\}\nonumber\\
    &\quad\quad\quad\;\bigg|\; \{\td{\ct}\cong\td{t}\}\cap\rmin\cap\left(\cap_{(d',r')\prec(d,r)}\ca_{d',r'}\cap\{C_{d',r'}=c_{d',r'}\}\cap\{X_{d',r'}\le n^{1-\eps/2}\}\right)\bigg),\label{eq:NO}
\end{align}
where the last equality follows by the chain rule. Here, we use the short hand notation $\prod_{(d,r)}$ to denote the product $\prod_{d=0}^{l-2}\prod_{r=1}^{|V_{\ct,d}|}$, and similarly use $\cap_{(d,r)}$ to denote the intersection $\cap_{d=0}^{l-2}\cap_{r=1}^{|V_{\ct,d}|}$. The notation $\cap_{(d',r')\prec(d,r)}$ is a short-hand notation for $\cap_{(d',r'):d'<d \text{ or } (d'=d \text{ and } r'<r)}$. Now we focus on analyzing each term in this product. By the chain rule, we can further move events $\ca_{d,r}\cap \{X_{d,r}\le n^{1-\eps/2}\}$ to the conditioning and get
\begin{align*}
    &\P\bigg(\ca_{d,r}\cap\{C_{d,r}=c_{d,r}\}\cap\{X_{d,r}\le n^{1-\eps/2}\}\\
    &\quad\quad\quad\;\bigg|\; \{\td{\ct}\cong\td{t}\}\cap\rmin\cap\left(\cap_{(d',r')\prec(d,r)}\ca_{d',r'}\cap\{C_{d',r'}=c_{d',r'}\}\cap\{X_{d',r'}\le n^{1-\eps/2}\}\right)\bigg)\\
    &\le \P\bigg(\{C_{d,r}=c_{d,r}\}\;\bigg|\;\{\td{\ct}\cong\td{t}\}\cap\rmin\cap\left(\cap_{(d',r')\prec(d,r)}\ca_{d',r'}\cap\{C_{d',r'}=c_{d',r'}\}\cap\{X_{d',r'}\le n^{1-\eps/2}\}\right)\\
    &\quad\quad\quad\quad\quad\quad\quad\quad\quad\cap\ca_{d,r}\cap\{X_{d,r}\le n^{1-\eps/2}\}\bigg).
\end{align*}
Now consider the distribution of $C_{d,r}$ given all these conditioned events. In the exploration process of generating $\tic$, the already explored vertices (at most $n^{1-\eps/2}$ as we condition on $X_{d,r}\le n^{1-\eps/2}$) cannot be come children of $v_{{d,r}}$. Moreover, the conditioned event $\ca_{d,r}$ implies that none of the vertices corresponding to the set $V_{\tdct}$ can become a children of $v_{d,r}$. Therefore, the total number of excluded candidate for $v_{d,r}$'s children is at most $|V_{\tdct}|+n^{1-\eps/2}$. For the rest of the vertices in $\cg_1'$ that are not excluded, each of them has an edge to $v_{d,r}$ with probability $\lambda q/n$ independently. These facts implies that 
\[
C_{d,r}\sim\bin(n-X,\lambda q/n),
\]
where $X$ is the total number of vertices in $\cg_1'$ that has been explored or corresponds to a vertex in $V_{\tdct}$, and $X$ satisfies $X\le |V_{\tdct}|+n^{1-\eps/2}$ on the conditioned events. By Lemma~\ref{lem:binom-poi-gap}, we get 
\begin{align}
    &\P\bigg(\{C_{d,r}=c_{d,r}\}\;\bigg|\;\{\td{\ct}\cong\td{t}\}\cap\rmin\cap\left(\cap_{(d',r')\prec(d,r)}\ca_{d',r'}\cap\{C_{d',r'}=c_{d',r'}\}\cap\{X_{d',r'}\le n^{1-\eps/2}\}\right)\nonumber\\
    &\quad\quad\quad\quad\quad\quad\quad\cap\ca_{d,r}\cap\{X_{d,r}\le n^{1-\eps/2}\}\bigg)\nonumber\\
    &\le \exp\left(\frac{\lambda q}{n}\cdot(n^{1-\eps/2}+|V_\tdt|)\right)\cdot\P(\poi(\lambda q)=c_{d,r})\nonumber\\
    &\le \exp\left(\frac{\lambda q}{n}\cdot(n^{1-\eps/2}+n^{\eps/4})\right)\cdot \P(\poi(\lambda q)=c_{d,r}),\label{eq:each_binom}
\end{align}
where the last inequality follows because we assume $|V_{\td{t}}|\le n^{\eps/4}$. Finally,~\eqref{eq:NO} and~\eqref{eq:each_binom} imply 
\begin{align*}
    &\P\left(\no\cap\{\ct\cong t\}\cap\cn\;\bigg|\; \{\td{\ct}\cong\td{t}\}\cap\rmin\right)\\
    &\le \prod_{(d,r)}\exp\left(\frac{\lambda q}{n}\cdot(n^{1-\eps/2}+n^{\eps/4})\right)\cdot\P(\poi(\lambda q)=c_{d,r})\\
    &=\gw_{l-1,\lambda q}(t)\prod_{(d,r)}\exp\left(\frac{\lambda q}{n}\cdot(n^{1-\eps/2}+n^{\eps/4})\right)\\
    &\stackrel{(a)}{\le }\gw_{l-1,\lambda q}(t)\cdot\exp\left(\frac{\lambda q n^{\eps/4}}{n}\cdot(n^{1-\eps/2}+n^{\eps/4})\right)\\
    &=\gw_{l-1,\lambda q}(t)\cdot O(1),
\end{align*}
where (a) follows because there are at most $|V_{\td{t}}|$ terms in the product, and we assume $|V_{\td{t}}|\le n^{-\eps/4}$. This completes the proof of~\eqref{eq:subtree_ratio}.

\subsubsection{Proof of the conditional probability bound~\eqref{eq:e2_cond}}
To bound the conditional probability $\P(\ce^c_2\cond v\in V_d)$, we first connect it to a relevant quantity $\P(\ce^c_2\cond V_d\neq\emp)$.  We have
\begin{align*}
    \P(\ce_2^c\cond V_{d}\neq \emptyset)&\ge \P(\ce_2^c\cap\{v\in V_{d}\}\cond V_{d}\neq \emptyset)\\
    &=\P(v\in V_{d}\cond V_{d}\neq \emptyset)\cdot\P(\ce_2^c\cond v\in V_{d})\\
    &\ge \frac{1}{n}\cdot \P(\ce_2^c\cond v\in V_{d}),
\end{align*}
where the last inequality follows by symmetry. Therefore, we have
\[
 \P(\ce_2^c\cond u\in V_{d})\le n\cdot \P(\ce_2^c\cond V_{d}\neq \emptyset).
\]
By the chain rule, we have
\[
\P(\ce_2^c\cond V_{d}\neq \emptyset)\le \frac{\P(\ce^c_2)}{\P(V_d\neq \emp)}.
\]
We have already shown $\P(\ce_2^c)=\td{O}(n^{-4/3})$ in the previous subsection.
Therefore, to complete the proof of~\eqref{eq:e2_cond}, it suffices to show that $\P(V_d\neq \emp)=\Theta(1)$. This is implied by the following technical lemma, which is proven in Appendix~\ref{appd:properties}.
\begin{restatable}{lemma}{nonemptyprob}
        \label{lem:non-empty-neighborhood}
    Consider an \erdos--\renyi random graph $\cg\sim\mathrm{ER}(n,\frac{\mu}{n})$, where $\mu>1$ is a constant independent of $n$. Let $i$ be an arbitrary yet fixed vertex in $\cg$ and $d=c\log n$ for some $c\le\frac{1-\gamma}{\log\mu}$, where $\gamma$ is an arbitrarily small constant. Then for any $k\le d$, we have
    \[
    \P(S_{\cg}(i,k)\neq \emptyset)=\Theta(1).
    \]
    \end{restatable}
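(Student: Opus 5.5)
The upper bound $\P(S_\cg(i,k)\neq\emptyset)\le 1$ is trivial, so only a constant lower bound uniform in $k\le d$ is needed. The plan is to prove that, with probability bounded away from zero, the BFS exploration from $i$ survives for $d$ generations. Since $S_\cg(i,k)\neq\emptyset$ is implied by $S_\cg(i,d)\neq\emptyset$ (distance-$d$ vertices require distance-$k$ vertices on their shortest paths), it suffices to treat $k=d$.

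\textbf{Step 1: coupling with a slightly subcritical-adjusted branching process.} Fix a small $\delta>0$ with $\mu'=\mu(1-\delta)>1$. Run the BFS exploration of $\cg$ from $i$, stopping once $n^{1-\gamma/2}$ vertices have been discovered. As long as at most $m=n^{1-\gamma/2}=o(n)$ vertices are discovered, each explored vertex has at least $\bin(n-m,\mu/n)$ new neighbors, conditionally on the past. This stochastically dominates $\bin(n-m,\mu/n)$, which in turn dominates $\bin(\lceil\delta' n\rceil ,\cdot)$-type variables with mean $\ge \mu'$ for large $n$. Hence, up to the stopping time, the layer sizes $|S_\cg(i,h)|$ stochastically dominate the generation sizes of a Galton--Watson process $Z_h$ with offspring law $\bin(n-m,\mu/n)$, whose mean is at least $\mu'>1$.

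\textbf{Step 2: survival of the dominating process.} A supercritical Galton--Watson process with offspring mean $\ge\mu'>1$ and bounded second moment survives forever with probability at least some $\rho(\mu')>0$ uniformly in $n$. This holds because the binomial laws converge to $\poi(\mu(1-o(1)))$, and the extinction probability is continuous in the offspring law under this convergence; alternatively one can use the explicit bound $1-p^{\mathrm{ext}}$ for $\poi(\mu')$ via a generating-function comparison. On survival, $Z_h\ge1$ for all $h\le d$.

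\textbf{Step 3: handling the stopping time.} This is the main obstacle. The domination only holds before $n^{1-\gamma/2}$ vertices are discovered. If the exploration is stopped at that point before reaching depth $d$, I will argue that layer $d$ is still nonempty. The case where a large number of vertices have been discovered within depth $h<d$ is the easy one: the event already yields layers of size at least $1$ up to depth $h$, and the conclusion requires continuation from there. To avoid this subtlety, I would instead bound the probability that the stopping occurs at all. By a first-moment bound, $\E|N_\cg(i,d)|\le \sum_{h\le d}\mu^h=O(\mu^d)=O(n^{1-\gamma})$. Markov's inequality then gives $\P(|N_\cg(i,d)|\ge n^{1-\gamma/2})=O(n^{-\gamma/2})=o(1)$. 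So, except on an event of vanishing probability, the coupling is valid through depth $d$. Combining the steps gives $\P(S_\cg(i,d)\neq\emptyset)\ge\rho(\mu')-o(1)=\Theta(1)$, uniformly over $k\le d$.
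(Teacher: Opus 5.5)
Your proof is correct, but it takes a different route from the paper.

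\textbf{The paper's route.} It argues through the giant component. It cites the existence of a linear-size giant component and uses vertex symmetry to get $\P(i\text{ in the giant})\ge \zeta-o(1)$. It then uses Lemma~\ref{lem:neighborhood_size}, the Bennett-based layer-size bound, to get $|N_\cg(i,k-1)|\le n^{1-\gamma/2}$ with probability $1-o(1)$. Finally, it observes that a connected component strictly larger than the ball $N_\cg(i,k-1)$ must contain a vertex at distance exactly $k$ from $i$.

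\textbf{Your route.} You build survival directly. You couple the BFS exploration from below with a Galton--Watson process with $\bin(n-m,\mu/n)$ offspring, valid while fewer than $m=n^{1-\gamma/2}$ vertices are discovered. You then control the stopping time by a plain first-moment bound, $\E|N_\cg(i,d)|\le\sum_{h\le d}\mu^h=O(n^{1-\gamma})$, plus Markov.

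\textbf{Comparison.} Your version is self-contained: it needs neither the giant-component theorem, whose standard proof is essentially your exploration argument, nor concentration beyond Markov. It also makes transparent that the restriction $d\le(1-\gamma)\log n/\log\mu$ is exactly what keeps the explored ball sublinear. The paper's version is shorter given the cited results and reuses a lemma it needs elsewhere. Both yield a constant essentially equal to $1-p^{\mathrm{ext}}_{\mu}$: the giant fraction in the paper, and $1-p^{\mathrm{ext}}_{\mu'}$ with $\mu'$ arbitrarily close to $\mu$ in yours.

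\textbf{Tightening Step~2.} Justify it with the generating-function comparison you mention rather than by appealing to continuity of the extinction probability. Since $(1-x)^N\le e^{-xN}$ and $\mu(n-m)/n\ge\mu'$ for large $n$, you have $(1-\mu(1-s)/n)^{n-m}\le e^{-\mu'(1-s)}$ on $[0,1]$. Iterating both nondecreasing generating functions from $0$ then shows the extinction probability is at most $p^{\mathrm{ext}}_{\mu'}$, uniformly in $n$.
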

Since $V_d=S_{\cg_1'}(1,d)$ and $d\le \dmax\le \frac{1-\eps}{\log (\lambda q)}\log n$, Lemma~\ref{lem:non-empty-neighborhood} implies that $\P(V_d\neq\emp)=\Theta(1)$, which completes the proof.

\section{Proof of Proposition~\ref{prop:inclusion}}\label{appd:pf_fraction}

The proof of Proposition~\ref{prop:inclusion} proceeds in three main steps. In the first step, we analyze the performance of the upward pass phase of Algorithm~\ref{alg:mpmatch} when applied to a pair of trees drawn from the correlated tree distribution $\mathbb{P}_{d+l}^{(\lambda/s,qs,s)}$. We show that the sequence $(p_d)_{d=0}^\infty$, defined in Section~\ref{sec:model}, provides lower bounds on the probability of correctly matching the root under this correlated tree model.

In the second step, we establish Proposition~\ref{prop:inclusion} for vertices at large depths $d\in\{\dmax-l^+\ldots,\dmax\}$ of the diffusion tree $\tic$. We relate the joint distribution of the subtrees of $\tic$ and their corresponding neighborhood in $\tdcg_2$ to the correlated tree distribution $\mathbb{P}_{d}^{(\lambda/s,qs,s)}$.This coupling allows us to transfer the performance guarantees derived under the correlated tree model to the actual execution of our algorithm, showing that $p_d-o(1)$ serves as a lower bound on the probability of correct matching in the original graph setting.

In the final step, we extend the result to vertices at smaller depths $d\in\{0,\ldots,\dmax-l^+-1\}$ of the diffusion tree $\tic$. Intuitively, as the algorithm propagates upward, it aggregates information from an increasing number of descendants, and its performance can only improve. Consequently, the lower bounds established at larger depths continue to hold for all smaller depths. We make this intuition formal via an algorithm truncation argument, which shows that the probability of correct matching is non-decreasing as the algorithm proceeds upward.

\subsection{Step 1: Algorithm performance on correlated Galton--Watson trees}
Consider two rooted trees $\ct$ and $\td{\ct}$ with roots denoted as $r$ and $\td{r}$ respectively. For the purpose of the proof, we introduce an algorithm for matching the corresponding vertices in $\ct$ and $\tdct$ as follows. The algorithm can be viewed as an adaptation of the upward pass phase of Algorithm~\ref{alg:mpmatch} to the case where the inputs are a pair of trees instead of a diffusion tree and a graph. The algorithm takes the two trees $\ct,\tdct$ and a depth parameter $d\ge 0$ as input. For each vertex $i\in V_{\ct,\le d}$, the algorithm maintains a matching candidate set $\cm_i'$, which is initialized to be empty. The algorithm utilizes the following three matching criteria, which are adapted from the first three criteria in Algorithm~\ref{alg:mpmatch} to the case of two input trees.
\begin{enumerate}
    \item Two vertices $i\in V_{\ct}$ and $\td{j}\in V_{\tdct}$ satisfy Criterion $1'$ if there exist three children $i_1,i_2,i_3$ of $i$ in $\ct$ and three children $\td{j}_1,\td{j}_2,\td{j}_3$ of $\td{j}$ in $\tdct$ such that
    \[
    L_{l-1}(\ct_{i_k,l-1},\td{\ct}_{\td{j}_k,l-1})>\exp\left(\frac{(\lambda s q)^{l-1}}{\log n}\right), \forall k\in [3].
    \]
    \item Two vertices $i\in V_{\ct}$ and $\td{j}\in V_{\tdct}$ satisfy Criterion $2'$ if there exist two children $i_1,i_2$ of $i$ in $\ct$ and two children $\td{j}_1,\td{j}_2$ of $\td{j}$ in $\tdct$ such that 
    \[
    \td{j}_1\in \cm_{i_1}\text{ and }  L_{l-1}(\ct_{i_2,l-1},\td{\ct}_{\td{j}_2,l-1})>\exp\left(\frac{(\lambda s q)^{l-1}}{\log n}\right).
    \]
    \item Two vertices $i\in V_{\ct}$ and $\td{j}\in V_{\tdct}$ satisfy Criterion $3'$ if there exist two children $i_1,i_2$ of $i$ in $\ct$ and two neighbors $\td{j}_1,\td{j}_2$ of $\td{j}$ in $\tdct$ such that 
    \[
    \td{j}_k\in \cm_{i_k}, \forall k\in [2].
    \]
\end{enumerate}
This algorithm starts at depth $d$ of both trees. For each $i\in V_{\ct,d}$ and $\td{j}\in V_{\td{\ct},d}$, we add $\td{j}$ to the set $\cm_i'$ if $i$ and $\td{j}$ satisfy Criterion $1'$. Then the algorithm recursively moves upward layer by layer. At depth $d-1$, for each $i\in V_{\ct,d-1}$ and $\td{j}\in V_{\td{\ct},d-1}$, we add $\td{j}$ to the set $\cm_i'$ if $i$ and $\td{j}$ satisfy either one of Criteria $1'$-$3'$. This process is then repeated for each $d=d-1,\ldots,0$ until we reach the root vertex. We provide the pseudocode of this auxiliary algorithm in Algorithm~\ref{alg:auxiliary}.
\begin{algorithm2e}[t]
\caption{Auxiliary algorithm for matching two correlated Galton--Watson trees} 
\label{alg:auxiliary}
\SetAlgoNoEnd
\DontPrintSemicolon
\SetKwInOut{Input}{Input}
\SetKwInOut{Output}{Output}
\Input{Two rooted trees $\ct$ and $\td{\ct}$ with roots $r$ and $\td{r}$, depth parameter $d\ge 0$}
\Output{A matching candidate set $\cm_i'$ for each $i$ within depth $d$ of $\ct$ }
 $\cm_i' \gets \emptyset$ for each $i\in V_{\ct,\le d}$\\
\For{$k=d:-1:0$}{
\For{$i\in V_{\ct,k}$}{
\For{$\td{j}\in V_{\td{\ct},k}$}{
\If{$i$ and $\td{j}$ satisfy at least one of Criteria $1'$-$3'$}{
 $\cm_i'\gets\cm_i'\cup\{\td{j}\}$
}
}
}
}
 \Return{$\cm_i'$, for all $i\in V_{\tic,\le d}$}
\end{algorithm2e}

With the algorithm in hand, we say the two input trees $\ct$ and $\tdct$ satisfy property $P_d$ if executing Algorithm~\ref{alg:auxiliary} with input parameter $d$ on $\ct$ and $\tdct$ outputs $\cm_r=\{\td{r}\}$. In other words, Algorithm~\ref{alg:auxiliary} with input $d$ correctly matches the roots of the two vertices. The following proposition provides a bound on the probability of property $P_d$ under the correlated tree distribution.
\begin{lemma}
\label{prop:auxiliary}
    Suppose parameters $\lambda,s$ and $q$ satisfied the conditions stated in Theorem~\ref{thm:guarantee}.
    Let $d\ge 0$ and $(\ct,\tdct)\sim \mathbb{P}_{d+l}$. We have
    \begin{equation}
        \P(\text{$\ct$ and $\tdct$ satisfy $P_d$})\ge p_d.\label{eq:auxiliary}
    \end{equation}
\end{lemma}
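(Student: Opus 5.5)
We argue by induction on $d$, tracking the event that the root of the intersection tree is matched correctly using only intersection-tree children. Under $\mathbb{P}_{d+l}=\mathbb{P}^{(\lambda/s,qs,s)}_{d+l}$, the intersection tree $\ct^*$ is $\gw_{d+l,\lambda sq}$. The appended subtrees are independent of $\ct^*$ and of each other. For a vertex $u\in\ct^*$ at depth $k\le d$, its pair of rooted subtrees $(\ct_{u},\tdct_{u})$ truncated at depth $d+l-k$ is itself distributed as $\mathbb{P}_{d+l-k}$. Pairs at distinct children $c$ of a vertex in $\ct^*$ are i.i.d. given the number of intersection children.

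We first invoke Theorem~\ref{thm:tree_test} (Appendix~\ref{appd:LR_feasibility}). Under $qs^2>\alpha$, $\lambda sq>1$ and $\lambda>\lambda_0$, it shows that for $(\ct,\tdct)\sim\mathbb{P}_{l-1}$,
\[
\mathbb{P}_{l-1}\Big(L_{l-1}(\ct,\tdct)>\exp\big((\lambda sq)^{l-1}/\log n\big)\Big)\ge 1-p^{\mathrm{ext}}_{\lambda sq}-\eps'
\]
for $n$ large. Call a child $c$ of $u$ in $\ct^*$ \emph{weak-good} if its depth-$(l-1)$ subtrees pass the test. Call it \emph{strong-good} if moreover $\td c\in\cm'_c$, where $\td c$ denotes the copy of $c$ in $\tdct$.

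\textbf{Claim.} For every $u\in V_{\ct^*,k}$ with $k\le d$, the event $\{\td u\in\cm'_u\}$ has probability at least $p_{d-k}$. This event depends only on $(\ct_u,\tdct_u)$, and its conditional probability is the same for all such $u$.

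The base case is $k=d$. Only Criterion~$1'$ is available there, and three weak-good intersection children suffice for $\td u\in\cm'_u$. The intersection children form $\poi(\lambda sq)$ many, each weak-good independently with probability at least $1-p^{\mathrm{ext}}_{\lambda sq}-\eps'$. By Poisson thinning, the number of weak-good children stochastically dominates $\poi(\lambda sq(1-p^{\mathrm{ext}}_{\lambda sq}-\eps'))$, which gives $p_0$.

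For the inductive step, at depth $k<d$ each intersection child $c$ falls into one of three classes: strong-good, weak-but-not-strong-good, or neither. Strong-good means $\td c\in\cm'_c$ (probability at least $p_{d-k-1}$ by induction) together with passing the test. These class labels are i.i.d. across children. If $Y$ counts strong-good and $Z$ counts weak-only children, then $2Y+Z\ge3$ implies one of Criteria $1'$--$3'$ for the pair $(u,\td u)$:
\begin{itemize}
\item $Y\ge2$ gives Criterion $3'$;
\item $Y\ge1$ and $Z\ge1$ give Criterion $2'$, since a strong-good child also passes the test;
\item $Z\ge3$ gives Criterion $1'$.
\end{itemize}
By Poisson thinning into three classes, $Y$ and $Z$ are independent Poissons with means $\lambda sq\,\pi_s$ and $\lambda sq\,\pi_w$. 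Here $\pi_s\ge$ (something close to) $p_{d-k-1}$ and $\pi_s+\pi_w\ge 1-p^{\mathrm{ext}}_{\lambda sq}-\eps'$.

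The obstacle is that strong-good requires \emph{both} inclusion and passing the test, so $\pi_s$ is not obviously at least $p_{d-k-1}$. I would handle this in one of two ways. The first is to make strong-good mean only $\td c\in\cm'_c$ and check that inclusion already yields Criterion $2'$/$3'$ usage: Criterion $2'$ pairs an included child with a \emph{different} test-passing child, so a strong signal need not pass the test itself. Then $\{2Y+Z\ge3\}$ still implies success whenever $Y\ge2$, or $Y=1$ with $Z\ge1$, or $Z\ge3$. The second is to note that inclusion at a child forces $c$ to have at least two good grandchildren, and to argue that such a child typically passes the test as well.

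We then need monotonicity: $\P(2Y+Z\ge3)$ is nondecreasing when mass is moved from $Z$ to $Y$ while $\pi_s+\pi_w$ is held fixed, and nondecreasing in both means. This follows from coupling, since splitting one Poisson stream and upgrading weak to strong only increases $2Y+Z$. So we may lower $\pi_s$ to $p_{d-k-1}$ and $\pi_s+\pi_w$ to $1-p^{\mathrm{ext}}-\eps'$, obtaining exactly $p_{d-k}$. Some care is needed where $Z$'s mean in the definition of $p_d$ is written as $1-p^{\mathrm{ext}}-\eps'-p_{d-1}$, which requires $p_{d-1}\le1-p^{\mathrm{ext}}-\eps'$. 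This holds since inclusion implies having good children.

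Finally, to turn $\td r\in\cm'_r$ into $\cm'_r=\{\td r\}$, I would argue that any wrong candidate requires likelihood tests to pass on vertex-disjoint (independent) subtrees. By Markov's inequality each such test passes with probability $n^{-\omega(1)}$, and a union bound over the polylogarithmically many pairs shows these contribute nothing. If the statement is meant to hold exactly, I would instead define $P_d$ via the inclusion event, which is what the proof of Proposition~\ref{prop:inclusion} uses.
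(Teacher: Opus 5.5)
Your first resolution (calling a child strong when $\td{c}\in\cm'_{c}$ alone, and weak when it passes the test without being included) is exactly the paper's proof: the same three types, the same split of $\{2Y+Z\ge 3\}$ into Criteria $3'$, $2'$, $1'$, stochastic domination by auxiliary i.i.d.\ types, and Poisson thinning, and your depth-by-depth induction inside one tree is equivalent to the paper's induction on $d$ via the self-similarity of $\mathbb{P}$. Two side remarks: (i) your last paragraph is unnecessary (and a Markov/union-bound argument could only give $p_d-o(1)$), because Algorithm~\ref{alg:auxiliary} only compares vertices at equal depths, so $\cm'_r\subseteq\{\td{r}\}$ automatically and $P_d$ is the same event as $\td{r}\in\cm'_r$; (ii) $p_{d-1}\le a:=1-p^{\mathrm{ext}}_{\lambda sq}-\eps'$ does not follow from ``inclusion implies good children'', since the true inclusion probability is not bounded by the test-passing bound $a$, but it does follow from the recursion for $\eps'$ small enough: $p_0=g(0)$ and $p_d=g(p_{d-1})$ for the nondecreasing map $g(x)=\P(2Y+Z\ge3)$ with $Y\sim\poi(\lambda sq\,x)$, $Z\sim\poi(\lambda sq\,(a-x))$, whose value $g(a)=\P(\poi(\lambda sq\,a)\ge2)$ is at most $a$ once $\eps'$ is small, a point the paper also leaves implicit.
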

\begin{proof}[Proof of Lemma~\ref{prop:auxiliary}]
    We prove this proposition by induction. First let $d=0$. In this base case, Algorithm~\ref{alg:auxiliary} simply starts at the root level, and verifies whether the roots of the two trees satisfy Criterion $1'$.  Recall that $\ct^*$ denotes the intersection tree in the generation process of $\mathbb{P}_l$. Let $C_r$ denote the number of children of the root in the intersection tree $\ct^*$. By definition of $\mathbb{P}_l$, we have $C_r\sim\poi(\lambda sq)$. We use $x_1,\ldots,x_{C_r}$ to denote the copy of these children in $\ct$ and $\td{x}_1,\ldots,\td{x}_{C_r}$ to denote their counterparts in $\td{\ct}$.
    We then have that the $C_r$ tree pairs $(\ct_{x_1,l-1},\td{\ct}_{\td{x}_1,l-1}),\ldots,(\ct_{x_{C_r},l-1},\td{\ct}_{\td{x}_{C_r},l-1})$ are i.i.d with distribution $\mathbb{P}_{l-1}$.
    For a pair of trees with distribution $\mathbb{P}_{l-1}$, the probability that their likelihood ratio is larger than the threshold $\exp\left(\frac{(\lambda sq)^{l-1}}{\log n}\right)$ is bounded in Corollary~\ref{cor:tree_test_finite} in Appendix~\ref{appd:tree_test}. Setting the constant $\lambda_0$ in Theorem~\ref{thm:guarantee} to be $\lambda'(qs,s)$ in Corollary~\ref{cor:tree_test_finite} yields
    \[
    \P\left(L_{l-1}(\ct_{x_k,l-1},\td{\ct}_{\td{x}_k,l-1})>\exp\left(\frac{(\lambda sq)^{l-1}}{\log n}\right)\right)\ge  1-p^\mathrm{ext}_{\lambda sq}-\eps',
    \]
    for each $k\in [C_r]$, and these $C_r$ events are mutually independent. Let $X_r$ denote the number of these tree pairs satisfying 
    \[
    L_{l-1}(\ct_{x_k,l-1},\td{\ct}_{\td{x}_k,l-1})>\exp\left(\frac{(\lambda sq)^{l-1}}{\log n}\right).
    \]
    We then have 
    \[
    \cl(X_r|C_r=c_r) \stoge\bin(c_r,1-p^\mathrm{ext}_{\lambda sq}-\eps').
    \]
    From Poisson thinning, we get that
    \[
    \cl(X_r)\stoge\poi(\lambda sq(1-p^\mathrm{ext}_{\lambda sq}-\eps')).
    \]
    It then follows that 
    \begin{align*}
        \P(\td{r}\in\cm_r')&\ge\P(X_r\ge 3)\\
        &\ge \P(\poi(\lambda sq(1-p^\mathrm{ext}_{\lambda sq}-\eps'))\ge 3)\\
        &=p_{0},
    \end{align*}
    which completes the proof of the base case.

    For the induction step, suppose the claim in the proposition holds for the case of $d-1$, and let $(\ct,\td{\ct})\sim \mathbb{P}_{d+l}$. We again use $C_r$ to denote the number of children of the root node in the intersection tree $\ct^*$, and we have $C_r\sim\poi(\lambda sq)$. Let us condition on the event that $C_r=c_r$. We use $x_1,\ldots,x_{c_r}$ to denote the copy of these children in $\ct$ and $\td{x}_1,\ldots,\td{x}_{c_r}$ to denote their counterparts in $\td{\ct}$, and we have that the $c_r$ tree pairs $(\ct_{x_1,d-1},\td{\ct}_{\td{x}_1,d-1}),\ldots,(\ct_{x_{c_r},d-1},\td{\ct}_{\td{x}_{c_r},d-1})$ are i.i.d with distribution $\mathbb{P}_{d+l-1}$. For each $k\in [c_r]$, we define a random variable $t_k$ serving as an indicator of the type of vertex pair $(x_k,\td{x}_k)$:
    \[
    t_k=\begin{cases}
        2, \text{ if } \td{x}_k\in\cm_{x_k}\\
        1, \text{ if } L_{l-1}(\ct_{x_k,l-1},\td{\ct}_{\td{x}_k,l-1})>\exp\left(\frac{(\lambda sq)^{l-1}}{\log n}\right)\text{ and }\td{x}_k\notin\cm_{x_k}\\
        0, \text{ if } L_{l-1}(\ct_{x_k,l-1},\td{\ct}_{\td{x}_k,l-1})\le\exp\left(\frac{(\lambda sq)^{l-1}}{\log n}\right)\text{ and }\td{x}_k\notin\cm_{x_k}.
    \end{cases}
    \]
    By the independence between tree pairs $(\ct_{x_1,d-1},\td{\ct}_{\td{x}_1,d-1}),\ldots,(\ct_{x_{c_r},d-1},\td{\ct}_{\td{x}_{c_r},d-1})$, we know that $t_1,\ldots,t_k$ are i.i.d random variables.
    Let $\beta_0:=\P(t_k=0|C_r=c_r)$, $\beta_1:=\P(t_k=1|C_r=c_r)$ and $\beta_2:=\P(t_k=2|C_r=c_r)$. By our induction hypothesis, we know that 
    \begin{equation}
    \label{eq:induction}
        \beta_2\ge p_{d-1}.
    \end{equation}
     By Corollary~\ref{cor:tree_test_finite}, we know that
     \begin{equation}
         \label{eq:by_cor}
         \beta_1+\beta_2\ge 1-p^\mathrm{ext}_{\lambda sq}-\eps' .
     \end{equation}
     Let $Y_r=|\{k\in [c_r]:t_k=2\}|$, $Z_r=|\{k\in [c_r]:t_k=1\}|$ and $W_r=|\{k\in [c_r]:t_k=0\}|$. It then follows that $2Y_r+Z_r=\sum_{k=1}^{c_r}t_k$, and
    \[
    Y_r,Z_r,W_r|C_r=c_r\sim \mathrm{Multi}(c_r,\beta_2,\beta_1,\beta_0).
    \]
    We claim that the event $2Y_r+Z_r\ge 3$ implies that $\td{r}\in\cm_r$. This is because when $2Y_r+Z_r\ge 3$, at least one of the following three must hold true:
    \begin{enumerate}
        \item $Y_r\ge 2$: This implies that $r$ and $\td{r}$ satisfy Criterion $3'$, which implies that $\td{r}\in\cm_r$.
        \item $Y_r\ge 1$ and $Z_r\ge 1$: This implies that $r$ and $\td{r}$ satisfy Criterion $2'$, which implies that $\td{r}\in\cm_r$.
        \item $Z_r\ge 3$: This implies that $r$ and $\td{r}$ satisfy Criterion $1'$, which implies that $\td{r}\in\cm_r$.
    \end{enumerate}
    As a result, we have $\P(\td{r}\in\cm_r|C_r=c_r)\ge \P(2Y_r+Z_r\ge 3|C_r=c_r)$ for any $c_r$, and we will now focus on bounding this conditional probability. Towards this goal, we define a sequence of auxiliary i.i.d random variables $t_1',\ldots,t_{c_r}'$ with distribution
    \[
    \P(t_k'=i)=\begin{cases}
       p_{d-1}\;\;\text{when}\;\; i=2\\
        1-p^\mathrm{ext}_{\lambda sq}-\eps'-p_{d-1}\;\;\text{when}\;\; i=1\\
        p^\mathrm{ext}_{\lambda sq}+\eps'\;\;\text{when}\;\; i=0.
    \end{cases}
    \]
    From~\eqref{eq:induction} and~\eqref{eq:by_cor}, we have $t_k\stoge t_k'$ for each $k\in [c_r]$. Let $Y_r'=|\{k\in [c_r]:t_k'=2\}|$, $Z_r'=|\{k\in [c_r]:t_k'=1\}|$ and $W_r'=|\{k\in [c_r]:t_k'=0\}|$. It then follows that $2Y_r'+Z_r'=\sum_{k=1}^{c_r}t_k'$, and
    \[
    Y_r',Z_r',W_r'|C_r=c_r\sim \mathrm{Multi}(c_r, p_{d-1},1-p^\mathrm{ext}_{\lambda sq}-\eps'-p_{d-1}, p^\mathrm{ext}_{\lambda sq}+\eps').
    \]
    We know that $2Y_r+Z_r\stoge 2Y_r'+Z_r'$. It then follows that
    \[
    \P(\td{r}\in\cm_r|C_r=c_r)\ge \P(2Y_r+Z_r\ge 3|C_r=c_r)\ge \P(2Y_r'+Z_r'\ge 3|C_r=c_r)
    \]
    holds for every $c_r$. So we have
    \[
    \P(\td{r}\in\cm_r')\ge \P(2Y_r'+Z_r'\ge 3).
    \]
    Recall that $C_r\sim\poi(\lambda sq)$. By Poisson thinning, we know that $Y'_r$ and $Z'_r$ are independent Poisson random variables with means $p_{d-1}$ and $1-p^\mathrm{ext}_{\lambda sq}-\eps'-p_{d-1}$ respectively, and we finally get
    \[
    \P(\td{r}\in\cm_r')\ge p_{d},
    \]
    by the definition of $p_{d}$.
\end{proof}

\subsection{Step 2: Proof of Proposition~\ref{prop:inclusion} at large depths}
In this section, we prove~\eqref{eq:inclusion} in Proposition~\ref{prop:inclusion} for depth $d\in\{\dmax-l^+,\ldots,\dmax\}$. In particular, we will show that
\begin{equation}
\label{eq:fraction_large_d}
\P(\td{v}\in \cm_v\cond v\in V_d)\ge p_{\dmax-d}-O(n^{-\eps/3})
\end{equation}
for each $d\in\{\dmax-l^+,\ldots,\dmax\}$. Towards this goal, we will first establish a sufficient condition for the event $\td{v}\in\cm_v$.

Consider a vertex $v$ at depth $d$ of $\tic$. Let $\Nup:=V_{\le d}\setminus\{v\}$ denote the set of other vertices in $\tic$ within depth $d$, and let $\tdNup$ denote the set of vertices in $\tdcg_2$ corresponding to $\Nup$. We define rooted graph 
$\ch$
(resp. $\td{\ch}$) as the induced subgraph on the $(\dmax-d+l)$-neighborhood of $v$ (resp. $\td{v}$) in the graph $\cg_1'\setminus\Nup$ (resp. $\tdcg_2\setminus\tdNup$), with $v$ (resp. $\td{v}$) as its root. We claim that the event $\td{v}\in\cm_v$ holds if the following three conditions are satisfied:
\begin{enumerate}
    \item $\ch$ and $\td{\ch}$ are both trees, and they satisfy the property $P_{\dmax-d}$.
    \item In graph $\tdcg_2$, there exist no edges between the set $\tdNup$ and any non-root vertex of $\td{\ch}$.
    \item $\ch=\ct^\mathrm{IC}_{v,\dmax-d+l}$, where $\ct^{\mathrm{IC}}_{v,\dmax-d+l}$ is the subtree in $\tic$ rooted at $v$ with depth up to $\dmax-d+l$.
\end{enumerate}
\underline{\textbf{Proof of the claim}:} 
To prove the claim, we essentially want to show that if running Algorithm~\ref{alg:auxiliary} with parameter $\dmax-d$ on the two trees $\ch$ and $\td{\ch}$ would output $\td{v}\in\cm_v'$, then Algorithm~\ref{alg:mpmatch} would add vertex $\td{v}$ in to the set $\cm_v$.
Recall that the main difference between these two algorithms is that Criteria $1$ and $2$ in Algorithm~\ref{alg:mpmatch} take subtrees in $\tic$ and the BFS trees in $\tdcg_2$ as input to likelihood ratio tests, while Criteria $1'$ and $2'$ in Algorithm~\ref{alg:auxiliary} take subtrees in the two input trees $\ch$ and $\td{\ch}$ as inputs to likelihood ratio tests.
Consider a vertex $\td{i}$ in $\td{\ch}$ within depth $\dmax-d+1$, and let $\td{j}$ denote its parent. Here the depth and the parent in the graph are well-defined because $\td{\ch}$ is a tree.
By the second assumption, the subtree $\td{\ch}_{\td{i},l-1}$ rooted at $\td{i}$ in $\td{\ch}$ up to depth $l-1$ is exactly the BFS tree $\td{\ct}^{\setminus \td{j}}_{\td{i},l-1}$ in graph $\tdcg_2$. Meanwhile, by the third assumption, we have $\ct_{v,\dmax-d+l}^\rmic=\ch$, and therefore the subtrees in $\ch$ exactly appear as subtrees in $\tic$. Moreover, Algorithm~\ref{alg:auxiliary} starts at depth $\dmax-d$ of the tree $\ch$, which corresponds to the depth $\dmax$ of $\tic$ because the root node $u$ is at depth $d$ in $\tic$. These jusmtify that if Algorithm~\ref{alg:auxiliary} adds a vertex $\td{j}$ in $\td{\ch}$ to $\cm_i'$, then Algorithm~\ref{alg:mpmatch} will also add $\td{j}$ to the set $\cm_i$. Since $\ch$ and $\td{\ch}$ satisfy property $P_{\dmax-d}$, we know that Algorithm~\ref{alg:mpmatch} would include $\td{v}$ in the set $\cm_v$. This completes the proof of the claim.

To prove~\eqref{eq:fraction_large_d}, it now suffices to bound the probability for the three conditions to hold, given the event $u\in V_d$. In particular, we will show that
\begin{equation}
    \label{eq:condition1}
    \P(\text{Condition 1 holds}\cond v\in V_d)\ge p_{\dmax-d}-O(n^{-\eps/3}),
\end{equation}
\begin{equation}
    \label{eq:condition2}
    \P(\text{Condition 2 holds}\cond v\in V_d)\ge 1-O(n^{-\eps/3}),
\end{equation}
and
\begin{equation}
    \label{eq:condition3}
     \P(\text{Condition 3 holds}\cond v\in V_d)\ge 1-O(n^{-\eps/3}),
\end{equation}
which together imply the desired inequality~\eqref{eq:fraction_large_d}.
\subsubsection{Proof of~\eqref{eq:condition1}}
We first provide a high probability upper bound on the size of $\Nup$. Let $\ca=\{|\Nup|\le n^{1-2\eps/3}\}$. It follows by Lemma~\ref{lem:neighborhood_size_cond} that
\begin{equation}
\label{eq:size_given_uvd}
\P(\ca^c\cond v\in V_d)=O(n^{-\eps/3}).
\end{equation}
Now suppose we are given a specific realization of the set $\Nup$. We use $\hat{\mathbb{P}}^{\setminus\Nup}_{\dmax-d+l}$ to denote the joint law of the two trees $\ch$ and $\td{\ch}$. We have the following proposition that bounds the total variation distance between the two distributions $\hat{\mathbb{P}}^{\setminus\Nup}_{\dmax-d+l}$ and $\mathbb{P}_{\dmax-d+l}$.
\begin{restatable}{proposition}{proptreetv}
    \label{prop:tree_tv}
    Suppose $|\Nup|\le n^{1-\gamma}$ for some constant $0<\gamma<1/4$ and $d$ satisfies $\dmax-l^+\le d\le \dmax$. Then there exists a coupling $\sigma$ between $\hat{\mathbb{P}}^{\setminus\Nup}_{\dmax-d+l}$ and $\mathbb{P}_{\dmax-d+l}$ such that 
    \[
    \P_{((\ch,\td{\ch}),(\ct,\td{\ct}))\sim\sigma}(\ch\cong\ct,\td{\ch}\cong\td{\ct} )\ge 1-\td{O}(n^{-3\gamma/4}).
    \]
\end{restatable}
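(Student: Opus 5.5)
We will build the coupling by a joint breadth-first exploration of the $(\dmax-d+l)$-neighborhoods of $v$ in $\cg_1'\setminus\Nup$ and of $\td v$ in $\tdcg_2\setminus\tdNup$, run in lockstep with the generative description of $\mathbb{P}_{\dmax-d+l}$. Set $D:=\dmax-d+l\le l^++l=O((\log n)^{3/4})$. By~\eqref{eq:asym_ER_pair}, for each unexplored index $j$ the pair of edge indicators from the current vertex $i$ to $j$ in $(\cg_1',\tdcg_2)$ is $(1,1)$, $(1,0)$ or $(0,1)$ with probabilities $\lambda sq/n$, $\lambda q(1-s)/n$ and $\lambda(1-sq)/n$. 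These are the rates of the three child types in $\mathbb{P}^{(\lambda/s,qs,s)}$: intersection children at rate $\lambda sq$, $\ct$-only children at rate $\lambda q(1-s)$, and $\td\ct$-only children at rate $\lambda(1-sq)$.

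The exploration proceeds as follows.
\begin{itemize}
\item An \emph{intersection} vertex (present in both neighborhoods) reveals multinomial counts of the three child types from the $n-|\Nup|-X$ still-available indices, where $X$ is the number of indices explored so far.
\item A vertex exclusive to $\ch$ reveals only its $\cg_1'$-children, at rate $\lambda q$.
\item A vertex exclusive to $\td\ch$ reveals only its $\tdcg_2$-children, at rate $\lambda$.
\end{itemize}
At each step we couple these counts with the corresponding Poisson variables of the GW construction using the maximal coupling. The counts are $\bin(n-|\Nup|-X,\cdot/n)$ (or multinomial), so by Lemma~\ref{lem:binom-poi-gap}, or the standard bound $\dtv(\bin(m,\mu/n),\poi(\mu))\le \mu(n-m)/n+\mu^2/n$, each step costs $O((|\Nup|+X)/n)=O(n^{-\gamma})$ as long as $X\le n^{\gamma/4}$.

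We then control the bad events.
\begin{itemize}
\item \textbf{(a) Size.} With $\bar\lambda$ constant and depth $D=o(\log n)$, the explored set has at most $n^{\gamma/4}$ vertices except with probability $\td O(n^{-1})$. This follows from Lemma~\ref{lem:neighborhood_size} applied to $\bar\cg$, since $\bar\lambda^{D}=n^{o(1)}$.
\item \textbf{(b) Total coupling error.} Summing the per-step error over at most $n^{\gamma/4}$ steps gives $O(n^{-3\gamma/4})$.
\item \textbf{(c) Non-tree events.} There must be no edge in either graph from a newly revealed vertex back to an already discovered vertex, and no within-layer edges. In particular, a vertex must not be reached in $\ch$ through a $\cg_1'$-exclusive edge while also being reached in $\td\ch$ through a $\tdcg_2$-exclusive edge from a different parent. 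Each such event is a specific extra edge among at most $n^{\gamma/4}$ explored vertices in $\bar\cg$. Its probability is therefore at most $(n^{\gamma/4})^2\bar\lambda/n=O(n^{-1+\gamma/2})$, which is dominated by the previous term.
\end{itemize}
On the complement of these events the explored structures are trees, and each one is isomorphic to the GW pair produced by the coupled Poisson variables. The truncation at depth $D$ is respected: exclusive subtrees hanging at depth $k$ grow as $\gw_{D-k,\lambda q}$ and $\gw_{D-k,\lambda}$, as in the definition of $\mathbb{P}$. This gives $\P(\ch\cong\ct,\td\ch\cong\td\ct)\ge 1-\td O(n^{-3\gamma/4})$.

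We expect the main obstacle to be step (c): the bookkeeping of overlap patterns peculiar to the two-graph exploration. A vertex may be discovered in one neighborhood as exclusive, and later the same index may appear in the other neighborhood through a different path. This breaks the independence of the exclusive subtrees required by $\mathbb{P}$. We plan to handle it by exploring the union graph $\bar\cg$ and recording edge types. Any such coincidence forces an extra $\bar\cg$-edge between explored vertices, that is, a cycle in the $D$-neighborhood of $\bar v$. The probability of this is bounded by the union bound above, or by Lemma~\ref{lem:cyclefree_given_structure} with empty $\ch$. We will also check that removing $\Nup$ only changes the available pool size, so it enters solely through the $|\Nup|/n\le n^{-\gamma}$ term. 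This requires conditioning on a fixed realization of $\Nup$, and the edges within the complement of $\Nup$ are still independent given the exploration history that determined $\Nup$. That holds because $\Nup$ is determined by $\cg_1'$-edges incident to $V_{\le d-1}$ only. We would verify this independence carefully, since $v$'s own edges to $V_{\le d}$ have been partly revealed.
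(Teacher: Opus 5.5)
Your proposal is correct and is essentially the paper's argument. Both run a joint breadth-first exploration of $\ch$ and $\td\ch$, couple each processed vertex's binomial or multinomial child counts to the Poisson counts of $\mathbb{P}_{\dmax-d+l}$ at cost $O(n^{-\gamma})$ per vertex over $O(n^{\gamma/4})$ vertices, and use union bounds on the size, non-tree and cross-overlap events; the paper just organizes the latter layer by layer (events $\ca_k,\cb_k,\cc_k,\cd_k,\ci_k$ and Lemma~\ref{lem:Ek_recursive}) instead of your single extra-edge/cycle bound in $\bar\cg$.
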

We defer the proof of this proposition to Appendix~\ref{appd:proof_tv}.
By Lemma~\ref{prop:auxiliary} and Proposition~\ref{prop:tree_tv}, we have
\[
\P_{(\ch,\td{\ch})\sim \hat{\mathbb{P}}^{\setminus\Nup}_{\dmax-d+l}}(\ch,\td{\ch}\text{ satisfy property $P_{\dmax-d}$})\ge p_{\dmax-d}-\td{O}(n^{\eps/2}),
\]
for any $\Nup$ with $|\Nup|\le n^{1-2\eps/3}$. 
This implies that 
\begin{equation*}
    \P(\text{Condition 1 holds}\cond v\in V_d,\ca)\ge p_{\dmax-d}-\td{O}(n^{\eps/2}).
\end{equation*}
With this, we finally get 
\begin{align*}
    &\P(\text{Condition 1 holds}\cond v\in V_d)\\
    &\le \P(\text{Condition 1 holds}\cond v\in V_d,\ca)\cdot\P(\ca\cond v\in V_d)\\
    &=\P(\text{Condition 1 holds}\cond v\in V_d,\ca)\cdot(1-\P(\ca^c\cond v\in V_d))\\
    &\le \P(\text{Condition 1 holds}\cond \{v\in V_d\}\cap \ca)-\P(\ca^c\cond v\in V_d)\\
    &=p_{\dmax-d}-O(n^{-\eps/3}),
\end{align*}
which completes the proof of~\eqref{eq:condition1}.

\subsubsection{Proof of~\eqref{eq:condition2}} 
Recall that we defined event $\ca=\{|\Nup|\le n^{1-2\eps/3}\}$, and we have $\P(\ca^c\cond v\in V_d)=O(n^{-\eps/3})$. Notice that $\dmax-d+l=o(\log n)$. We can apply Lemma~\ref{lem:neighborhood_size} to get
\[
\P(|V_{\td{\ch}}|\ge n^{\eps/8}\cond \{v\in V_d\}\cap\ca)=O(n^{-4/3}).
\]
Between a vertex in $\tdNup$ and a non-root vertex in $\td{\ch}$, the probability of having an edge is $O(n^{-1})$. With the high probability bounds on $|V_{\td{\ch}}|$ and $|\Nup|$, we can apply the union bound over all the vertex pairs and get
\[
\P(\text{Condition 2 holds}\cond v\in V_d)\ge 1-O(n^{-\eps/3}+n^{-4/3}+n^{-1}\cdot n^{1-2\eps/3}\cdot n^{\eps/8})=1-O(n^{-\eps/3}).
\]

\subsubsection{Proof of~\eqref{eq:condition3}}
For each vertex $i\in V_d\setminus\{v\}$, let $S_i$ denote the set of vertices within distance $\dmax-d+l$ in the graph $\cg_1'\setminus(V_{\le d}\setminus\{i\})$. Define $S=\cup_{i\in V_d\setminus\{v\}}S_i$. We claim $\ch=\ct^\rmic_{\dmax-d+l}$ if the following two properties hold:
\begin{enumerate}
    \item $\ch$ is a tree;
    \item $V_{\ch}\cap S=\emp$.
\end{enumerate}
To see why these two properties imply the desired event $\ch=\ct^\rmic_{\dmax-d+l}$, notice that the property $V_{\ch}\cap S=\emp$ implies that none of the vertices in $\ch$ can be added into the subtree rooted at any vertex $i\in V_d\setminus\{v\}$. Meanwhile, those vertices are within distance $\dmax-d+l$ from $v$. Therefore, all of the vertices in $\ch$ have to be included in the subtree $\ct^\rmic_{v,\dmax-d+l}$. Because $\ch$ is a tree, all of its edges are also preserved into $\ct^\rmic_{v,\dmax-d+l}$. Therefore, we have $\ch=\ct^\rmic_{v,\dmax-d+l}$. 

It now suffices to bound $\P(\text{$\ch$ is a tree}\cond v\in V_d)$ and $\P(V_{\ch}\cap S=\emp\cond v\in V_d)$.

From~\eqref{eq:size_given_uvd} and Proposition~\ref{prop:tree_tv}, we have
\begin{align}
    \P(\text{$\ch$ is not a tree}\cond v\in V_d)&\le O(n^{-\eps/3})+\P(\text{$\ch$ is not a tree}\cond \{v\in V_d\}\cap \ca)\nonumber\\
    &\le O(n^{-\eps/3})+\td{O}(n^{-\eps/2})\nonumber\\
    &=O(n^{-\eps/3}).\label{eq:not_tree}
\end{align}

Now suppose we are given a specific realization of the set $\Nup$ with $|\Nup|\le n^{1-2\eps/3}$. For each vertices $i\in V_d\setminus\{v\}$, we define event $\cb_i=\{|S_i|\le n^{\eps/8}\}$. It follows by Lemma~\ref{lem:neighborhood_size} that
\[
\P(\cb_i^c\cond v\in V_d,\Nup)=\td{O}(n^{-4/3}).
\]
Since there are at most $ n^{1-2\eps/3}$ vertices in  the set $V_d\setminus\{u\}$, we can apply the union bound to get
\[
\P(\cup_{i\in V_d\setminus\{v\}}\cb_i^c\cond v\in V_d,\Nup)=\td{O}(n^{-1/3}),
\]
which further implies that
\[
\P(|S|\ge n^{1-\eps/2}\cond v\in V_d,\Nup)=\td{O}(n^{-1/3}).
\]
Suppose we are further conditioned on a specific realization of the set $S$ with $|S|\le n^{1-\eps/2}$. For each $k\in [\dmax-d+l]$, we define events
\[
\cc_k=\left\{|S_{\ch}(v,k)|\le 4\log n(\lambda q)^k\prod_{h=0}^k(1+(\lambda q)^{-h/2})\right\}
\]
and 
\[
\cd_k=\{S_{\ch}(v,k)\cap S=\emp\}.
\]
It follows by the definition that $\cap_{k=1}^{\dmax-d+l}\cd_k$ implies that $V_{\ch}\cap S=\emp$. We can write
\begin{align}
    &\P(V_{\ch}\cap S\neq \emp\cond v\in V_d,\Nup,S)\nonumber\\
    &\le \P(\cup_{k=1}^{\dmax-d+l}(\cc_k^c\cup\cd_k^c)\cond v\in V_d,\Nup,S))\nonumber\\
    &\le \sum_{k=1}^{\dmax-d+l}\P(\cc_k^c\cup\cd^c_k\cond v\in V_d,\Nup,S,\cap_{h=1}^{k-1}(\cc_h\cap\cd_h)).\label{eq:sum_nup}
\end{align}
By Bennett's inequality (see Lemma~\ref{lem:bennett} in Appendix~\ref{appd:technical} for the detailed statement), we have
\begin{align}
&\P(\cc_k^c \cond v\in V_d,\Nup,S,\cap_{h=1}^{k-1}(\cc_h\cap\cd_h))\nonumber\\
&\le \P\left(\bin\left(n,\frac{\lambda q}{n}\cdot 4\log n(\lambda q)^{k-1}\prod_{h=0}^{k-1}(1+(\lambda q)^{-h/2})\right)\ge 4\log n(\lambda q)^k\prod_{h=0}^k(1+(\lambda q)^{-h/2})\right)\nonumber\\
&\le \exp\left(-4\log n(\lambda q)^{k}\prod_{h=0}^{k-1}(1+(\lambda q)^{-h/2})\left(1-\frac{4\log n(\lambda q)^k\prod_{h=0}^{k-1}(1+(\lambda q)^{-h/2})}{n}\right)\cdot\phi((\lambda q)^{-k/2})\right),\label{eq:apply_bennett}
\end{align}
where $\phi(x):=(1+x)\log(1+x)-x$. Notice that function $\phi(x)$ satisfies $\phi(x)\ge\frac
{x^2}{3}$ for all $x\in [0,1]$. Moreover, we have $\prod_{h=0}^{k-1}(1+(\lambda q)^{-h/2})\ge 2$ for each $k\ge 1$ and $\prod_{h=0}^{\infty}(1+(\lambda q)^{-h/2})<\infty$ as $\lambda q>1$. These further implies 
\begin{equation*}
    \prod_{h=0}^{k-1}(1+(\lambda q)^{-h/2})\left(1-\frac{4\log n(\lambda q)^k\prod_{h=0}^{k-1}(1+(\lambda q)^{-h/2})}{n}\right)>1
\end{equation*}
for all large enough $n$, since $4\log n (\lambda q)^k=o(n)$. It then follows from~\eqref{eq:apply_bennett} that
\begin{align*}
    &\P(\cc_k^c \cond v\in V_d,\Nup,S,\cap_{h=1}^{k-1}(\cc_h\cap\cd_h))\\
    &\le \exp\left(-4\log n(\lambda q)^{k}\frac{(\lambda q)^k}{3}\right)\\
    &\le n^{-4/3.}
\end{align*}

To bound the probability of $\cd_k^c$, we take a union bound over the vertices in $S_\ch(v,k-1)$ and the vertices in $S$ to obtain
\begin{align*}
    &\P(\cd_k^c\cond v\in V_d,\Nup,S,\cap_{h=1}^{k-1}(\cc_h\cap\cd_h))\\
    &\le \frac{\lambda q}{n}\cdot n^{1-\eps/2}\cdot 4\log n(\lambda q)^{k-1}\prod_{h=0}^{k-1}(1+(\lambda q)^{-h/2})\\
    &=O(n^{-3\eps/8}),
\end{align*}
where the last inequality follows because $k=o(\log n)$ and $\prod_{h=0}^{\infty}(1+(\lambda q)^{-h/2})\le \infty$. By~\eqref{eq:sum_nup}, we have
\[
\P(V_{\ch}\cap S\neq \emp\cond v\in V_d,\Nup,S)\le (\dmax-d+l)\cdot O(n^{-3\eps/8})=O(n^{-\eps/3}).
\]
This further implies that
\begin{align}
    \P(V_{\ch}\cap S\neq \emp\cond v\in V_d)=O(n^{-\eps/3}).\label{eq:vch_nonempty}
\end{align}
Finally, putting~\eqref{eq:not_tree} and~\eqref{eq:vch_nonempty} together yields 
\[
\P(\ch=\ct^\rmic_{v,\dmax-d+l}\cond v\in V_d)=1-O(n^{\eps/3}),
\]
and completes the proof of~\eqref{eq:condition3}.

\subsection{Step 3: Proof of Proposition~\ref{prop:inclusion} at small depths}
In this section, we prove~\eqref{eq:inclusion} in the case of $d\in [\dmax-l^+-1]$ and~\eqref{eq:inclusion_root_prop}. In particular, we want to show
\begin{equation}
\label{eq:include_small}
\P(\td{v}\in\cm_v\cond v\in V_{d})\ge p_{l^+}-O(n^{-\eps/3})
\end{equation}
for each $d\in [\dmax-l^+-1]$ and $v\in\{2,\ldots,n\}$,
and 
\begin{equation}
\label{eq:include_root}
\P(\td{1}\in\cm_1)\ge p_{l^+}-O(n^{-\eps/3}).
\end{equation}
\paragraph{Proof of~\eqref{eq:include_small}} 
Intuitively, the upward pass phase is monotone in the sense that moving to higher levels of the diffusion tree can only increase the amount of matching evidence available. A vertex at a higher level has access not only to its own local structural information, but also to the aggregated evidence propagated from all of its descendants. As a result, the probability of correctly matching a vertex does not decrease as one moves upward in the tree. Following this argument,~\eqref{eq:include_small} simply follows from~\eqref{eq:fraction_large_d}. In the following, we make this argument formal using a truncation argument for Algorithm~\ref{alg:mpmatch}.

For the propose of proof, we introduce a truncated version of Algorithm~\ref{alg:mpmatch}, which only include the upward pass phase of Algorithm~\ref{alg:mpmatch}. Moreover, instead of starting at depth $\dmax$ of the tree $\tic$, the algorithm starts at depth $d+l^+<\dmax$, and only attempt to match the vertices within depth $d+l^+$. We provide the pseudocode of the truncated algorithm in Algorithm~\ref{alg:truncate}. To differentiate from the notation in Algorithm~\ref{alg:mpmatch}, we denote the matching candidate set in Algorithm~\ref{alg:truncate} by $\cm_u''$ for each $u\in V_{\le d+l^+}$.

\begin{algorithm2e}[t]
\caption{Truncated version of Algorithm~\ref{alg:mpmatch}} 
\label{alg:truncate}
\SetAlgoNoEnd
\DontPrintSemicolon
\SetKwInOut{Input}{Input}
\SetKwInOut{Output}{Output}
\Input{Diffusion tree $\tic$, fully observed graph $\td{\cg}_2$, model parameters $\lambda,s,q$, integers $d_{\max},l$}
\Output{A matching candidate set $\cm''_u$ for each $u$ within depth $d+l^+$ of $\tic$ }
 $\cm_u'' \gets \emptyset$ for each $u\in V_{\le d+l^+}$\\
\For{$k=d+l^+:-1:0$}{
\For{$u\in V_{\tic,k}$}{
\For{$\td{v}\in V_{\td{\cg}_2}$}{
\If{$u$ and $\td{v}$ satisfy at least one of Criteria 1--3}{
 $\cm_u''\gets\cm_u''\cup\{\td{v}\}$
}
}
}
}

 \Return{$\cm_v$, for all $v\in V_{\le d+l^+}$}
\end{algorithm2e}

It follows by induction for any $u\in V_{\le d+l^+}$ and any $\td{v}\in V_{\tdcg_2}$, Algorithm~\ref{alg:truncate} adds $\td{v}$ to the set $\cm_{u}''$ only if Algorithm~\ref{alg:mpmatch} adds $\td{v}$ to $\cm_{u}$. 
First consider vertices at depth $d+l^+$. In Algorithm~\ref{alg:truncate}, a vertex $\td{v}$ is added to $\cm_u$ only if they satisfy Criterion 1, since no deeper candidate sets are available. In Algorithm~\ref{alg:mpmatch}, the same pair  $\td{v}$ is added to $\cm_u$ if any of Criteria 1--3 holds. Therefore, at depth $d+l^+$, we have $\cm_{u}''\subseteq\cm_u$.

Now assume inductively that for some $k\in\{0,\ldots,d+l^+-1\}$, we have $\cm_w''\subseteq\cm_w$ for all vertices  $w$ at depths $k+1,\ldots,d+l^+$. Consider a vertex $u$ at depth $k$. In Algorithm~\ref{alg:truncate}, a vertex $\td{v}$ is added to $\cm_{u}''$ if they satisfy any of Criteria 1--3. By the induction hypothesis, whenever $\td{v}_1\in \cm''_{u_1}$, we also have $\td{v}_1\in \cm_{u_1}$. Hence any candidate $\td{v}$ that Algorithm~\ref{alg:truncate} adds to $\cm_u''$ will also satisfy the corresponding criterion when evaluated by Algorithm~\ref{alg:mpmatch}. This shows that $\cm_u''\subseteq\cm_u$ at depth $k$, completing the induction.

In other words, we have
\[
\P(\td{v}\in\cm_v\cond v\in V_{d})\ge \P(\td{v}\in\cm''_v\cond v\in V_{d}),
\]
and it suffices to show that
\begin{equation}
\label{eq:truncate}
\P(\td{v}\in\cm''_v\cond v\in V_{d})\ge p_{l^+}-O(n^{-\eps/3}).
\end{equation}

The rest of the proof is then analogous to the proof of~\eqref{eq:fraction_large_d} for the case of $d=\dmax-l^+$ in the previous section.
We define $\Nup:=V_{\le d}\setminus\{v\}$,
and use $\ch$
(resp. $\td{\ch}$) as the induced subgraph on the $(l+l^+)$-neighborhood of $v$ (resp. $\td{v}$) in the graph $\cg_1'\setminus\Nup$ (resp. $\tdcg_2\setminus\tdNup$), with $v$ (resp. $\td{v}$) as its root.
Then we have $\td{v}\in \cm_v''$ if the following three conditions hold:
\begin{enumerate}
    \item $\ch$ and $\td{\ch}$ are both trees, and they satisfy the property $P_{l^+}$.
    \item In graph $\tdcg_2$, there exist no edges between the set $\tdNup$ and any non-root vertex of $\td{\ch}$.
    \item $\ch=\ct^\mathrm{IC}_{v,l+l^+}$, where $\ct^{\mathrm{IC}}_{i,l+l^+}$ is the subtree in $\tic$ rooted at $i$ with depth up to $\dmax-d+l$.
\end{enumerate}

Using the same argument as in the proof of equations~\eqref{eq:condition1},~\eqref{eq:condition2} and~\eqref{eq:condition3} in the previous section, we have
\[
\P(\text{Condition 1 holds}\cond v\in V_d)\ge p_{l^+}-O(n^{-\eps/3}),
\]
\[
\P(\text{Condition 2 holds}\cond v\in V_d)\ge 1-O(n^{-\eps/3}),
\]
and 
\[
\P(\text{Condition 3 holds}\cond v\in V_d)\ge 1-O(n^{-\eps/3}),
\]
for each $d\in [\dmax-l^+-1]$.
These together imply the desired inequality~\eqref{eq:truncate}, and complete the proof.

\paragraph{Proof of~\eqref{eq:include_root}} For the root node $1$, we consider the truncated algorithm that starts its first iteration at depth $l^+$. In this case, the set $\Nup$ become empty, and the two rooted graphs $\ch$ and $\tdh$ are the $(l+l^+)$-neighborhood of vertex $1$ and vertex $\td{1}$ in the two graphs $\cg_1'$ and $\tdcg_2$ respectively. Then by the previous argument, we know that a sufficient condition for $\td{1}\in \cm_1$ is that $\ch$ and $\tdh$ are both trees, and they satisfy property $P_{l^+}$. By Lemmas~\ref{prop:auxiliary} and Proposition~\ref{prop:tree_tv}, this conditions holds with probability at least $p_{l^+}-O(n^{-\eps/3})$, which completes the proof of~\eqref{eq:include_root}.

\section{Coupling Correlated \erdos--\renyi Neighborhood and Correlated Galton--Watson trees}\label{appd:proof_tv}
In this section, we prove Proposition~\ref{prop:tree_tv}, which is restated as follows.
\begingroup
\proptreetv*
\endgroup

\begin{proof}
    By the assumption $\dmax-l^+\le d\le \dmax$, we have $\dmax-d+l\le l+l^+$. To prove the proposition, it suffices to find a coupling $\sigma$ between $\hat{\mathbb{P}}_{l+l^+}$ and $\mathbb{P}_{l+l^+}$ such that the desired property holds. As is clear from the context, we omit the subscript $l+l^+$, and simply write $\hat{\mathbb{P}}$ and $\mathbb{P}$ in the rest of this proof. We also use short-hand notation $l':=l+l^+$.

    To construct the coupling, we first introduce notations regarding two trees $(\ct,\tdct)\sim\mathbb{P}$. We denote the root of both trees by $r$. Recall the definition of the correlated tree distribution $\mathbb{P}$. Let $\ct^*$ denote the intersection from the process of generating $\ct$ and $\tdt$, and let $T^*$ denote the set of all vertices in this intersection tree. With a slight abuse of notation, for a vertex $v\in T^*$, we also use $v$ to denote the copies of this vertex in $\ct$ and $\tdt$. For each $k\in\{0,\ldots,l'\}$, we define $\ct_k$ as the truncated version of $\ct$ at depth $k$, i.e., $\ct_k$ is the subgraph induced by the vertices within depth $k$ of $\ct$. We similarly define $\tdt_k$ as the truncated version of $\ct$ at depth $k$. We use $T^*_k$ to denote the set of vertices at depth $k$ of $\ct^*$, use $T_k$ to denote the set of vertices at depth $k$ of $\ct$ that are added to $\ch^*$ in the generation process, and use $\td{T}_k$ to denote the set of vertices at depth $k$ of $\tdt$ that are added to $\ct^*$ in the generation process. In other words, $T^*_k$ represents the set of vertices shared by $\ct$ and $\tdt$ at depth $k$, $T_k$ represents the set of vertices exclusive to $\ct$ at depth $k$, and $\td{T}_k$ represents the set of vertices exclusive to $\tdt$ at depth $k$. We also define $\bar{T}_k=T_k\cup T^*_k\cup \td{T}_k$. For each vertex $v\in T_k$ (resp. $v\in \td{T}_k$), let $C_v$ (resp. $\td{C}_v$) denote the set of its children in $\ct$ (resp. $\tdt$), and let $c_v$ (resp. $\td{c}_v$) denote the cardinality of this set. By the definition of $\mathbb{P}$, we have $c_v\sim\poi(\lambda q)$ and $\td{c}_v\sim\poi(\lambda )$. For a vertex $v\in T^*_k$, let $C_v^*$ denote the set of its children in $\ct^*$, $C_v$ denote the set of its exclusive children in $\ct$ and $\td{C}_v$ denote the set of its exclusive children in $\tdt$. We use $c_v^*$, $c_v$ and $\td{c}_v$ to denote the cardinality of these three sets respectively. By definition, we get $c_v^*\sim\poi(\lambda sq)$, $c_v\sim\poi(\lambda q(1-s))$ and $\td{c}_v\sim\poi(\lambda (1-qs))$. Moreover, all of these random variables are mutually independent.

    Now we move on to the pair of rooted graphs $\ch$ and $\tdh$. Recall the roots of the two graphs are $u$ and $\td{u}$ respectively. For each $k\in\{0,\ldots,l'\}$, we define $\ch_k$ as the subgraph of $\ch$ induced by the vertices within distance $k$ from $i$, and $\td{\ch}_k$ as the subgraph of $\td{\ch}$ induced by the vertices within distance $k$ from $\td{i}$. For $k\in\{0,\ldots,d\}$, we iteratively define a sequence of sets $H^*_k$. First, we define $H^*_0=\{i\}$. Then for each $k\ge 1$, we define $H^*_k$ as the set of indices $j\in [n]$ such that:
    \begin{enumerate}
        \item $j\in S_\ch(i,k)$ and $\td{j}\in S_\tdh(\td{i},k)$;
        \item $\exists j'\in H^*_{k-1}: j\stackrel{\ch}{\sim}j'$ and $\td{j}\stackrel{\tdh}{\sim}\td{j}'$.
    \end{enumerate}
    In other words, $H^*_k$ is the set of indices that appear at the depth $k$ of the common structure of $\ch$ and $\tdh$. We also define $H_k=\{j\in [n]:j\in S_\ch(i,k)\text{ and }j\notin H^*_k\}$, $\td{H}_k=\{j\in [n]:\td{j}\in S_\tdh(\td{i},k)\text{ and }\td{j}\notin H^*_k\}$ and $\bar{H}_k=H_k\cup H^*_k\cup\td{H}_k$. 
    We comment that the definitions of the sets $H_k,H_k^*,\td{H}_k$ are analogous to the definition of the sets $T_k.T_k^*,\td{T}_k$ in the context of correlated Galton--Watson trees. But the main difference is that the sets $T_k$ and $\td{T_k}$ are  disjoint by the definition of the distribution $\mathbb{P}$, while this is not the case for $H_k$ and $\td{H}_k$. For example, it is possible that there exists vertex $i_1\in H_1$ and $i_2\in \td{H}_1$ such that $i_1$ is connected to a vertex $i$ in $\cg_1'$ and $\td{i}_2$ is connected to the vertex $\td{i}$ in $\tdcg_2$. In this case, $i$ is not in set $H^*_2$ as it is not a neighbor of any vertex in $H_1^*$. Therefore, it is in both sets $H_2$ and $\td{H}_2$. This type of intersection between sets $H_k$ and $\td{H}_k$ is undesired, as it will make later growth of the neighborhood correlated. We will careful rule out these events in the construction of the coupling.
    
    For each $j\in H_k$, let $D_j=\{j'\in H_{k+1}:j'\stackrel{\ch}{\sim}j\}$. For each $j\in \td{H}_k$, let $\td{D}_j=\{j'\in \td{H}_{k+1}:\td{j}'\stackrel{\tdh}{\sim}\td{j}\}$. We use $d_j$ and $\td{d}_j$ to denote the cardinality of $D_j$ and $\td{D}_j$ respectively. 
    For each $j\in H^*_k$, let $$D^*_j=\{j'\in H^*_{k+1}: j'\stackrel{\ch}{\sim}j\text{ and }\td{j}'\stackrel{\tdh}{\sim}\td{j}\},$$
    \[
    D_j=\{j'\in H_{k+1}:j'\stackrel{\ch}{\sim}j \},
    \]
    and 
    \[
    \td{D}_j=\{j'\in \td{H}_{k+1}:\td{j}'\stackrel{\tdh}{\sim}\td{j}\}.
    \]
    The set $D^*_j$ represents the set of indices of the common neighbors of $j$ and $\td{j}$ in at depth $k-1$, and $D_j$ (resp. $\td{D}_j$) represents the exclusive neighbors of $j$ (resp. $\td{j}$) in $\ch$ (resp. $\tdh$) at depth $k+1$. We use $d^*_j$, $d_j$ and $\td{d}_j$ to denote the cardinality of these three sets respectively.

    We further define a few sequences of events based on which we iteratively provide sufficient conditions for the desired properties $\ch_k=\ct_k$ and $\td{\ch}_k=\td{\ct}_k$.
    For each $k\in[l']$, let 
\[
\ca_k=\{\text{there exists two distinct }j,j'\in \bar{H}_k\text{ s.t. }\bar{j}\stackrel{\bar{\cg}}{\sim} \bar{j}'\}.
\]
where $\bar{\cg}$ is the union graph of $\cg_1'$ and $\cg_2$. 
We also define 
\begin{align*}
\cb_k=&\{\text{there exists two distinct indices }j,j'\in\bar{H}_{k-1}\text{ and an index }j''\in[n]\setminus (\Nup\cup (\cup_{h=0}^{k-1}\bar{H}_h))\\
&\text{ s.t. } \bar{j}\stackrel{\bar{\cg}}{\sim} \bar{j}''\text{ and } \bar{j}'\stackrel{\bar{\cg}}{\sim} \bar{j}''\},
\end{align*}
\[
\cc_{k}=\{\text{there exists }j\in H_{k-1}\cup H^*_{k-1}\text{ and }j'\in \cup_{h=0}^{k-1}\td{H}_h\text{ s.t. }j\stackrel{\cg_1'}{\sim}j'\},
\]
\[
\cd_{k}=\{\text{there exists }j\in \td{H}_{k-1}\cup H^*_{k-1}\text{ and }j'\in \cup_{h=0}^{k-1}H_h\text{ s.t. }\td{j}\stackrel{\td{\cg}_2}{\sim}\td{j}'\}.
\]
We illustrate these four events in Figure~\ref{fig:events}.

\begin{figure}[htbp]
    \centering
    \includegraphics[width=0.5\linewidth]{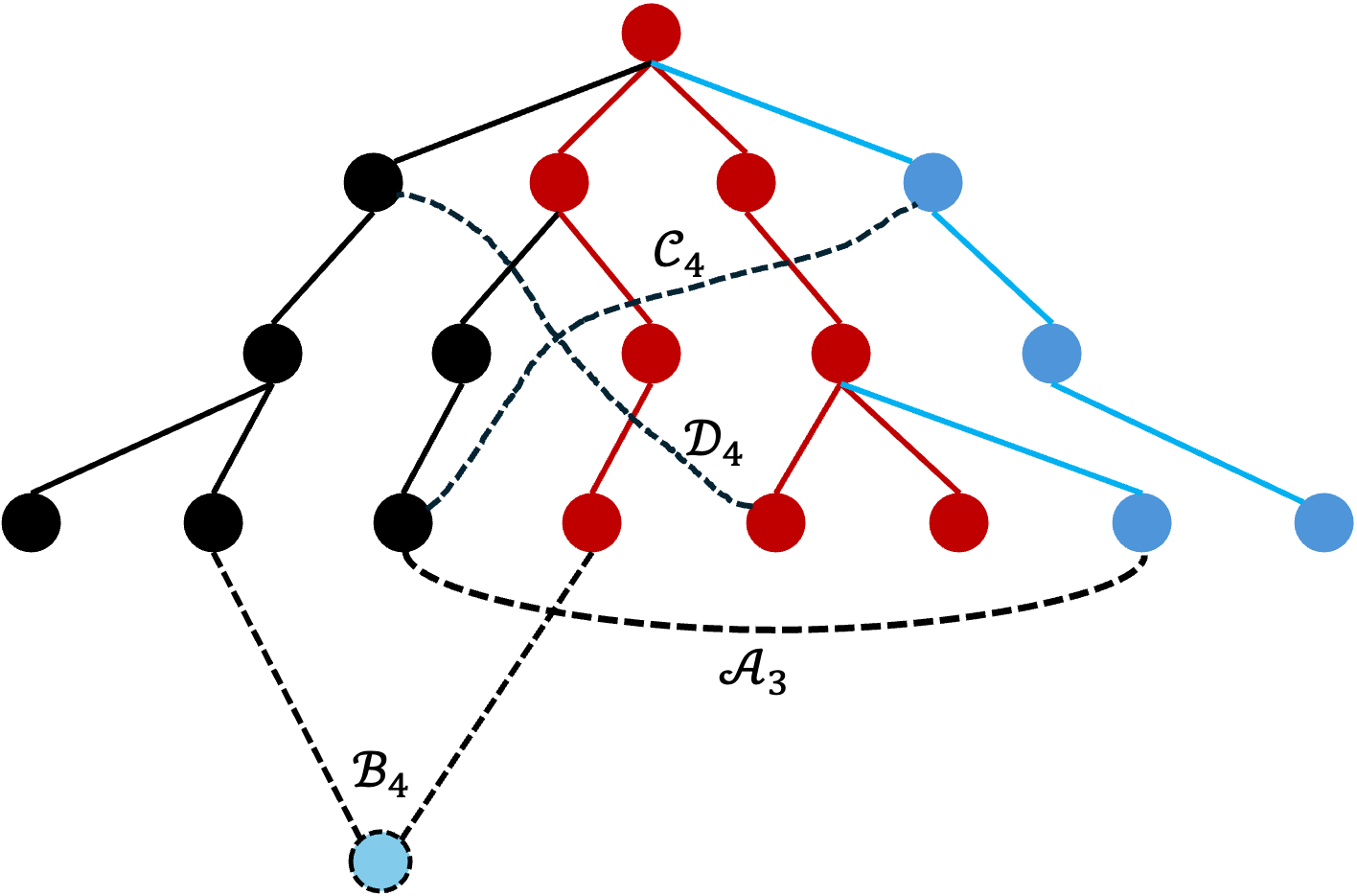}
    \caption{An illustration of the events we defined. In the figure, the black (resp. red and blue) vertices are from the set $\cup_{h=0}^3 H_h$ (resp. $\cup_{h=0}^3 H^*_h$ and $\cup_{h=0}^3 \td{H}_h$). In other words, the red nodes and edges represent the intersection of $\ch_{3}$ and $\td{\ch}_3$. The black (resp. blue) nodes and edges represent those exclusive to $\ch_3$ (resp. $\td{\ch}_3$). The existence of the dashed edges corresponds to the events annotated next to them.}
    \label{fig:events}
\end{figure}

We also define a sequence of events that guarantees the desired isomorphism $\ch_k\cong \ct_k$ and $\tdh_k\cong \tdt_k$. It also required addition regularity of the underlying permutation for the isomorphism and exclude undesired overlaps between $\ch$ and $\td{\ch}$, which are helpful for construct couplings for further depths of the trees.
For each $k\in\{0,\ldots,l'\}$, let $\ce_k$ denote the event that
\begin{enumerate}
    \item $\ch_k\cong \ct_k$ and $\tdh_k\cong \tdt_k$ under bijections $\phi_{k}:V_{\ch_k}\rightarrow V_{\ct_k}$ and $\td{\phi}_k:V_{\tdh_k}\rightarrow V_{\tdt_k}$ respectively;
    \item  The two bijections $\phi_{k}$ and $\td{\phi}_{k}$ satisfy $\phi_{k}(\cup_{h=0}^{k}H^*_h)=\cup_{h=0}^{k}T^*_h$ and $\phi_{k}(j)=\td{\phi}_{k}(\td{j})$ for every $j\in \cup_{h=0}^{k}H^*_h$;
    \item $(\cup_{h=0}^{k}H_h)\cap (\cup_{h=0}^{k}\td{H}_h)=\emptyset$.
\end{enumerate}
Since we want to find a coupling under which $\P(\ch\cong \ct,\tdh\cong \tdt)\ge 1-\td{O}(n^{-3\gamma/4})$. It suffices to lower bound the probability of the event $\cap_{k=0}^d\ce_k$ under the coupling. The following lemma utilizes the events $\ca_k,\cb_k,\cc_k,\cd_k$ to provide sufficient conditions for the desired event $\ce_k$.
\begin{lemma}
\label{lem:Ek_recursive}
    Let $k\in [l']$. If 
    \begin{enumerate}
        \item events $\ca_k^c$, $\cb_k^c$, $\cc_k^c$, $\cd_k^c$ and $\ce_{k-1}$ all hold,
        \item $d_j=c_{\phi_{k-1}(j)}$ for every $j\in H_{k-1}$,
        \item $\td{d}_{j}=\td{c}_{\td{\phi}_{k-1}(\td{j})}$ for every $j\in \td{H}_{k-1}$,
        \item $d_j=c_{\phi_{k-1}(j)}$, $\td{d}_j=\td{c}_{\phi_{k-1}(j)}$ and $d_j^*=c_{\phi_{k-1}(j)}^*$ for every $j\in H^*_{k-1}$,
    \end{enumerate}
    then event $\ce_k$ holds.
\end{lemma}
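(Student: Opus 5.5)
We argue by extending the bijections $\phi_{k-1},\td{\phi}_{k-1}$ given by $\ce_{k-1}$ one layer further, and then checking the three defining properties of $\ce_k$ in turn.

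\emph{Structure of the new layer.} First we show that every vertex of $\bar{H}_k$ has a unique neighbour in the previous layer, that there are no edges inside a layer, and that the exclusive and common parts do not mix. By definition, every $j\in S_\ch(i,k)$ is adjacent in $\ch$ to some vertex of $S_\ch(i,k-1)=H_{k-1}\cup H^*_{k-1}$, and similarly in $\tdh$.
\begin{itemize}
\item Event $\cb_k^c$ forbids a vertex $j''$ outside the explored set from being joined in $\bar{\cg}$ to two distinct vertices of $\bar{H}_{k-1}$. Hence each new vertex has exactly one parent in $\bar{H}_{k-1}$, and in particular a unique parent in $\ch$ and in $\tdh$. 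Since $\bar{\cg}\supseteq\cg_1',\tdcg_2$, if $j$ is a child in both graphs then its parents coincide.
\item Event $\ca_k^c$ rules out edges inside $\bar{H}_k$. Event $\ca_{k-1}^c$ rules out edges inside $\bar{H}_{k-1}$; it is not among the hypotheses, but it is implied by the tree isomorphism $\ch_{k-1}\cong\ct_{k-1}$ in $\ce_{k-1}$. So $\ch_k$ and $\tdh_k$ are obtained from $\ch_{k-1}$ and $\tdh_{k-1}$ by attaching pendant children.
\item Event $\cc_k^c$ says that no vertex of $H_{k-1}\cup H^*_{k-1}$ is $\cg_1'$-adjacent to an index already in $\cup_{h<k}\td{H}_h$. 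Event $\cd_k^c$ gives the symmetric statement in $\tdcg_2$.
\end{itemize}

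\emph{Checking property 3.} Suppose $j\in H_k\cap\td{H}_k$. Its $\ch$-parent lies in $H_{k-1}\cup H^*_{k-1}$ and its $\tdh$-parent lies in $\td{H}_{k-1}\cup H^*_{k-1}$. By the previous paragraph, and because $j$ was fresh, these parents are the same vertex $j'$. Three cases arise.
\begin{itemize}
\item If $j'\in H^*_{k-1}$, then $j\in H^*_k$ by definition, a contradiction.
\item If $j'\in H_{k-1}\cap\td{H}_{k-1}$, this contradicts property 3 of $\ce_{k-1}$.
\item Otherwise $j'$ lies in one of $H_{k-1}$, $\td{H}_{k-1}$ but not the other. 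Then $j'$ exists as an index in both graphs and the edge would be a $\cc_k$ or $\cd_k$ event.
\end{itemize}
The same events exclude new overlaps between $H_k$ and the old sets $\td{H}_h$ for $h<k$, and symmetrically. This gives property 3 of $\ce_k$.

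\emph{Extending the bijections.} For $j\in H_{k-1}$, its $\ch$-children form $D_j$ and lie in $H_k$. A child in $H^*_k$ would need a common parent in $H^*_{k-1}$, which is impossible since the parent is unique. By hypothesis $d_j=c_{\phi_{k-1}(j)}$, so we map $D_j$ bijectively onto $C_{\phi_{k-1}(j)}$, respecting the canonical order. We do the same with $\td{D}_j$ for $j\in\td{H}_{k-1}$. For $j\in H^*_{k-1}$, the children of $j$ in $\ch$ split into $D^*_j\cup D_j$ and those in $\tdh$ split into $D^*_j\cup\td{D}_j$. The counts match $c^*$, $c$ and $\td{c}$ of $\phi_{k-1}(j)$, and $\phi_{k-1}(j)=\td{\phi}_{k-1}(\td{j})$ by property 2 of $\ce_{k-1}$. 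We therefore map $D^*_j$ to $C^*_{\phi_{k-1}(j)}$ by the same map in both graphs, and map the exclusive children to the exclusive children.

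\emph{Checking properties 1 and 2.} Uniqueness of parents and the absence of intra-layer edges show that the extended maps are graph isomorphisms $\ch_k\cong\ct_k$ and $\tdh_k\cong\tdt_k$. Since common children are sent to $T^*_k$ by identical maps, property 2 holds.

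\emph{Main obstacle.} The delicate step is the bookkeeping that a vertex of $\ch$ whose index also appears in $\tdh$ must be common. This is exactly the cross-overlap pattern that $\cc_k$ and $\cd_k$ are designed to exclude, and we will check it carefully, including the cases where a parent lies in $H^*_{k-1}$.
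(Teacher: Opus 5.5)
Your proposal is correct and follows the paper's own route. The paper uses $\ce_{k-1}$, $\ca_k^c$ and $\cb_k^c$ to get that $\ch_k,\tdh_k$ are trees with unique parents, then $\cb_k^c$, $\cc_k^c$, $\cd_k^c$ and property~3 of $\ce_{k-1}$ for disjointness of the exclusive index sets, and finally the count hypotheses to extend $\phi_{k-1},\td{\phi}_{k-1}$, sending $D^*_j$ to $C^*_{\phi_{k-1}(j)}$ by the same map in both trees. Your case analysis for $H_k\cap\td{H}_k$ and your check that all children of $j\in H_{k-1}$ lie in $H_k$ fill in details the paper only asserts.

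Two harmless slips:
\begin{itemize}
\item $\ce_{k-1}$ does not imply $\ca_{k-1}^c$ in $\bar{\cg}$; for example, a $\tdcg_2$-edge between two indices of $H_{k-1}$ is not excluded. You only use that $\ch_{k-1},\tdh_{k-1}$ are trees, which $\ce_{k-1}$ does give.
\item Your third case is vacuous, since the common parent lies in $(H_{k-1}\cup H^*_{k-1})\cap(\td{H}_{k-1}\cup H^*_{k-1})=H^*_{k-1}\cup(H_{k-1}\cap\td{H}_{k-1})$.
\end{itemize}
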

\begin{proof}[Proof of Lemma~\ref{lem:Ek_recursive}]
    Notice that under events $\ce_{k-1}$, $\ca_k^c$ and $\cb_k^c$, both $\ch_{k}$ and $\td{\ch}_k$ are trees. This is because $\ce_{k-1}$ guarantees that $\ch_{k-1}$ and $\td{\ch}_{k-1}$ are both trees, $\ca_k^c$ guarantees that there exists no edges between the depth-$k$ vertices of $\ch$ and $\tdh$, and $\cb_k^c$ guarantees that there exists no depth-$d$ vertices of $\ch$ or $\tdh$ that has more than one neighbor at depth $k-1$.

    Moreover, event $\cc^c_{k}$ guarantees that $H_k\cap (\cup_{h=0}^{k-1}\td{H}_h)=\emptyset$, event $\cd^c_{k}$ guarantees that $\td{H}_k\cap (\cup_{h=0}^{k-1}H_h)=\emptyset$ and event $\cb^c_k$ guarantees that $H_k\cap \td{H}_k=\emptyset$. Putting these together with the assumption $(\cup_{h=0}^{k-1}H_h)\cap (\cup_{h=0}^{k-1}\td{H}_h)=\emptyset$ in event $\ce_{k-1}$, we get $(\cup_{h=0}^{k}H_h)\cap (\cup_{h=0}^{k}\td{H}_h)=\emptyset$.

    By the second and fourth assumptions in the lemma, for each $j\in H_{k-1}$, we can arbitrarily map the vertices in $D_j$ to the vertices in $C_{\phi_{k-1}(j)}$ through a bijection. For each $j\in H^*_{k-1}$, we can arbitrarily map the vertices in $D_j$ to the vertices in $C_{\phi_{k-1}(j)}$, and the vertices in $D^*_j$ to the vertices in $C^*_{\phi_{k-1}(j)}$ both through bijective mappings. In this way, we can extend the bijection $\phi_{k-1}$ to a new bijection $\phi_{k}:V_{\ch_{k}}\rightarrow V_{\ct_k}$. It is easy to check that $\ch_k\cong \ct_k$ under $\phi_k$, and $\phi_{k}(\cup_{h=0}^{k}H^*_h)=\cup_{h=0}^{k}T^*_h$ is satisfied. By the third and fourth assumptions of the lemma, we can perform a similar extension on the mapping $\td{\phi}_{k-1}$ to a bijection $\td{\phi}_{k}:V_{\tdh_{k}}\rightarrow V_{\tdt_k}$, which satisfies that $\tdh_k\cong\tdt_k$ under this bijection. Moreover, this mapping can be set to satisfy that $\phi_{k}(j)=\td{\phi}_{k}(\td{j})$ for every $j\in \cup_{h=0}^{k}H^*_h$.
\end{proof}

For simplicity, we denote the events described in the second, third, and fourth assumptions in Lemma~\ref{lem:Ek_recursive} as $\cf_k$, $\td{\cf}_k$ and $\cf^*_k$ respectively.
    Notice that when $k=0$, $\ch_k$, $\tdh_k$, $\ct_k$, $\tdt_k$ are all trivial trees with only a root node and no edges. So $\ce_0$ is a trivial event that always holds. Then Lemma~\ref{lem:Ek_recursive} implies that
    \[
    \P(\cap_{k=0}^{l'} \ce_k)\ge \P(\cap_{k=1}^{l'}(\ca^c_k\cap\cb_k^c\cap\cc_k^c\cap\cd_k^c\cap\cf_k\cap\td{\cf}_k\cap\cf^*_k)).
    \]
    To bound this probability, we additionally introduce a sequence of events that controls the number of vertices at each level of the two graphs $\ch$ and $\td{\ch}$. For each $k\in[d]]$, let 
    \[
    \ci_k=\{|\bar{H}_k|\le (2\bl)^k\log n\},
    \]
    and we will instead bound $\P(\cap_{k=1}^{l'}\cj_k)$,
    where $\cj_k:=\ca^c_k\cap\cb_k^c\cap\cc_k^c\cap\cd_k^c\cap\cf_k\cap\td{\cf}_k\cap\cf^*_k\cap\ci_k$.
    By the chain rule, we have
    \begin{align}
        \P(\cap_{k=1}^{l'}\cj_k)
        &\ge \P(\cj_1)-\sum_{k=2}^{l'}\P(\cj_k^c|\cap_{h=1}^{k-1} \cj_h).\label{eq:Ed_chain}
    \end{align}
    
    So it suffices to bound $\P(\cj_1)$ and each term in the summation. 
    
    \emph{\textbf{Bounding $\P(\cj_1)$.}} Notice that 
    $\cb_1^c,\cc_1^c,\cd_1^c,\cf_1,\td{\cf}_1$ are all trivial events, since $\ch_0$ and $\td{\ch}_0$ are both trivial trees that only has a root. 
    The vertices in $\bar{H}_1$ are the neighbors of $u$ in the union graph $\bar{\cg}\setminus\bar{N}^\uparrow$, and therefore $|\bar{H}_1|\sim \bin(n-1-|\Nup|,\bl/n)$, where $\bl=\lambda+\lambda s-\lambda sq$. By the Chernoff--Hoeffding bound (see Lemma~\ref{lem:CB}) we have
    \begin{align*}
        \P(\ci_1^c)&\le \P(\bin(n,\bl/n)\ge 2\bl\log n)\le n^{-\bl}.
    \end{align*}
    Given event $\ci_1$, there are at most $4\bar{\lambda}\log^2 n$ vertex pairs within $\bar{H}_1$. The probability of having an edge between each pair is $O(n^{-1})$. Therefore, 
    \begin{equation}
        \P(\ca_1\cond \ci_1)\le 4\bar{\lambda}\log^2 n\cdot O(n^{-1})=\td{O}(n^{-1}). 
    \end{equation}
    To bound the probability of event $\cf_1^c$, we need to construct the coupling between $(d_u,\td{d}_u,d^*_u)$ and $(c_r,\td{c}_r,c^*_r)$.

    From the definition of correlated \erdos--\renyi graph pair model, we have
    $$(d_u,\td{d}_u,d^*_u,n-1-|\Nup|-d_u-\td{d}_u-d^*_u)\sim\mathrm{Multi}(n-1-|\Nup|,\lambda q(1-s)/n,\lambda(1-qs),\lambda qs/n,1-\bl/n).$$
    Meanwhile, the number of children of the root in $\ct$ and $\tdct$ has distribution
    \[
    (c_r,\td{c}_r,c^*_r)\sim \poi(\lambda q(1-s))\otimes\poi(\lambda(1-qs))\otimes\poi(\lambda sq).
    \]
    An upper bound for the total variation distance between multinomial distributions and independent Poisson distributions is provided in Lemma~\ref{lem:multi-poi} in Appendix~\ref{appd:technical}.
    By Lemma~\ref{lem:multi-poi}, we can couple $(d_u,\td{d}_u,d^*_u)$ with $(c_r,\td{c}_r,c^*_r)$ such that 
    \[
    \P((d_u,\td{d}_u,d^*_u)=(c_r,\td{c}_r,c^*_r))\ge 1-O\left(\frac{|\Nup|+\log^2 n}{n}\right)\ge 1-O(n^{-\gamma}).
    \]
Under this coupling, we have $\P(\cf^{*c}_1)\le O(n^{-\gamma})$. Putting these together yields
    \begin{align}
        \P(\cj_1)&\ge 1-\P(\ca_1\cup\cf^{*c}_1\cup\ci_1^c)\nonumber\\
        &\ge 1-\P(\ca_1\cup\ci_1^c)-\P(\cf^{*c}_1)\nonumber\\
        &\ge 1-\P(\ca_1|\ci_1)-\P(\ci_1^c)-\P(\cf^{*c}_1)\nonumber\\
        &\ge 1-O(n^{-\gamma}).\label{eq:J1}
    \end{align}

    \emph{\textbf{Bounding $\P(\cj_k^c|\cap_{h=1}^{k-1} \cj_h)$.}} Notice that event $\cap_{h=1}^{k-1} \cj_h$ implies that $|\bar{H}_h|\le (2\bl)^h\log n$ for each $h\in [k-1]$. Because $k\le l'=o(\log n)$, we have $\sum_{h=1}^{k-1}|H_h|\le \sum_{h=1}^{k-1} (2\bl)^h\log n=O(n^{\gamma/4})$. This further implies $|H_{k-1}|=O(n^{\gamma/4})$, $|\td{H}_{k-1}|=O(n^{\gamma/4})$, $|H^*_{k-1}|=O(n^{\gamma/4})$, $|\cup_{h=0}^{k-1}\td{H}_h|=O(n^{\gamma/4})$ and $|\cup_{h=0}^{k-1}H_h|=O(n^{\gamma/4})$, since all these sets are subsets of $\cup_{h=1}^{k-1}H_h$. Then by union bound, we have
    \begin{equation}
    \label{eq:treetv_ck}
    \P(\cc_k|\cap_{h=1}^{k-1} \cj_h)\le \frac{\bl}{n}\cdot(|H_{k-1}|+|H_{k-1}^*|)\cdot |\cup_{h=0}^{k-1}\td{H}_{h}|=O(n^{\gamma/2-1}),
    \end{equation}
    and 
    \begin{equation}
    \label{eq:treetv_dk}
    \P(\cd_k|\cap_{h=1}^{k-1} \cj_h)\le \frac{\bl}{n}\cdot(|\td{H}_{k-1}|+|H_{k-1}^*|)\cdot |\cup_{h=0}^{k-1}H_{h}|=O(n^{\gamma/2-1}).
    \end{equation}

    Events $\cc_k^c$, $\cd_k^c$ and $\cap_{h=1}^{k-1} \cj_h$ rule out the possible ability that any $j\in H^*_{k-1}$ is connected to any vertices include in $\cup_{h=0}^{k-1}\bar{H}_h$ when growing its neighborhood. Therefore, we have
    \begin{align*}
    &\left(d_j,\td{d}_j,d^*_j,n-|\Nup|-\sum_{h=0}^{k-1}|H_h|-d_j-\td{d}_j-d^*_j\right)\\
    &\sim\mathrm{Multi}\left(n-|\Nup|-\sum_{h=0}^{k-1}|H_h|,\lambda q(1-s)/n,\lambda(1-qs),\lambda qs/n,1-\bl/n\right).
    \end{align*}
    Then by Lemma~\ref{lem:multi-poi}, we can couple $(d_j,\td{d}_j,d^*_j)$ with $(c_{\phi_{k-1}(j)},\td{c}_{\phi_{k-1}(j)},c^*_{\phi_{k-1}(j)})$ such that
    \[
    \P((d_j,\td{d}_j,d^*_j)=(c_{\phi_{k-1}(j)},\td{c}_{\phi_{k-1}(j)},c^*_{\phi_{k-1}(j)}))\ge 1-O\left(\frac{|n-2n^{1-\gamma}|}{n}\right)\ge 1-O(n^{-\gamma}),
    \]
    where the penultimate inequality follows by the assumption that $|\Nup|\le n^{1-\gamma}$. Because $|H^*_{k-1}|=O(n^{-\gamma/4})$, we can use union bound to get 
    \begin{equation}
    \label{eq:treetv_fk*}
    \P(\cf^{*c}_k|(\cap_{h=1}^{k-1} \cj_h)\cap\cc^c_k\cap\cd^c_k )=O(n^{-3\gamma/4}).
    \end{equation}
    For each $j\in H_{k-1}$, we have
    \[
    d_j\sim\bin\left(n-1-|\Nup|-\sum_{h=1}^{k-1}|H_h|,\lambda q(1-s)\right).
    \]
    Recall that $c_{\phi_{k-1}(j)}\sim \poi(\lambda q(1-s))$.
    In Lemma~\ref{lem:binom-poi-tv} of Appendix~\ref{appd:technical}, we provide an upper bound of the total variation distance between binomial and Poisson random variables. By Lemma~\ref{lem:binom-poi-tv} 
    and the assumption that $|\Nup|\le n^{1-\gamma}$ and $\sum_{h=1}^{k-1}|H_h|=O(n^{\gamma/4})$, we can couple $d_j$ and $c_{\phi_{k-1}(j)}$ so that
    \[
    \P(d_j=c_{\phi_{k-1}(j)})\ge 1-O\left(\frac{n^{1-\gamma}}{n}\right)\ge 1-O(n^{-\gamma}),
    \]
    and it follows that
    \begin{equation}
    \label{eq:treetv_fk}
    \P(\cf^{c}_k|(\cap_{h=1}^{k-1} \cj_h)\cap\cc^c_k\cap\cd^c_k )=O(n^{-3\gamma/4}).
    \end{equation}
    Similarly, we can bound 
    \begin{equation}
    \label{eq:treetv_tdfk}
    \P(\td{\cf}^{c}_k|(\cap_{h=1}^{k-1} \cj_h)\cap\cc^c_k\cap\cd^c_k )=O(n^{-3\gamma/4}).
   \end{equation}

    Given $\cc_k^c$, $\cd_k^c$ and $\cap_{h=1}^{k-1} \cj_h$, we can apply union bound over pairs $j,j'\in \bar{H}_{k-1}$ and $j''\in [n]\setminus S\setminus (\cup_{h=0}^{k-1}\bar{H}_h)$ to get 
    \begin{equation}
    \label{eq:treetv_bk}
    \P(\cb_k|(\cap_{h=1}^{k-1} \cj_h)\cap\cc^c_k\cap\cd^c_k )\le O(n\cdot n^{\gamma/2}\cdot n^{-2})=O(n^{-1+\gamma/2}).
    \end{equation}
    Also, we know that 
    \[
    |\bar{H}_k|\sto\bin(n-|\Nup|-|\cup_{h=0}^{k-1}\bar{H}_h|,1-(1-\bl/n)^{(2\bl)^{k-1}\log n}).
    \]
    Therefore, by the Chernoff--Hoeffding bound,
    \begin{align}
        \P(\ci^c_k|(\cap_{h=1}^{k-1} \cj_h)\cap\cc^c_k\cap\cd^c_k )&\le \P\left(\bin\left(n,\frac{\bl(2\bl)^{k-1}\log n}{n}\right)\ge (2\bl)^k\log n\right)\nonumber\\
        &\le n^{-2\bl^2/3}.\label{eq:treetv_ik}
    \end{align}
    Given events $\ci_k$, $\cc_k^c$, $\cd_k^c$ and $\cap_{h=1}^{k-1} \cj_h$, we can use union bound to get 
    \begin{equation}
    \label{eq:treetv_ak}
    \P(\ca_k|(\cap_{h=1}^{k-1} \cj_h)\cap\cc^c_k\cap\cd^c_k\cap\ci_k)\le \frac{\bl}{n}\cdot (2\bl)^{2k}\log^2 n\le n^{\gamma-1},
    \end{equation}
    where the last inequality follows because $k=o(\log n)$. By equations~\eqref{eq:treetv_ck}-\eqref{eq:treetv_ak}, we have
    \begin{align}
        &\P(\cj_k^c|\cap_{h=1}^{k-1} \cj_h)\nonumber\\
        &=\P(\ca_k\cup\cb_k\cup\cc_k\cup\cd_k\cup\cf_k^c\cup\td{\cf}^c_k\cup\cf^{*c}_k\cup\ci^c_k|\cap_{h=1}^{k-1} \cj_h)\nonumber\\
        &\le \P(\cc_k|\cap_{h=1}^{k-1} \cj_h)+\P(\cd_k|\cap_{h=1}^{k-1} \cj_h)+\P(\cf^{*c}_k|(\cap_{h=1}^{k-1} \cj_h)\cap\cc^c_k\cap\cd^c_k )+\P(\cf^{c}_k|(\cap_{h=1}^{k-1} \cj_h)\cap\cc^c_k\cap\cd^c_k )\nonumber\\
        &+\P(\td{\cf}^{c}_k|(\cap_{h=1}^{k-1} \cj_h)\cap\cc^c_k\cap\cd^c_k )+\P(\cb_k|(\cap_{h=1}^{k-1} \cj_h)\cap\cc^c_k\cap\cd^c_k )+\P(\ci^c_k|(\cap_{h=1}^{k-1} \cj_h)\cap\cc^c_k\cap\cd^c_k )\nonumber\\
        &+\P(\ca_k|(\cap_{h=1}^{k-1} \cj_h)\cap\cc^c_k\cap\cd^c_k\cap\ci_k)\nonumber\\
        &=O(n^{-3\gamma/4}).\label{eq:diff}
    \end{align}
    Finally, substituting~\eqref{eq:J1} and~\eqref{eq:diff} into~\eqref{eq:Ed_chain} gives 
    \begin{align*}
        \P(\cap_{k=0}^{l'} \ce_k)&\ge 1-O(n^{-\gamma})-O(l'\cdot n^{-3\gamma/4})\ge 1-\td{O}(n^{-3\gamma/4}),
    \end{align*}
    which completes the proof.
\end{proof}

\section{Properties of sparse \erdos--\renyi Random Graphs}\label{appd:properties}
In this section, we prove a few properties in sparse \erdos--\renyi random graphs that are used in the proof of Propositions~\ref{prop:noerror} and~\ref{prop:inclusion}.
\subsection{Neighborhood size bounds}
\begin{restatable}{lemma}{lemsizeuncond}
    \label{lem:neighborhood_size}
    Consider an \erdos--\renyi random graph $\cg\sim\mathrm{ER}(n,\frac{\mu}{n})$, where $\mu>1$ is a constant independent of $n$.
    Let $d=c\log n$ for some $c\le\frac{1-\gamma}{\log\mu}$, where $\gamma$ is an arbitrarily small constant and $C$ be a constant satisfying $C\ge 1$. For an arbitrary vertex $i$ in $\cg$, we have
    \begin{equation}
    \label{eq:neighbor_size}
    \P\left(\exists k\in\{0,\ldots,d\}:|N_{\cg}(i,k)|\ge K\mu^k\log n\right)\le \td{O}(n^{-C/3}),
    \end{equation}
     where constant $K:=\frac{C\mu}{\mu-1}\prod_{h=0}^\infty(1+\mu^{-h/2})$.
\end{restatable}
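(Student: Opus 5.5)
We would run a breadth-first exploration from $i$ and control the sphere sizes $|S_\cg(i,k)|$ one layer at a time, with a multiplicative slack factor that grows slowly in $k$. Write $a_0=C\log n$ and, for $k\ge1$,
\[
a_k:=C\log n\,\mu^k\prod_{h=0}^{k}(1+\mu^{-h/2}).
\]
Define the good events $\cg_k:=\{|S_\cg(i,h)|\le a_h,\ \forall h\le k\}$. On $\cg_d$, summing a geometric series gives
\[
|N_\cg(i,k)|\le \sum_{h\le k} a_h\le C\log n\prod_{h\ge0}(1+\mu^{-h/2})\sum_{h\le k}\mu^h\le K\mu^k\log n.
\]
This holds because $\sum_{h\le k}\mu^h\le \frac{\mu}{\mu-1}\mu^k$, which is exactly why $K$ carries the factor $\frac{C\mu}{\mu-1}\prod_h(1+\mu^{-h/2})$. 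The bound at $k=0$ is trivial since $|N_\cg(i,0)|=1\le K\log n$. It therefore suffices to show $\P(\cg_d^c)=\td{O}(n^{-C/3})$.

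By the chain rule, $\P(\cg_d^c)\le\sum_{k=1}^{d}\P(\cg_k^c\cond\cg_{k-1})$, where $\cg_0$ is trivially true; note $a_0\ge1$ dominates $|S_\cg(i,0)|=1$. We condition on the exploration up to depth $k-1$. Each unexplored vertex joins $S_\cg(i,k)$ independently with probability $1-(1-\mu/n)^{|S_\cg(i,k-1)|}\le \mu|S_\cg(i,k-1)|/n$. Hence on $\cg_{k-1}$ the variable $|S_\cg(i,k)|$ is stochastically dominated by $\bin(n,\mu a_{k-1}/n)$, whose mean is at most $m_k:=\mu a_{k-1}$. The threshold satisfies $a_k=m_k(1+\mu^{-k/2})$, so we need an upper-tail bound for a relative deviation $x=\mu^{-k/2}\in(0,1]$. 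We would apply Bennett's inequality (Lemma~\ref{lem:bennett}) exactly as in the proof of~\eqref{eq:condition3}, giving $\exp(-m_k\phi(x))$ up to a $(1-o(1))$ factor, and then use $\phi(x)\ge x^2/3$ on $[0,1]$. This yields
\[
\P(\cg_k^c\cond\cg_{k-1})\le \exp\!\Big(-\tfrac{1}{3}\,\mu a_{k-1}\mu^{-k}\Big)=\exp\!\Big(-\tfrac{C\log n}{3}\prod_{h=0}^{k-1}(1+\mu^{-h/2})\Big)\le n^{-C/3}.
\]
The last step uses that the product is at least $1$ (indeed at least $2$). Summing over $k\le d=O(\log n)$ gives $\td{O}(n^{-C/3})$.

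The main obstacle is bookkeeping the deviation factor. The relative slack $\mu^{-k/2}$ must shrink fast enough that the product stays bounded, which it does because $\mu>1$, yet slowly enough that $m_k\mu^{-k}$ stays of order $\log n$. Here the choice of exponent $k/2$ is what makes $a_{k-1}\mu^{-k}\gtrsim\log n$ work uniformly in $k$. We also need to verify that the $(1-a_k/n)$ correction inside Bennett's bound is harmless. This uses $d\le\frac{1-\gamma}{\log\mu}\log n$, so $a_k=O(n^{1-\gamma}\log n)=o(n)$, and the correction factor is then absorbed by the spare factor of $2$ in the product, just as in the proof of~\eqref{eq:condition3}.
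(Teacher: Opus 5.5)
Your proposal follows the paper's proof essentially step for step: the same layer thresholds $C\log n\,\mu^k\prod_{h=0}^{k}(1+\mu^{-h/2})$ summed geometrically to produce $K$, the chain rule over layers, binomial domination of $|S(i,k)|$ with mean at most $\mu a_{k-1}$, Bennett's inequality at relative deviation $\mu^{-k/2}$ with $\phi(x)\ge x^2/3$, and the factor $\prod_{h<k}(1+\mu^{-h/2})\ge 2$ absorbing the $(1-o(1))$ correction. One cosmetic point: with your choice $a_0=C\log n$ (the paper uses the product formula already at $k=0$, i.e.\ $a_0=2C\log n$), the claimed identity $a_1=m_1(1+\mu^{-1/2})$ is off by a factor of $2$, but in the favorable direction, so the $k=1$ step is still bounded by $n^{-C/3}$ and only gets easier.
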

\begin{proof}
    In this proof, we omit the subscript $\cg$ in the notations $N_{\cg}(i,k)$ and $S_\cg(i,k)$ as is clear from the context.
    For each $k\in\{0,\ldots,d\}$, define event $$\ca_k=\left\{|S(i,k)|<C(\log n)\mu^k\prod_{h=0}^k(1+\mu^{-h/2})\right\}.$$ 
    Since 
    \begin{align*}
        \sum_{k'=0}^k C(\log n)\mu^{k'}\prod_{h=0}^{k'}(1+\mu^{-h/2})&\le C(\log n)\prod_{h=0}^{\infty}(1+\mu^{-h/2}) \sum_{k'=0}^k \mu^{k'}\\
        &\le K\mu^{k}\log n,
    \end{align*}
    it suffices to show that $\P(\cap_{k=0}^d\ca_k)\ge 1-\td{O}(n^{-C/3})$. We can rewrite this probability as 
    \begin{align}
       \P(\cap_{k=0}^d\ca_k)&=\P(\ca_0)-\sum_{k=1}^{d}\P(\ca_k^c\cap(\cap_{h=0}^{k-1}\ca_{h}))\nonumber\\
       &\ge \P(\ca_0)-\sum_{k=1}^{d}\P(\ca_k^c|\cap_{h=0}^{k-1}\ca_{h}).\label{eq:chain}
    \end{align}

    Notice that $\ca_0$ trivially holds so $\P(\ca_0)=1$, and it suffices to bound each term in the summation in~\eqref{eq:chain}.

    Fix some $k\in\{0,\ldots,d\}$. Notice that given $|N(i,k-1)|=m$ and $|S(i,k-1)|=m'$, we have
    \[
    \cl(|S(i,k)|)=\bin(n-m,1-(1-\mu/n)^{m'}).
    \]
    Then conditioned on event $\cap_{h=0}^{k-1}\ca_{h}$, we have
    \[
    \cl(|S(i,k)|)\stackrel{\mathrm{sto.}}{\le}\bin\left(n,\frac{C(\log n)\mu^{k}\prod_{h=0}^{k-1}(1+\mu^{-h/2})}{n}\right),
    \]
    where $\sto$ is the stochastic dominance operator. By Bennett's inequality, we have
    \begin{align*}
        &\P(\ca_k^c|\cap_{h=0}^{k-1}\ca_{h})\\
        &\le \P\left(\bin\left(n,\frac{C(\log n)\mu^{k}\prod_{h=0}^{k-1}(1+\mu^{-h/2})}{n}\right)\ge C(\log n)\mu^{k}\prod_{h=0}^{k}(1+\mu^{-h/2})\right)\\
        &\le \exp\left(-C(\log n)\mu^{k}\left(\prod_{h=0}^{k-1}(1+\mu^{-h/2})\right)\cdot\left(1-\frac{C(\log n)\mu^{k}\prod_{h=0}^{k-1}(1+\mu^{-h/2})}{n}\right)\cdot \phi(\mu^{-k/2})\right),
    \end{align*}
    where $\phi(x):=(1+x)\log(1+x)-x$. 
    The function $\phi$ satisfies $\phi(x)\ge \frac{x^2}{3}$ for all $x\in [0,1]$.
    Because $\mu>1$ is a constant, we have $\prod_{h=0}^{k-1}(1+\mu^{-h/2})\ge 2$ and $\prod_{h=0}^{\infty}(1+\mu^{-h/2})<\infty$. 
    Because $k\le d=c\log n$, we know that $C(\log n)\mu^{k}=o(n)$.
    These further imply that 
    \[
    \prod_{h=0}^{k-1}(1+\mu^{-h/2})\cdot\left(1-\frac{C(\log n)\mu^{k}\prod_{h=0}^{k-1}(1+\mu^{-h/2})}{n}\right)>1
    \]
    for all large enough $n$.
    Therefore, we have 
    \begin{align*}
        \P(\ca_k^c|\cap_{h=0}^{k-1}\ca_{h})&\le\exp\left(-\frac{C\log n}{3}\right)=n^{-C/3}.
    \end{align*}
    Notice that the above inequality holds for every $k\in [d]$. Therefore, we have 
    \[
    \P(\cap_{k=0}^d\ca_k)\ge 1-dn^{-C/3}=1-\td{O}(n^{-C/3}),
    \]
    which completes the proof.
\end{proof}

\begin{restatable}{lemma}{lemsizecond}
    \label{lem:neighborhood_size_cond}
    Consider an \erdos--\renyi random graph $\cg\sim\mathrm{ER}(n,\frac{\mu}{n})$, where $\mu>1$ is a constant independent of $n$.
    Let $d=c\log n$ for some $c\le\frac{1-\gamma}{\log\mu}$, where $\gamma$ is an arbitrarily small constant and $C$ be a constant satisfying $C\ge 1$. For two distinct vertices $i$ and $j$ in $\cg$, and an integer $d'\le d$, we have
    \begin{align}
    &\P\left(\exists k\in\{0,\ldots,d\}:|N_{\cg}(i,k)|\ge K\mu^k\log n\;\bigg|\;j\in S_\cg(i,d')\right)
    \le O(n^{-\gamma/3}),\label{eq:neighbor_size_cond}
    \end{align}
     where constant $K:=\frac{C\mu}{\mu-1}\prod_{h=0}^\infty(1+\mu^{-h/2})(1+\mu^{-h})$.
\end{restatable}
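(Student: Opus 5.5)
The plan is to reuse the layer-by-layer Bennett argument of Lemma~\ref{lem:neighborhood_size}, but to run it conditionally on the exploration that reveals a shortest path from $i$ to $j$. The conditioning on $j\in S_\cg(i,d')$ is handled by writing it as a union over paths. For each path $p=(i=v_0,\ldots,v_{d'}=j)$ of length $d'$, let $\cd_p$ be the event that $p$ is the highest-order $d'$-path from $i$ to $j$ and that $\dist(i,j)=d'$. These events partition $\{j\in S_\cg(i,d')\}$. It therefore suffices to bound the probability of a large neighborhood conditionally on $\cd_p$, uniformly in $p$.

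Conditional on $\cd_p$, the edges of $p$ are present. All other constraints in $\cd_p$ (no shorter path, no higher-order path) are decreasing properties of the graph. The event $\{\exists k: |N_\cg(i,k)|\ge K\mu^k\log n\}$ is increasing. The Harris inequality, applied on the product space of the edges not in $p$, then gives
\[
\P(\text{large neighborhood}\cond \cd_p)\le \P(\text{large neighborhood}\cond p\subset\cg).
\]
So we only need to control neighborhood growth in an Erd\H{o}s--R\'enyi graph with a planted path $p$. This is exactly where the extra factor $\prod_h(1+\mu^{-h})$ in $K$ should come from.

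For the planted-path graph, I would redo the recursion of Lemma~\ref{lem:neighborhood_size} with thresholds
\[
a_k = C(\log n)\mu^k\prod_{h\le k}(1+\mu^{-h/2})(1+\mu^{-h}).
\]
Given the first $k-1$ layers, the layer $S(i,k)$ consists of two parts:
\begin{itemize}
\item the vertices reached through unplanted edges, which are stochastically dominated by $\bin(n, a_{k-1}\mu/n)$;
\item at most one extra vertex $v_k$ contributed by the planted path. Planted vertices at larger depth may also be pulled in early via shortcuts, but since the path has at most one vertex per index, this adds at most $O(1)$ vertices per layer.
\end{itemize}
The additive $O(1)$ is absorbed into the multiplicative slack $(1+\mu^{-h})$, because $a_{k-1}\mu^{-k}\ge C\log n\,\mu^{-1}\gg 1$. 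Bennett's inequality with relative deviation $\mu^{-k/2}$, exactly as before, gives a failure probability of $n^{-C/3}$ per layer. A union bound over $d=O(\log n)$ layers yields $\td O(n^{-C/3})$. Summing the thresholds gives the claimed bound $K\mu^k\log n$.

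This would give an even better bound than $O(n^{-\gamma/3})$. The stated weaker rate suggests the authors may instead condition more crudely, for example via $\P(A\cond B)\le \P(A)/\P(B)$. That route would need $\P(j\in S(i,d'))\gtrsim \mu^{d'}/n\ge n^{-1+\gamma}$ only when $d'$ is near $d$, so it is not uniform. I would therefore commit to the Harris route.

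The main obstacle is justifying the Harris step cleanly. One must check that, after fixing the planted edges, every remaining constraint in $\cd_p$ is genuinely decreasing in the other edges. In particular, ``$p$ is the highest-order path'' must only forbid additional edges, which holds because it is a statement that no competing path exists. One must also check that the planted-path contribution per layer is bounded by a constant, even though planted vertices can appear at depths other than their index along $p$. The layer-by-layer Bennett computation itself is routine.
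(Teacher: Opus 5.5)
Your reduction is exactly the paper's first step: you partition $\{j\in S_\cg(i,d')\}$ by the highest-order $d'$-path $p$ and use Harris to pass to conditioning on $\{p\subset\cg\}$ alone. The gap is in your analysis of the planted-path graph. There, the claim that the path adds only $O(1)$ vertices per layer ``since the path has at most one vertex per index'' is false. After the Harris step nothing forbids random edges from the explored ball to path vertices, and each such shortcut lets path vertices with many different indices sit in the same BFS layer. For example, suppose $i=v_0$ has random edges to $v_{10},v_{20},\ldots,v_{10m}$. Then $v_9,v_{11},v_{19},v_{21},\ldots$ all lie in $S(i,2)$ and are reached only through planted edges. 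So they are not covered by your $\bin(n,a_{k-1}\mu/n)$ term, which dominates only vertices reached via unplanted edges. In general the planted-only contribution to layer $k$ is at most $2|S(i,k-1)\cap V(p)|$, and this has no deterministic constant bound. On the original event $\{\dist(i,j)=d'\}$ one does have $\dist(i,v_t)=t$ for every $t$, but your global Harris step discards exactly this information. Bounding the contribution is therefore a probabilistic step, and it is the step that determines the rate.

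The paper supplies this step with the events $\mathcal{C}_k=\{S(i,k)\cap V(p)=\{v_k\}\}$ for $k\le d'$ and $\mathcal{C}_k=\{S(i,k)\cap V(p)=\emptyset\}$ for $k>d'$. On $\bigcap_{h<k}\mathcal{C}_h$ the only planted contribution to layer $k$ is $v_k$. This gives the domination $1+\bin(n,a_{k-1}\mu/n)$, with the $+1$ absorbed by the factor $(1+\mu^{-k})$ in $K$. The event $\mathcal{C}_k^c$ requires an edge between $S(i,k-1)$ and $\{v_{k+1},\ldots,v_{d'}\}$. By a union bound this has probability at most $\frac{\mu}{n}\cdot d\cdot a_{k-1}=O(n^{-\gamma/2})$, because $\mu^{d}\le n^{1-\gamma}$. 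Summing over the $d=O(\log n)$ layers gives $O(n^{-\gamma/3})$.

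So the weaker rate is the cost of controlling shortcuts. It does not come from a crude $\P(A)/\P(B)$ bound, as you guessed. Your claimed $\tilde O(n^{-C/3})$ rate is not established by your argument; getting it needs an extra idea. One option is to apply Harris layer by layer, conditionally on the exploration history, which keeps $S(i,k-1)\cap V(p)=\{v_{k-1}\}$ from the exact-distance constraint. Another is to show that more than a constant number $M\approx C/(3\gamma)$ of shortcut edges occur only with probability $\tilde O(n^{-C/3})$. You would then also check that each shortcut endpoint, together with $v_0$, contributes at most two path vertices per layer. As written, neither argument is present.
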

\begin{proof}
    In this proof, we omit the subscript $\cg$ in the notations $N_{\cg}(i,k)$ and $S_\cg(i,k)$ as is clear from the context.
    Define event 
    \[
    \cb=\left\{\exists k\in\{0,\ldots,d\}:|N(i,k)|\ge \sum_{k'=0}^{k}C(\log n)\mu^{k'}\prod_{h=0}^{k'}(1+\mu^{-h/2})(1+\mu^{-h})\right\}.
    \]
    By the definition of constant $K$, we have
    \begin{align*}
        K\mu^k\log n\ge \sum_{k'=0}^{k}C(\log n)\mu^{k'}\prod_{h=0}^{k'}(1+\mu^{-h/2})(1+\mu^{-h}),
    \end{align*}
    for any $k$. Therefore, it suffices to show that $\P(\cb\cond j\in S(i,d'))\le O(n^{-\gamma/3})$. Notice that the conditioned event $j\in S(i,d')$ is equivalent as there exists a path of length $d'$ between $i$ and $j$ in graph $\cg$ \emph{and} there exists no path of length less than $d'$ between $i$ and $j$ in $\cg$. Let $\mathcal{P}^{d'}_{i,j}$ denote the collection of all possible path of length $d'$ between $i$ and $j$ in $\cg$. We define a total order for the paths in $\mathcal{P}^{d'}_{i,j}$ as follows: For two distinct paths $p=(v_0,v_1,\ldots,v_{d'})$ and $p'=(v_0',v_1',\ldots,v_{d'}')$ with $v_0=v_0'=i$ and $v_{d'}=v_{d'}'=j$. We say $p$ has a higher order than $p'$, denoted $p\prec p'$, if there exists an index $d''\in[d'-1]$ such that $v_k=v_k',\forall k< d''$ and $v_{d''}<v_{d''}'$. Then we can write
    \begin{align*}
        &\P(\cb\cond j\in S(i,d'))\\
        &=\sum_{p\in\mathcal{P}}\P(\cb,\text{the highest order $d'$-path between $i$ and $j$ in $\cg$ is $p$}\cond j\in S(i,d'))\\
        &\le \sum_{p\in\mathcal{P}}\P(\text{the highest order $d'$-path between $i$ and $j$ in $\cg$ is $p$}\cond j\in S(i,d'))\\
        &\quad\quad\quad\cdot\P(\cb\cond\text{the highest order $d'$-path between $i$ and $j$ in $\cg$ is $p$},j\in S(i,d'))\\
        &\le \max_{p\in\mathcal{P}}\P(\cb\cond\text{the highest order $d'$-path between $i$ and $j$ in $\cg$ is $p$},j\in S(i,d'))\\
        &=\max_{p\in\mathcal{P}}\P(\cb\cond \{\text{path $p$ exists in $\cg$}\}\cap \{\text{no path in $\mathcal{P}$ with higher order than $p$ exists in $\cg$}\}\\
        &\quad\quad\quad\quad\quad\quad \cap\{\text{no path with length less than $d'$ between $i$ and $j$ exists in $\cg$}\}).
    \end{align*}
    Notice that $\cb$ is an increasing graph property because if $\cb$ holds in $\cg$, then the property still holds after adding an arbitrary edge to $\cg$. On the other hand, both events  $$\{\text{no path in $\mathcal{P}$ with higher order than $p$ exists in $\cg$}\}$$ and $$\{\text{no path with length less than $d'$ between $i$ and $j$ exists in $\cg$}\}$$ are decreasing graph properties because if they hold in $\cg$, removing any edge from $\cg$ does not break the properties. Then by Harris inequality~\citep{harris1960lower}, we have
    \[
    \P(\cb\cond j\in S(i,d'))\le \max_{p\in\mathcal{P}}\P(\cb\cond \{\text{path $p$ exists in $\cg$}\}). 
    \]
    Fix an arbitrary $p\in \mathcal{P}$ with $p=(v_0,v_1,\ldots,v_{d'})$. For each $0\le k\le d$, define event
    \[
    \cb_k=\left\{|S(i,k)|\ge C(\log n)\mu^{k}\prod_{h=0}^{k}(1+\mu^{-h/2})(1+\mu^{-h})\right\}
    \]
    We also define
    \[
    \cc_k=\{S(i,k)\cap\{v_0,\ldots,v_{d'}\}=\{v_k\}\}
    \]
    for each $0\le k\le d'$, and
    \[
    \cc_k=\{S(i,k)\cap\{v_0,\ldots,v_{d'}\}=\emp\}
    \]
    for each $k>d'$.
    We then have
    \begin{align}
        &\P(\cb\cond \{\text{path $p$ exists in $\cg$}\})\nonumber\\
        &\le \P(\cup_{k=0}^d\cb_k\cond\{\text{path $p$ exists in $\cg$}\})\nonumber\\
        &\le \P(\cup_{k=0}^d(\cb_k\cup\cc_k^c)\cond\{\text{path $p$ exists in $\cg$}\})\nonumber\\
        &\le \P(\cb_0\cup\cc_0^c\cond \{\text{path $p$ exists in $\cg$}\})+\sum_{k=1}^{d}\P(\cb_k\cup\cc_k^c\cond\{\text{path $p$ exists in $\cg$}\}\cap(\cap_{h=0}^{k-1}\cb_h^c\cap\cc_h))\nonumber\\
        &=\sum_{k=1}^{d}\P(\cb_k\cup\cc_k^c\cond\{\text{path $p$ exists in $\cg$}\}\cap(\cap_{h=0}^{k-1}\cb_h^c\cap\cc_h)),\label{eq:B_cond}
    \end{align}
    where the equality holds because $\cb_0$ and $\cc_0^c$ are both trivial events that cannot happen. 

    Given event $\{\text{path $p$ exists in $\cg$}\}\cap(\cap_{h=0}^{k-1}\cb_h^c\cap\cc_h)$, we know that
    \[
    |S(i,k)| \sto 1+\bin\left(n,\frac{\mu}{n}\cdot  C(\log n)\mu^{k-1}\prod_{h=0}^{k-1}(1+\mu^{-h/2})(1+\mu^{-h})\right),
    \]
    where we add one to the right-hand side because when $k\le d'$, there is a known edge $(v_{k-1},v_k)$ on the given path $p$. By Bennett's inequality, we have
    \begin{align*}
        &\P\Bigg(\bin\left(n,\frac{\mu}{n}\cdot  C(\log n)\mu^{k-1}\prod_{h=0}^{k-1}(1+\mu^{-h/2})(1+\mu^{-h})\right)\\
        &\quad\quad\quad\quad\ge C(\log n)\mu^{k}\prod_{h=0}^k(1+\mu^{-h/2})\prod_{h'=0}^{k-1}(1+\mu^{-h'})\Bigg)\\
        &\le n^{-C/3}.
    \end{align*}
     Moreover, we have
    \[
    1+ C(\log n)\mu^{k}\prod_{h=0}^k(1+\mu^{-h/2})\prod_{h'=0}^{k-1}(1+\mu^{-h'})\le  C(\log n)\mu^{k}\prod_{h=0}^{k}(1+\mu^{-h/2})(1+\mu^{-h}).
    \]
    It then follows that
    \begin{equation}
    \label{eq:bk}
        \P(\cb_k\cond\{\text{path $p$ exists in $\cg$}\}\cap(\cap_{h=0}^{k-1}\cb_h^c\cap\cc_h))\le n^{-C/3}.
    \end{equation}

    On the other hand, given event $\{\text{path $p$ exists in $\cg$}\}\cap(\cap_{h=0}^{k-1}\cb_h^c\cap\cc_h)$, event $\cc^c_k$ essentially states that there exists an edge between $S(i,k-1)$ and $\{v_{k+1},\ldots,v_{d'}\}$. By the union bound over the vertex pair of these two sets, we can get
    \begin{align}
        &\P(\cc_k^c\cond\{\text{path $p$ exists in $\cg$}\}\cap(\cap_{h=0}^{k-1}\cb_h^c\cap\cc_h))\nonumber\\
        &\le \frac{\mu}{n}\cdot d\cdot C(\log n)\mu^{k-1}\prod_{h=0}^{k-1}(1+\mu^{-h/2})(1+\mu^{-h})\nonumber\\
        &\le O(n^{-\gamma/2}),\label{eq:ckc}
    \end{align}
    Equations~\eqref{eq:bk} and~\eqref{eq:ckc} together imply
    \begin{equation}
        \P(\cb_k\cup\cc_k^c\cond\{\text{path $p$ exists in $\cg$}\}\cap(\cap_{h=0}^{k-1}\cb_h^c\cap\cc_h))=O(n^{-\gamma/2}). 
    \end{equation}
    By~\eqref{eq:B_cond}, we finally get
    \[
    \P(\cb\cond \{\text{path $p$ exists in $\cg$}\})\le O(n^{-\gamma/3}).
    \]
    This holds for any path $p$, and completes the proof.
\end{proof}

    \subsection{Cycle-free neighborhoods conditioned on given structure}
    \begingroup
    \lemcyclefree*
    \endgroup
    \begin{proof}
        Throughout this proof, we omit the subscript $\cg$ in notations $S_\cg(i,k)$ and $N_{\cg}(i,k)$ as is clear from the context. We denote $l=\sqrt{\log n}$, and use $H$ to denote the set of vertices in $\ch$.
        For each $d\ge 0$, let $H_d$ denote the vertices in $\ch$ that are at distance $d$ from $i$. By the third assumption on $\ch$, we have $|H_d|\le C$ for any $d\ge 0$. For each $h\in [2l]$, we define events
        \begin{equation}
            \ca_h=\{\text{there exists an edge within $S(i,h)$}\}
        \end{equation}
        and
        \begin{equation}
            \cb_h=\{\text{there exists a vertex in $[n]\setminus N(i,h-1)$ that is connected to two vertices in $S(i,h-1)$}\}.
        \end{equation}
        It is easy to see that the event $\cap_{h=1}^{2l}(\ca_h^c\cap\cb_h^c)$ implies that the $2l$-neighborhood of $i$ is a tree.
        We further define two sequence of auxiliary events to control the neighborhood size of $i$ and its intersection with $H_h$: For each $h\in [2l]$, define
        \begin{equation}
        \label{eq:def_ch}
        \cc_{h}=\left\{|S(i,h)|\ge 4\log n\cdot\bar{\lambda}^h\prod_{t=0}^{h}(1+\mu^{-t/2})(1+\mu^{-t})\right\}
        \end{equation}
        and
        \begin{equation}
        \label{eq:def_dh}
        \cd_h=\{S(i,h)\cap H\neq H_h\}.
        \end{equation}
        We illustrate the defined events in Figure~\ref{fig:cond_neighbor}.

\begin{figure}[htbp]
    \centering
    \includegraphics[width=0.4\linewidth]{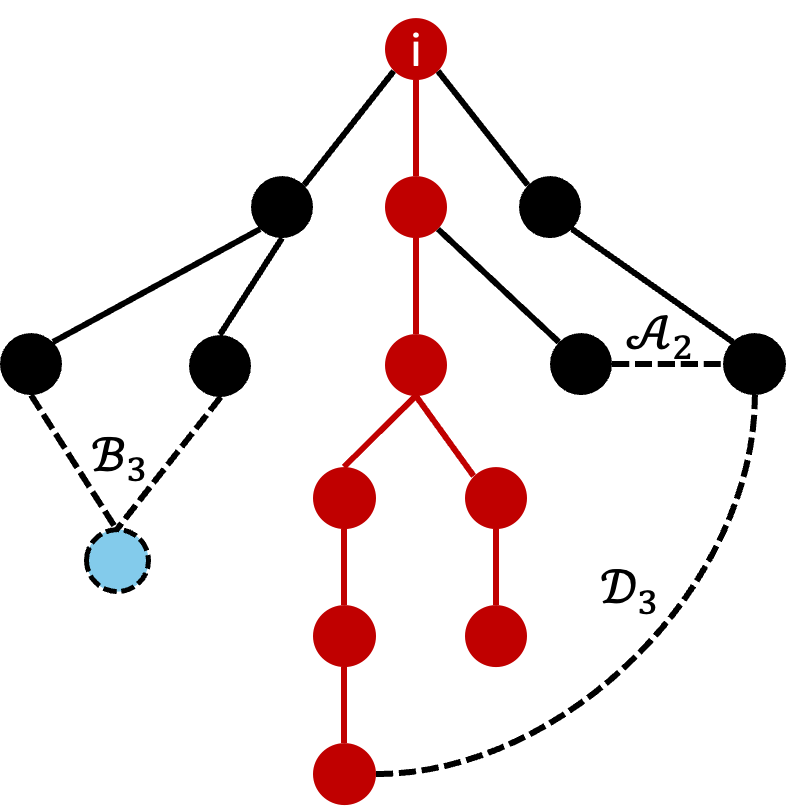}
    \caption{Illustration of events defined above. In the figure, the conditioned graph $\ch$ is represented by the red nodes and  edges. The black nodes and edges represent $i$'s neighborhood outside the conditioned graph. The existence of the dashed edges corresponds to the events annotated next to them.}
    \label{fig:cond_neighbor}
\end{figure}
        
        To prove the theorem, it suffices to show that 
        \begin{equation}
            \P(\cup_{h=1}^{2l}(\ca_h\cup\cb_h\cup\cc_h\cup\cd_h)\cond \ch\in\cg)=O(n^{-1+\gamma}).
        \end{equation}
        By the chain rule, we can write
        \begin{align}
            &\P(\cup_{h=1}^{2l}(\ca_h\cup\cb_h\cup\cc_h\cup\cd_h)\cond \ch\in\cg)\nonumber\\
            &\le \sum_{h=1}^{2l}\P(\ca_h\cup\cb_h\cup\cc_h\cup\cd_h\cond  \{\ch\subset\cg\}\cap (\cap_{h'=1}^{h-1}\ca^c_{h'}\cap\cb^c_{h'}\cap\cc^c_{h'}\cap\cd^c_{h'})),
        \end{align}
        and we will bound each term in this summation.

        First, notice that given event $ \{\ch\subset\cg\}\cap (\cap_{h'=1}^{h-1}\ca^c_{h'}\cap\cb^c_{h'}\cap\cc^c_{h'}\cap\cd^c_{h'})$, we know that
        \[
        |S(i,h)\cond\sto C+\bin\left(n,\frac{\mu}{n}\cdot 4\log n\cdot\mu^{h-1}\prod_{t=0}^{h-1}(1+\mu^{-t/2})\left(1+\mu^{-t}\right)\right).
        \]
        Here we add $C$ on the right-hand side because of there are at most edges from $H_{h-1}$ to $H_h$ in the given graph $\ch$. By Bennett's inequality, we have
        \begin{align*}
            &\P\Bigg(\bin\left(n,\frac{\mu}{n}\cdot 4\log n\cdot\mu^{h-1}\prod_{t=0}^{h-1}(1+\mu^{-t/2})\left(1+\mu^{-t}\right)\right)\\
            &\;\;\;\;\;\ge 4\log n\cdot\mu^{h}\prod_{t=0}^{h}(1+\mu^{-t/2})\prod_{t=0}^{h-1}\left(1+\mu^{-t}\right)\Bigg)\\
            &\le n^{-4/3}.
        \end{align*}
        Meanwhile, we have
        \[
        C\le 4\log n\prod_{t=0}^{h}(1+\mu^{-t/2})\prod_{t=0}^{h-1}\left(1+\mu^{-t}\right)
        \]
        for all large enough $n$. Therefore, we get 
        \begin{equation}
            \label{eq:ch}
            \P(\cc_h\cond\{\ch\subset\cg\}\cap (\cap_{h'=1}^{h-1}\ca^c_{h'}\cap\cb^c_{h'}\cap\cc^c_{h'}\cap\cd^c_{h'}))\le n^{-4/3}.
        \end{equation}
        Notice that event $\cd_h$ essentially means that there exists an edge between $S(i,h-1)$ and $H\setminus(\cup_{h'=0}^{h}H_{h'})$. We can apply a union bound over the vertex pairs in these two sets and get
        \begin{align}
        &\P(\cd_h\cond\{\ch\subset\cg\}\cap (\cap_{h'=1}^{h-1}\ca^c_{h'}\cap\cb^c_{h'}\cap\cc^c_{h'}\cap\cd^c_{h'}) )\nonumber\\
        &\le \frac{\mu}{n}\cdot K\log n\cdot 4\log n\cdot\mu^{h-1}\prod_{t=0}^{h-1}(1+\mu^{-t/2})(1+\mu^{-t})\nonumber\\
        &= O(n^{-1+\gamma/2}),\label{eq:dh}
    \end{align}
    where the last equality follows because $h=o(\log n)$, and therefore $\mu^h=O(n^{\gamma/8})$.
    For event $\cb_h$, we can again apply the union bound to get
    \begin{align}
        &\P(\cb_h\cond \{\ch\subset\cg\}\cap (\cap_{h'=1}^{h-1}\ca^c_{h'}\cap\cb^c_{h'}\cap\cc^c_{h'}\cap\cd^c_{h'}))\nonumber\\
        &\le \frac{\mu^2}{n^2}\cdot n\cdot  \left(4\log n\cdot\mu^{h-1}\prod_{t=0}^{h-1}(1+\mu^{-t/2})(1+\mu^{-t})\right)^2\nonumber\\
        &\;\;\;\;+\frac{\mu}{n}\cdot C\cdot 4\log n\cdot\mu^{h-1}\prod_{t=0}^{h-1}(1+\mu^{-t/2})(1+\mu^{-t})\nonumber\\
        &=O(n^{-1+\gamma/2}).\label{eq:bh}
    \end{align}
    When we are further given events $\cc_h^c$ and $\cd_h^c$, we know that the probability of having an edge between each vertex pair in $S(i,h)$ is $\mu/n$ and there are at most $\left(4\log n\cdot\mu^{h}\prod_{t=0}^{h}(1+\mu^{-t/2})(1+\mu^{-t})\right)^2$ vertex pairs. Then by union bound, we have
    \begin{align}
        &\P(\ca_h\cond  \{\ch\subset\cg\}\cap (\cap_{h'=1}^{h-1}\ca^c_{h'}\cap\cb^c_{h'}\cap\cc^c_{h'}\cap\cd^c_{h'})\cap \cc_h^c\cap\cd_h^c)\nonumber\\
        &\le \frac{\mu}{n}\cdot\left(4\log n\cdot\mu^{h}\prod_{t=0}^{h}(1+\mu^{-t/2})(1+\mu^{-t})\right)^2\nonumber\\
        &=O(n^{-1+\gamma/2}).\label{eq:ah}
    \end{align}
    Equations~\eqref{eq:ch}-\eqref{eq:ah} together yield that

    \begin{align*}
        &\P(\cup_{h=1}^{2l}(\ca_h\cup\cb_h\cup\cc_h\cup\cd_h)\cond \ch\in\cg)\\
        &\le 2l\cdot O(n^{-1+\gamma/2})\\
        &=O(n^{-1+\gamma}),
    \end{align*}
    which completes the proof.
    \end{proof}

\subsection{Probability of non-empty neighborhood}
    \nonemptyprob*
    \begin{proof}
        In the \erdos--\renyi random graph $\cg\sim\mathrm{ER}(n,\frac{\mu}{n})$, with high probability, there exists a unique giant component of size at least $Cn(1+o(1))$, where $c$ is a constant (see for example Chapter 8 in~\cite{blum2020foundations}). 
        It then follows symmetry that
        \begin{align*}
            &\P(\text{there exists a giant component of size $cn(1+o(1))$ in $\cg$ and vertex $i$ is in this component})\\
            &\ge C-o(1).
        \end{align*}
        Meanwhile, Lemma~\ref{lem:neighborhood_size} implies that 
        \[
        \P(|N_{\cg}(i,k-1)|\le n^{1-\gamma/2})\ge 1-o(1).
        \]
        Since $n^{1-\gamma/2}<Cn$ for all large enough $n$, we know that when $i$ is in the giant component of size $cn(1+o(1))$ and $|N_{\cg}(i,k-1)|\le n^{1-\gamma/2}$, the set $S_\cg(i,k)$ has to be non-empty. Therefore, we have
        \[
        \P(S_\cg(i,k)\neq \emptyset)\ge C-o(1)=\Theta(1).
        \]
    \end{proof}

\section{Concentration Inequalities and Technical Lemmas}\label{appd:technical}
\subsection{Concentration inequalities}
\begin{lemma}[Bennett's inequality~\citep{bennett1962}]
\label{lem:bennett}
    Suppose $X_1,\ldots,X_n$ are independent random variables with finite second moments. Assume $X_i\le B$ a.s. for each $i\in [n]$. Let $V=\sum_{i=1}^n\E[X_i^2]$. Then, for every $x\ge 0$, 
    \[
    \P\left(\sum_{i=1}^n(X_i-\E[X_i])\ge x\right)\le \exp\left(-\frac{V}{B^2}\cdot \phi(xB/V)\right),
    \]
    where $\phi(t):=(1+t)\log (1+t)-t$ for $t\ge 0$.
\end{lemma}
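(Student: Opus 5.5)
We follow the standard Cramér--Chernoff route: bound the moment generating function of each summand, then optimize over the exponential parameter. Assume $B>0$; this is the case used in the paper, where the summands are edge indicators, and for $B\le 0$ the statement is either trivial or follows by a limiting argument. For $t>0$, Markov's inequality applied to $e^{t\sum_i (X_i-\E[X_i])}$, together with independence, gives
\[
\P\Big(\sum_{i=1}^n(X_i-\E[X_i])\ge x\Big)\le e^{-tx}\prod_{i=1}^n \E\big[e^{t(X_i-\E[X_i])}\big].
\]
So it suffices to control each factor.

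The key pointwise step uses the function $g(u):=(e^u-1-u)/u^2$, extended by $g(0)=1/2$. We will use that $g$ is nondecreasing on $\mathbb{R}$. This follows from the power series $g(u)=\sum_{k\ge 2}u^{k-2}/k!$ for $u\ge 0$, and from a short derivative check, or from the integral representation $g(u)=\int_0^1(1-s)e^{su}\,ds$, which gives monotonicity on all of $\mathbb{R}$ at once. Since $tX_i\le tB$ almost surely, monotonicity yields
\[
e^{tX_i}=1+tX_i+t^2X_i^2\,g(tX_i)\le 1+tX_i+X_i^2\,\frac{e^{tB}-1-tB}{B^2}.
\]
Taking expectations and using $1+y\le e^y$, we get
\[
\E\big[e^{t(X_i-\E[X_i])}\big]\le e^{-t\E[X_i]}\Big(1+t\E[X_i]+\E[X_i^2]\,\tfrac{e^{tB}-1-tB}{B^2}\Big)\le \exp\Big(\E[X_i^2]\,\tfrac{e^{tB}-1-tB}{B^2}\Big).
\]
Note that we use only the second moment $\E[X_i^2]$, not the variance. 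This is why $V$ in the statement is defined as $\sum_i\E[X_i^2]$.

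Multiplying over $i$ gives the bound $\exp\big(-tx+\frac{V}{B^2}(e^{tB}-1-tB)\big)$ for every $t>0$. We then optimize over $t$. Setting the derivative to zero gives $e^{tB}=1+xB/V$, that is, $t^*=\frac1B\log(1+xB/V)$, which is nonnegative for $x\ge0$. Substituting $t^*$ and writing $u=xB/V$, the exponent becomes
\[
-\frac{V}{B^2}\big(u\log(1+u)-u+\log(1+u)\big)=-\frac{V}{B^2}\,\phi(u),
\]
which is exactly the claimed bound. The case $x=0$ is trivial since $\phi(0)=0$.

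The only step that needs genuine care is the monotonicity of $g$ on the negative half-line, because the $X_i$ may take negative values. The integral representation settles this cleanly, since the integrand $(1-s)e^{su}$ is increasing in $u$ for each $s\in[0,1]$. Everything else is routine algebra.
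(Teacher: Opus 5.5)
Your proof is correct, but there is no proof in the paper to compare it with. The paper does not prove this lemma; it quotes it from \citep{bennett1962} and uses it only for binomial tails, i.e.\ Bernoulli summands with $B=1$.

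What you give is the textbook Cram\'er--Chernoff proof of exactly this version, with $\E[X_i^2]$ in place of the variance, as in Boucheron, Lugosi and Massart's monograph. The main steps all check out:
\begin{itemize}
\item the pointwise bound $e^{tX_i}\le 1+tX_i+X_i^2(e^{tB}-1-tB)/B^2$ for $t>0$;
\item the step $e^{-t\E[X_i]}(1+y)\le e^{-t\E[X_i]+y}$ with $y=t\E[X_i]+\E[X_i^2](e^{tB}-1-tB)/B^2$;
\item the optimization at $t^*=B^{-1}\log(1+xB/V)$, which gives exactly $-\frac{V}{B^2}\phi(xB/V)$.
\end{itemize}
You correctly flag monotonicity of $g$ on the negative half-line as the one delicate point, since the $X_i$ may be negative. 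The representation $g(u)=\int_0^1(1-r)e^{ur}\,dr$ settles it cleanly.

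Two cosmetic remarks. First, your aside about $B\le 0$ is unnecessary. The statement divides by $B$ and $V$ and evaluates $\phi$ only on $[0,\infty)$, so $B>0$ and $V>0$ are simply standing assumptions. Second, for $x>0$ you have $t^*>0$ strictly, which is what the Chernoff step (stated for $t>0$) needs; $x=0$ is trivial, as you note.
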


\begin{lemma}[Chernoff--Hoeffding bound~\citep{dubhashi2009}]
\label{lem:CB}
    Let $X=\sum_{i=1}^nX_i$ be a sum of independent Bernoulli random variables with $\mu=\E[X]$. Then for any $\delta>0$,
    \[
    \P(X\ge(1+\delta)\mu)\le \left(\frac{e^\delta}{(1+\delta)^{1+\delta}}\right)^\mu.
    \]
\end{lemma}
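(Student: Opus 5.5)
The plan is the standard exponential-moment (Chernoff) argument. Write $p_i=\P(X_i=1)$, so that $\mu=\sum_{i=1}^n p_i$, and fix a parameter $t>0$ to be optimized at the end. Since $x\mapsto e^{tx}$ is increasing and positive, Markov's inequality applied to $e^{tX}$ gives
\[
\P(X\ge(1+\delta)\mu)=\P\left(e^{tX}\ge e^{t(1+\delta)\mu}\right)\le e^{-t(1+\delta)\mu}\,\E\left[e^{tX}\right].
\]

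The first key step is to bound the moment generating function. By independence of the $X_i$, $\E[e^{tX}]=\prod_{i=1}^n\E[e^{tX_i}]$. For a Bernoulli variable, $\E[e^{tX_i}]=1+p_i(e^t-1)$. Using the elementary inequality $1+x\le e^x$ with $x=p_i(e^t-1)\ge0$, each factor is at most $\exp(p_i(e^t-1))$. Multiplying over $i$ then yields
\[
\E[e^{tX}]\le\exp\left(\mu(e^t-1)\right).
\]
This reduces the problem to the Poisson-like bound
\[
\P(X\ge(1+\delta)\mu)\le\exp\left(\mu(e^t-1)-t(1+\delta)\mu\right).
\]

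The second step is to optimize over $t$. The exponent $\mu(e^t-1)-t(1+\delta)\mu$ is convex in $t$, and it is minimized at $e^t=1+\delta$, that is, $t=\log(1+\delta)$, which is positive because $\delta>0$. Substituting this choice, the exponent becomes $\mu\delta-(1+\delta)\mu\log(1+\delta)$, so the bound equals
\[
\left(\frac{e^{\delta}}{(1+\delta)^{1+\delta}}\right)^{\mu},
\]
which is exactly the claimed bound.

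There is no genuine obstacle in this argument. The only points needing care are that $t>0$, so that Markov's inequality applies in the right direction, and that independence is used to factor the expectation. The inequality $1+x\le e^x$ is what allows the heterogeneous $p_i$ to be replaced by their sum $\mu$.
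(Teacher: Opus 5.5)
Your proof is correct and complete: Markov's inequality applied to $e^{tX}$, the bound $1+p_i(e^t-1)\le \exp(p_i(e^t-1))$, and the choice $t=\log(1+\delta)$ give exactly the stated bound, and the degenerate case $\mu=0$ is also fine since both sides then equal $1$. The paper does not prove this lemma itself but cites Dubhashi and Panconesi, where the bound is obtained by this same exponential-moment argument.
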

\subsection{Technical lemmas for the Poisson approximation}
\begin{restatable}{lemma}{binpoimultigap}
\label{lem:binom-poi-gap}
Let $n\in \mathbb{N}^+$, $\mu\in\mathbb{R}^+$ and $x,y\in\mathbb{N}$ be such that $x<n$ and $y\le n-x$.
    Then we have
    \[
    \frac{\P(\bin(n-x,\mu/n)=y)}{\P(\poi(\mu)=y)}\le \exp\left(\frac{\mu}{n}(x+y)\right).
    \]
\end{restatable}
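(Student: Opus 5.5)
This is a direct computation from the two probability mass functions: I will write the ratio explicitly and bound each factor separately. For $y\le n-x$,
\[
\frac{\P(\bin(n-x,\mu/n)=y)}{\P(\poi(\mu)=y)}=\frac{\binom{n-x}{y}(\mu/n)^y(1-\mu/n)^{n-x-y}}{e^{-\mu}\mu^y/y!}
=\frac{(n-x)(n-x-1)\cdots(n-x-y+1)}{n^y}\cdot e^{\mu}\Bigl(1-\frac{\mu}{n}\Bigr)^{n-x-y}.
\]
The falling-factorial term has $y$ factors, each at most $n$, so it is at most $1$ and can be dropped.

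It then remains to show
\[
e^{\mu}\Bigl(1-\frac{\mu}{n}\Bigr)^{n-x-y}\le \exp\Bigl(\frac{\mu}{n}(x+y)\Bigr).
\]
If $\mu\le n$, then $1-\mu/n\in[0,1]$ and $1-t\le e^{-t}$, so the left side is at most
\[
e^{\mu}e^{-\mu(n-x-y)/n}=e^{\mu(x+y)/n},
\]
using $n-x-y\ge 0$. If $\mu>n$, the base $1-\mu/n$ is negative, and I control it through its absolute value instead. Since $\mu/n-1<\mu/n\le e^{\mu/n}$ and the exponent $n-x-y$ is a nonnegative integer, the product is at most $e^{\mu}e^{(\mu/n)(n-x-y)}$ in absolute value. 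This is not smaller than the target bound, so the falling-factorial factor has to be used after all. When $y\le n-x$ but $n-x-y>0$, the ratio can be negative, in which case the inequality holds trivially. When $n-x-y=0$, the $(1-\mu/n)$ factor equals $1$, and the claim reduces to $e^{\mu}\prod_i (n-x-i)/n\le e^{\mu(x+y)/n}=e^{\mu}$, which follows from the first step. For even positive exponents with $\mu>n$, I do not yet have an argument.

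The main obstacle is therefore this degenerate regime $\mu>n$ with even positive exponent $n-x-y$. It never arises in the paper's applications, where $\mu$ is a constant and $n\to\infty$. My plan is to state explicitly that the lemma is applied only with $\mu\le n$, either by adding this as a hypothesis or by noting that it holds for all large $n$, and to give the clean two-line argument above in that case. Everything else is elementary.
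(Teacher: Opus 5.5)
Your argument is correct and is essentially the paper's proof: bound the falling-factorial factor $(n-x)(n-x-1)\cdots(n-x-y+1)/n^{y}$ by $1$, then use $1-t\le e^{-t}$ to get $e^{\mu}(1-\mu/n)^{n-x-y}\le e^{\mu(x+y)/n}$. The regime $\mu>n$ needs no separate treatment and no added hypothesis. There $\mu/n>1$ is not a probability, so $\bin(n-x,\mu/n)$ is undefined; hence $\mu\le n$ is already implicit in the statement, and the paper's proof uses it tacitly in the same way.
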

\begin{proof}
    By the definition of binomial random variables, we have
    \[
    \P(\bin(n-x,\mu/n)=y)=\binom{n-x}{y}(\mu/n)^y(1-\mu/n)^{n-x-y},
    \]
    and by the definition of Poisson random variables, we have
    \[
    \P(\poi(\mu)=y)=\frac{e^{-\mu}\mu^y}{y!}.
    \]
    It then follows that
    \begin{align*}
        \frac{\P(\bin(n-x,\mu/n)=y)}{\P(\poi(\mu)=y)}&=\frac{(n-x)!}{(n-x-y)!}\left(\frac{1}{n}\right)^ye^\mu\left(1-\frac{\mu}{n}\right)^{n-x-y}\\
        &\le (n-x)^y \left(\frac{1}{n}\right)^y e^\mu\left(1-\frac{\mu}{n}\right)^{n-x-y}\\
        &\le e^\mu\left(1-\frac{\mu}{n}\right)^{n-x-y}\\
        &\le \exp\left(\mu-\frac{\mu}{n}(n-x-y)\right)\\
        &\le \exp\left(-\frac{\mu}{n}(x+y)\right).
    \end{align*}
\end{proof}
\begin{restatable}[Lemma 5 in~\citep{mossel2015}]{lemma}{lembinompoi}
    \label{lem:binom-poi-tv}
        If $m$ and $n$ are positive integers and $c$ is a positive real number independent of $m$ and $n$, then
        \[
        \dtv\left(\bin\left(m,\frac{c}{n}\right),\poi(c)\right)=O\left(\frac{\max\{1,|m-n|\}}{n}\right)
        \]
    \end{restatable}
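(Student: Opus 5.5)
We prove the bound by passing through an intermediate Poisson law. Write $p=c/n$ and $\mu=mp=mc/n$. By the triangle inequality for total variation,
\[
\dtv\left(\bin(m,p),\poi(c)\right)\le \dtv\left(\bin(m,p),\poi(\mu)\right)+\dtv\left(\poi(\mu),\poi(c)\right).
\]
The first term measures the Poisson approximation at the matched mean. The second measures the mismatch between the means $\mu$ and $c$. We bound each by $O(\max\{1,|m-n|\}/n)$, with constants depending only on $c$.

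First we dispose of large $m$. If $m\ge 2n$, then $|m-n|\ge n$, so the right-hand side $O(\max\{1,|m-n|\}/n)$ is at least a constant. Since a total variation distance is always at most $1$, the claim holds trivially. Hence we may assume $m<2n$, so that $\mu\le 2c$.

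\emph{First term.} We use Le Cam's inequality: for a sum of $m$ independent Bernoulli$(p)$ variables,
\[
\dtv\left(\bin(m,p),\poi(mp)\right)\le mp^2.
\]
To keep the argument self-contained, we would prove it by coupling each Bernoulli$(p)$ variable with a $\poi(p)$ variable so that they disagree with probability at most $p^2$. This uses the maximal coupling, since $\dtv(\mathrm{Bern}(p),\poi(p))=p(1-e^{-p})\le p^2$. Summing over the $m$ coordinates and using that a sum of independent Poissons is Poisson gives the inequality. With $m<2n$ this yields the bound $mc^2/n^2\le 2c^2/n=O(1/n)$.

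\emph{Second term.} Without loss of generality let $a=\min(\mu,c)$ and $b=\max(\mu,c)$. We realize $\poi(b)$ as $\poi(a)+W$ with $W\sim\poi(b-a)$ independent. The two laws then disagree only on the event $\{W\ne 0\}$, so
\[
\dtv\left(\poi(\mu),\poi(c)\right)\le 1-e^{-(b-a)}\le b-a=|\mu-c|=\frac{c\,|m-n|}{n}.
\]
Combining the two terms gives
\[
\dtv\left(\bin(m,p),\poi(c)\right)\le \frac{2c^2+c\,|m-n|}{n}=O\left(\frac{\max\{1,|m-n|\}}{n}\right),
\]
which is the claimed bound. No step is a genuine obstacle; the only care needed is to treat the regime $m\ge 2n$ separately, as done above, because Le Cam's bound $mc^2/n^2$ is not uniform in $m$.
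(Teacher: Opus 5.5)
Your proof is correct and follows the paper's argument essentially step for step: the same trivial disposal of the regime $m\ge 2n$, the same triangle inequality through $\poi(mc/n)$, and the same superposition bound $\dtv(\poi(a),\poi(b))\le 1-e^{-(b-a)}\le b-a$. The only difference is that you derive the Le Cam bound $mp^2$ yourself via a coordinatewise maximal coupling, whereas the paper simply cites the classical Poisson approximation.
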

\begin{proof}
    We only consider the case of $m\le  2n$ as the statement is trivial when $m>2n$. It follows by the classic Poisson approximation of binomial random variables (see for example~\citep{hodges1960}) that
    \begin{equation}
        \label{eq:poi-approx}
        \dtv\left(\bin\left(m,\frac{c}{n}\right),\poi\left(\frac{mc}{n}\right)\right)=O(n^{-1}).
    \end{equation}
    By the triangle inequality, we have
    \begin{align*}
        &\dtv\left(\bin\left(m,\frac{c}{n}\right),\poi\left(c\right)\right)\\
        &\le  \dtv\left(\bin\left(m,\frac{c}{n}\right),\poi\left(\frac{mc}{n}\right)\right)+\dtv\left(\poi(c),\poi\left(\frac{mc}{n}\right)\right).
    \end{align*}
    Therefore, it suffices to show that 
    \[
    \dtv\left(\poi(c),\poi\left(\frac{mc}{n}\right)\right)=O\left(\frac{|m-n|}{n}\right).
    \]
    Consider two real numbers $\lambda<\mu$. We have
    \begin{equation}
        \dtv(\poi(\lambda),\poi(\mu))\le \P(\poi(\mu-\lambda)>0)=1-\exp(-(\mu-\lambda))\le \mu-\lambda\label{eq:tv_poi}
    \end{equation}
    This implies that 
    \[
    \dtv\left(\poi(c),\poi\left(\frac{mc}{n}\right)\right)=O\left(\frac{|m-n|c}{n}\right)=O\left(\frac{|m-n|}{n}\right),
    \]
    and completes the proof.
\end{proof}
\begin{restatable}{lemma}{lemmultipoi}
        \label{lem:multi-poi}
        Let $m$ and $n$ be two positive integers, and $c_1,c_2,c_3$ are constants independent of $m$ and $n$. Suppose we have a sequence of random variables $$(X_1,X_2,X_3,X_4)\sim\mathrm{Multi}(m,c_1/n,c_2/n,c_3/n,1-(c_1+c_2+c_3)/n),$$ and three independent Poisson random variables $X_1'\sim \poi(c_1)$, $X_2'\sim \poi(c_2)$ and $X_3'\sim\poi(c_3)$. Then we have 
        \[
        \dtv(\cl(X_1,X_2,X_3),\cl(X_1',X_2',X_3'))=\co\left(\frac{|m-n|+\log^2 n}{n}\right).
        \]
        
    \end{restatable}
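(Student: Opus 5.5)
The plan is to go through Poisson random variables: the triangle inequality splits the distance as
\[
\dtv\bigl(\cl(X_1,X_2,X_3),\cl(X_1',X_2',X_3')\bigr)\le \dtv\bigl(\cl(X_1,X_2,X_3),\cl(Y_1,Y_2,Y_3)\bigr)+\dtv\bigl(\cl(Y_1,Y_2,Y_3),\cl(X_1',X_2',X_3')\bigr),
\]
where $Y_k\sim\poi(mc_k/n)$ are independent. The second term is easy. A product of independent coordinates has total variation at most the sum of the coordinatewise distances, and \eqref{eq:tv_poi} gives $\dtv(\poi(c_k),\poi(mc_k/n))\le c_k|m-n|/n$. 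So it is $O(|m-n|/n)$. We may assume $m\le 2n$, since otherwise the claimed bound is at least a constant and there is nothing to prove.

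For the first term I would use the multinomial-as-sum representation. Write $(X_1,X_2,X_3)=\sum_{t=1}^m \xi_t$ with $\xi_t$ i.i.d. on $\{e_1,e_2,e_3,0\}$, taking the value $e_k$ with probability $c_k/n$. Next, write $(Y_1,Y_2,Y_3)=\sum_{t=1}^m\eta_t$ with $\eta_t$ i.i.d. having independent $\poi(c_k/n)$ coordinates; Poisson superposition makes this valid. Couple each pair $(\xi_t,\eta_t)$ maximally. Both put mass $1-O(1/n)$ at $0$, and their masses at each $e_k$ differ by $O(1/n^2)$, so $\P(\xi_t\neq\eta_t)=O(n^{-2})$. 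The difference comes from the $O(n^{-2})$ discrepancy at each $e_k$ and at $0$, plus the $O(n^{-2})$ mass that $\eta_t$ puts on vectors with total at least $2$. A union bound over the $m\le 2n$ indices then gives $\P(X\neq Y)=O(m/n^2)=O(1/n)$, which is at most the stated $O((|m-n|+\log^2 n)/n)$.

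The only point needing care is the per-index coupling estimate. One has to check that the atoms at $e_k$ match to second order, $c_k/n$ versus $(c_k/n)e^{-\sum_j c_j/n}$, and that the mass at $0$ matches, $1-\sum_j c_j/n$ versus $e^{-\sum_j c_j/n}$. Both are $1-\sum_j c_j/n+O(n^{-2})$-type expansions. Alternatively, I would simply invoke the classical Le Cam bound, which gives total variation at most $\sum_t p_t^2$ between a sum of independent indicators and a Poisson law, extended to the multivariate setting. This actually yields the sharper $O(|m-n|/n+1/n)$, and the $\log^2 n$ slack in the statement is not needed.
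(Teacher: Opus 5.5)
Your proof is correct, and it takes a different route from the paper.

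The paper samples the multinomial sequentially: first $X_1\sim\bin(m,c_1/n)$, then $X_2\mid X_1\sim\bin(m-X_1,c_2/(n-c_1))$, then $X_3$ likewise. It couples each conditional binomial with its Poisson target using the univariate bound of Lemma~\ref{lem:binom-poi-tv} together with \eqref{eq:tv_poi}. Because the trial counts for $X_2$ and $X_3$ are shifted by the earlier draws, it truncates on $X_1,X_2\le\log^2 n$ via a Chernoff bound, and that truncation is exactly where the $\log^2 n$ term comes from.

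You instead couple at the level of individual trials, a multivariate Le Cam argument, and the intermediate independent $\poi(mc_k/n)$ law absorbs the $|m-n|$ discrepancy. Your per-trial estimate $\dtv(\xi_t,\eta_t)=O(n^{-2})$ checks out. With $c=c_1+c_2+c_3$, the atoms at $0$ and at each $e_k$ agree to $O(n^{-2})$, and $\P(\poi(c/n)\ge 2)=O(n^{-2})$. Taking the pairs $(\xi_t,\eta_t)$ independent across $t$ makes the summed coupling legitimate.

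The paper's approach reuses a lemma it already states and never leaves the univariate setting. Yours avoids the conditioning and truncation, is self-contained, and gives the sharper bound $O((1+|m-n|)/n)$. The extra $\log^2 n/n$ is harmless in the paper's uses in Appendix~\ref{appd:proof_tv}, where only an $O(n^{-\gamma})$ error is needed.
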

\begin{proof}
    By the definition of the multinomial distribution, random variables $X_1,X_2,X_3$ can be sampled in a sequential manner:
        \begin{enumerate}
            \item Sample $X_1\sim\bin(m,c_1/n)$.
            \item Given $X_1$, sample $X_2\sim\bin(m-X_1,\frac{c_2}{n-c_1})$.
            \item Given $X_1$ and $X_2$, sample $X_3\sim\bin(m-X_1-X_2,\frac{c_3}{n-c_1-c_2})$.
        \end{enumerate}

        By Lemma~\ref{lem:binom-poi-tv}, we can couple $X_1$ and $X_1'$ such that
        \begin{equation}
        \label{eq:X1}
        \P(X_1\neq X_1')\le O\left(\frac{1}{n}\right)+O\left(\left|\frac{mc_1}{n}-c_1\right|\right)=O\left(\frac{1+|m-n|}{n}\right).
        \end{equation}
        By the Chernoff--Hoeffding bound, we know that 
        \begin{equation}
        \label{eq:X1_bound}
        \P(X_1\ge \log^2 n)\le \exp(-C\log^2 n),
        \end{equation}
        for some constant $C>0$. Conditioned on events $\{X_1=X_1'\}$ and $\{X_1\le \log^2 n\}$, we can use the Poisson approximation and~\eqref{eq:tv_poi} to get
        \begin{align}
        &\P(X_2\neq X_2'\cond X_1=X_1',X_1\le \log^2 n)\nonumber\\
        &\le \dtv\left(\bin\left(m-\log^2n,\frac{c_2}{n-c_1}\right),\poi\left(\frac{c_2(m-\log^2n)}{n-c_1}\right)\right)\nonumber\\
        &
        \;\;\;\;+\dtv\left(\poi(c_2),\poi\left(\frac{c_2(m-\log^2n)}{n-c_1}\right)\right)\nonumber\\
        &\le O\left(\frac{1}{n}\right)+O\left(\left|\frac{(m-\log^2 n)c_2}{n-c_1}-c_2\right|\right)\nonumber\\
        &=O\left(\frac{|m-n|+\log^2 n}{n}\right).\label{eq:X2}
        \end{align}
        We can apply the Chernoff--Hoeffding bound to get
        \begin{equation}
        \label{eq:X2_bound}
        \P(X_2\ge \log^2 n\cond X_1=X_1',X_1\le \log^2 n)\le \exp(-C\log^2 n).
        \end{equation}
        For $X_3$, we can similarly get the bound
        \begin{align}
            &\P(X_3\neq X_3'|X_1=X_1',X_2=X_2',X_1\le \log^2 n,X_2\le \log^2 n)\nonumber\\
            &\le O\left(\frac1n\right)+ O\left(\left|\frac{(m-2\log^2 n)c_3}{n-c_1-c_2}-c_3\right|\right)\nonumber\\
            &=O\left(\frac{|m-n|+\log^2 n}{n}\right).\label{eq:X3}
        \end{align}
        Finally, putting~\eqref{eq:X1}-\eqref{eq:X3} together gives
        \[
        \P(X_1=X_1',X_2=X_2',X_3=X_3')\ge 1-O\left(\frac{|m-n|+\log^2 n}{n}\right),
        \]
        and completes the proof.
\end{proof} 

\section{More Details on the Likelihood Ratio Tests}\label{appd:tree_test}
Recall the definition of distributions $\mathbb{Q}_d^{(\lambda,s,\td{s})},\mathbb{P}_{d}^{(\lambda,s,\td{s})}$. The for two unlabeled trees $t$ and $\td{t}$, the likelihood ratio is given by
\[
L_{d}^{(\lambda,s,\td{s})}(t,\td{t})=\frac{\mathbb{P}_{d}^{(\lambda,s,\td{s})}(t,\td{t})}{\mathbb{Q}_d^{(\lambda,s,\td{s})}(t,\td{t})}.
\]

\subsection{Recursive formula of the likelihood ratio}\label{appd:LR_recursive}
The following recursive formula of $L_{d}^{(\lambda,s,\td{s})}(t,\td{t})$ is stated in Proposition 4.3 of~\cite{maier2025}.

First, for the case of $d=0$, 
\[
L_{d}^{(\lambda,s,\td{s})}(t,\td{t})=1.
\]
Now suppose $d\ge 1$. For two unlabeled trees $t$ and $\td{t}$, let $c$ and $\td{c}$ denote their respective root degrees. Denote by $t_1,\ldots,t_{c}$ (resp. $\td{t}_1,\ldots,\td{t}_{\td{c}}$) the subtrees attached to the rooted of $t$ (resp. $\td{t}$). Then the likelihood ratio $L_{d}^{(\lambda,s,\td{s})}$ can be iteratively expressed as 
\begin{equation}
    \label{eq:LR_recursive}
    L_{d}^{(\lambda,s,\td{s})}(t,\td{t})=\sum_{k=0}^{\min\{c,\td{c}\}}\Psi(k,c,\td{c})\sum_{\substack{\sigma\in \mathrm{Inj}([k],[c])\\\td{\sigma}\in \mathrm{Inj}([k],[\td{c}])}}\prod_{i=1}^kL^{(\lambda,s,\td{s})}_{d-1}(t_{\sigma(i)},\td{t}_{\td{\sigma}(i)}),
\end{equation}
where 
\[
\Psi(k,c,\td{c}):=e^{\lambda s\td{s}}\frac{(1-\td{s})^{c-k}(1-s)^{\td{c}-k}}{\lambda^k k!},
\]
and we use $\mathrm{Inj}([m],[m'])$ to denote the set of all injective mappings from $[m]$ to $[m']$ for any $m\le m'$.

\begin{remark}[Time complexity of computing the likelihood ratio]
\label{rem:LR_complexity}
    Let $c_\mathrm{max}$ denote the maximum vertex degree in the two trees $t$ and $\td{t}$. Let $|t|$ and $|\td{t}|$ denote the total number of vertices in $t$ and $\td{t}$ respectively. Because the cardinality of the sets $\mathrm{Inj}([k],[c])$ and $\mathrm{Inj}([k],[\td{c}])$ can both be upper bounded by $c_\mathrm{max}!$, the time complexity of each recursive computation using~\eqref{eq:LR_recursive} is at most $c_\mathrm{max}^2\cdot(c_\mathrm{max}!)^2$. Then the total complexity of computing $L_{d}^{(\lambda,s,\td{s})}(t,\td{t})$ is at most $|t|\cdot|\td{t}|.c_\mathrm{max}^2\cdot(c_\mathrm{max}!)^2$.

    Under the assumption that $c_\mathrm{max}\le \frac{(1+o(1))\log n}{\log\log n}$ and both $t$ and $\td t$ have depth at most $l=\lfloor\sqrt{\log n}\rfloor$, we have $|t|=n^{o(1)}$, $|\td t|=n^{o(1)}$ and $c_\mathrm{max}!=n^{1+o(1)}$. Therefore, the time complexity of each likelihood ratio computation in Algorithm~\ref{alg:mpmatch} is at most $n^{2+o(1)}$.
\end{remark}

\subsection{Feasibility of likelihood ratio tests}\label{appd:LR_feasibility}
The following theorem follows from Theorems 4.9 and 4.10 in~\cite{maier2025}, and it provide sufficient conditions for the likelihood ratio test to achieve a non-vanishing power.

\begin{restatable}{theorem}{LRtest}
\label{thm:tree_test}
    Let $\alpha\approx0.3383$ denote Otter's tree counting constant. Suppose $s\td{s}>\alpha$, $\lambda s\td{s}>1$ and $\lambda>\lambda'(s,\td{s})$, where $\lambda'$ is a constant depending only on $s$ and $\td{s}$. Then the following inequality holds:
    \begin{equation}
    \label{eq:power}
    \liminf_{d\rightarrow \infty} \mathbb{P}^{(\lambda,s,\td{s})}_d(L^{(\lambda,s,\td{s})}_d(\ct,\td{\ct})>\exp(C(\lambda s \td{s})^d))\ge 1-p^{\mathrm{ext}}_{\lambda s\td{s}},
    \end{equation}
    where $C$ is a constant depending only on $\lambda,s$ and $\td{s}$.
\end{restatable}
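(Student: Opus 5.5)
The plan is to prove the bound by a comparison between the likelihood ratio $L_d$ and a Galton--Watson process built on the intersection tree $\ct^*$. The fixed constant $C$ will come from the fact that, under $s\td s>\alpha$ and $\lambda>\lambda'(s,\td s)$, the log-likelihood at a moderate depth $d_0$ grows strictly faster than $(\lambda s\td s)^{d_0}$. Throughout write $\rho:=\lambda s\td s>1$. Let $W:=\lim_k |V_{\ct^*,k}|/\rho^k$ be the Kesten--Stigum martingale limit of $\ct^*\sim\gw_{\rho}$. Then $\{W>0\}$ coincides almost surely with survival of $\ct^*$, so $\P(W>0)=1-p^{\mathrm{ext}}_{\rho}$.

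\textbf{Step 1 (one-step lower recursion).} In the recursive formula~\eqref{eq:LR_recursive}, every summand is nonnegative. Keep only the single term with $k=c^*$ (the number of root children of $\ct^*$), together with the injections $\sigma,\td\sigma$ that pair each intersection child with its own copy. This gives
\[
\log L_d(\ct,\tdct)\ \ge\ \log\Psi(c^*,c,\td c)+\sum_{x\in V_{\ct^*,1}}\log L_{d-1}(\ct_x,\tdct_x).
\]
Here $\log\Psi(c^*,c,\td c)\ge -a(1+c+\td c)\log(2+c+\td c)$ for a constant $a=a(\lambda,s,\td s)$.

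Iterating this down to depth $d-d_0$, we get the following bound. The quantity $\log L_d$ is at least the sum, over $v\in V_{\ct^*,d-d_0}$, of $\log L_{d_0}$ evaluated at the depth-$d_0$ subtree pair rooted at $v$, minus the total $\Psi$-penalty $R_{d,d_0}$ accumulated over the vertices of $\ct^*$ at depth $<d-d_0$. Since the degrees are Poisson and the relevant vertex counts are of order $W\rho^{d-d_0}$, a law of large numbers gives $R_{d,d_0}\le b\,|V_{\ct^*,\le d-d_0}|$ with probability $1-o(1)$, for a constant $b$ independent of $d_0$. Similarly, the contribution of the vertices $v$ whose $\log L_{d_0}$ is negative is at least $-b'\,|V_{\ct^*,d-d_0}|$, using $\E_{\mathbb{P}}[(\log L_{d_0})^-]\le \E_{\mathbb P}[1/L_{d_0}]\le 1$.

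\textbf{Step 2 (the key input: superlinear growth at depth $d_0$).} The claim to establish is that
\[
M(d_0):=\text{the median (or a fixed quantile) of } \log L_{d_0} \text{ under } \mathbb P_{d_0}\text{, conditioned on survival to depth } d_0,
\]
satisfies $M(d_0)/\rho^{d_0}\to\infty$. This is where the hypotheses $s\td s>\alpha$ and $\lambda>\lambda'$ enter. I would follow the route of Theorems~4.9--4.10 in~\cite{maier2025}. One shows that $\mathrm{KL}(\mathbb P_{d}\|\mathbb Q_{d})$ grows at an exponential rate $\gamma>\rho$. The mechanism is the Otter count: there are about $\alpha^{-m}$ unlabeled trees with $m$ vertices, against signal $(s\td s)^m$, so the intersection tree is identifiable and contributes entropy at the rate of the full trees. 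One then upgrades this to a concentration statement for $\log L_{d}$ on survival, via a second-moment bound that is valid for large $\lambda$. I expect this step to be the main obstacle. In particular, passing from growth of the expectation to growth in probability needs control of $\mathrm{Var}_{\mathbb P}(\log L_d)$ relative to its mean squared.

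\textbf{Step 3 (assembling with fixed $C$).} Fix $K>0$ large. By Step 2, choose $d_0$ such that $\log L_{d_0}\ge M:=2K\rho^{d_0}\cdot(1+b+b')$ with probability at least $q_0$ on survival, where $q_0$ is bounded below. The depth-$(d-d_0)$ vertices of $\ct^*$ whose $d_0$-subtree pairs are ``good'' are independent across $v$. By Kesten--Stigum, their number is $(q_0'+o(1))W\rho^{d-d_0}$ with high probability, where $q_0'=q_0(1-p^{\mathrm{ext}}_{d_0,\rho})$. Combining this with Step 1 yields, with probability $1-o(1)$,
\[
\log L_d\ \ge\ \rho^{d}\,W\bigl(q_0' M\rho^{-d_0}-(b+b')\rho^{-d_0}\cdot O(1)\bigr)\ \ge\ K q_0'\,W\rho^d .
\]
Now set $C:=1$, a constant depending on nothing. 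Then
\[
\liminf_d \mathbb P_d\bigl(L_d>e^{C\rho^d}\bigr)\ge \P(W>1/(Kq_0')).
\]
Letting $K\to\infty$ (with $d_0=d_0(K)$) drives the right-hand side to $\P(W>0)=1-p^{\mathrm{ext}}_{\rho}$. Since the left-hand side does not depend on $K$, this gives the statement with the single constant $C$.
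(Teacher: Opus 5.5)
Your Step 2 is false, and Step 3 depends on it entirely. The claimed superlinear growth $M(d_0)/\rho^{d_0}\to\infty$ (equivalently, KL growing at some rate $\gamma>\rho$) cannot hold. The reason is as follows. Because $\mathbb{Q}_d$ is exactly the product of the two marginals of $\mathbb{P}_d$, we have $\mathrm{KL}(\mathbb{P}_d\|\mathbb{Q}_d)=I(\ct;\tdct)$. Since $\ct$ and $\tdct$ are conditionally independent given the intersection tree $\ct^*$, data processing gives $I(\ct;\tdct)\le H(\ct^*)\le H(\poi(\rho))\frac{\rho^d-1}{\rho-1}\le A\rho^d$, with $A:=H(\poi(\rho))/(\rho-1)$. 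So the available information is capped by the entropy of $\ct^*$. Otter's condition only decides whether a constant fraction of that entropy is extractable; it does not let you exceed it, contrary to your heuristic that the intersection tree "contributes entropy at the rate of the full trees." Combine this with $\E_{\mathbb{P}}[(\log L_d)^-]\le 1$, which you use yourself. Markov's inequality then gives $\mathbb{P}_{d_0}(\log L_{d_0}\ge M)\le (A\rho^{d_0}+1)/M$, so every fixed quantile of $\log L_{d_0}$ on survival is $O(\rho^{d_0})$ and $M(d_0)/\rho^{d_0}$ stays bounded. Hence the level $M=2K(1+b+b')\rho^{d_0}$ cannot be reached with probability bounded below once $K$ is large. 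The device of fixing $C=1$ and sending $K\to\infty$ therefore collapses. That device would in fact have given the bound for every $C>0$ at once, i.e.\ $\rho^{-d}\log L_d\to\infty$ in probability on survival, which the same Markov bound rules out.

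Moreover, no argument can deliver the display literally for a fixed $C>0$. By conditional independence, $\E[\log L_d\mid \ct^*=z]\le \mathrm{KL}(\mathcal{L}(\ct\mid z)\|\mathcal{L}(\ct))+\mathrm{KL}(\mathcal{L}(\tdct\mid z)\|\mathcal{L}(\tdct))\le 2\log(1/\P(\ct^*=z))$. Let $F_m$ be the event that $\ct^*$ has a single vertex in each of generations $1,\ldots,m$ and then survives. On $F_m$ the bound gives $\mathbb{P}_d(\log L_d>C\rho^d\mid F_m)\le \frac{2A\rho^{-m}}{C(1-p^{\mathrm{ext}}_\rho)}+o_d(1)$. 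Taking $m$ large shows $\limsup_d\mathbb{P}_d(L_d>e^{C\rho^d})<1-p^{\mathrm{ext}}_\rho$ for every fixed $C>0$; this reflects the mass of $W$ near $0$ on survival.

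What is true is the following: for each $\eps>0$ there is $C_\eps>0$ with $\liminf_d\mathbb{P}_d(L_d>e^{C_\eps\rho^d})\ge 1-p^{\mathrm{ext}}_\rho-\eps$. This is all the paper actually uses, since its Corollary takes the constant $1/\log n\to 0$ and concedes $\eps'$. The paper gives no proof of its own, only a citation of Theorems 4.9--4.10 of \cite{maier2025}.

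Your Steps 1 and 3 do prove this corrected form, provided Step 2 is replaced by a linear-order input from that reference, e.g.\ $\mathbb{P}_{d_0}(\log L_{d_0}\ge c\rho^{d_0})\ge\eta$ for fixed $c,\eta>0$ and all large $d_0$. The penalties per vertex of $V_{\ct^*,d-d_0}$, namely $b\rho/(\rho-1)+b'$, are constants. So for $d_0$ large you obtain $\log L_d\ge\kappa W\rho^d$ for some $\kappa>0$ with probability $1-o(1)$. Hence $\liminf_d\mathbb{P}_d(L_d>e^{C\rho^d})\ge\P(W>C/\kappa)$, which reaches $1-p^{\mathrm{ext}}_\rho$ only as $C\downarrow 0$.
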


Recall that $l=\lfloor\sqrt{\log n}\rfloor$ is the an input parameter to Algorithm~\ref{alg:mpmatch} that decides the depth of trees used in the likelihood ratio tests, and $\eps'>0$ is an arbitrarily small constant used in the definition of the probability sequence $p_0,p_1,\ldots$. The following corollary provides the performance guarantee of the likelihood ratio test with the specific parameters chosen in Algorithm~\ref{alg:mpmatch}.

\begin{corollary}
\label{cor:tree_test_finite}
    Suppose $qs^2>\alpha$, $\lambda sq>1$ and  $\lambda>\lambda'(qs,s)$. Then
    \begin{equation}
    \label{eq:LR_power}
    \mathbb{P}^{(\lambda/s,qs,s)}_{l-1}\left(L_{l-1}^{(\lambda/s,qs,s)}(\ct,\td{\ct})>\exp\left(\frac{(\lambda sq)^{l-1}}{\log n}\right)\right)\ge 1-p^\mathrm{ext}_{\lambda sq}-\eps',
    \end{equation}
    for all $n>n_0$, where $n_0$ is a constant depending on $\lambda,s,q$ and $\eps'$. 
\end{corollary}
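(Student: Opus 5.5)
The corollary is a finite-$n$ specialization of Theorem~\ref{thm:tree_test}. I will apply that theorem with $(\lambda,s,\td{s})$ replaced by $(\lambda/s,qs,s)$. The hypotheses then translate directly. The condition $s\td{s}>\alpha$ becomes $qs^2>\alpha$. The product $\lambda s\td{s}$ becomes $(\lambda/s)\cdot qs\cdot s=\lambda sq>1$. The degree condition $\lambda>\lambda'(s,\td{s})$ becomes $\lambda/s>\lambda'(qs,s)$, which holds because $\lambda>\lambda'(qs,s)$ and $s\le 1$. Theorem~\ref{thm:tree_test} therefore yields a constant $C=C(\lambda,s,q)>0$ with
\[
\liminf_{d\to\infty}\mathbb{P}^{(\lambda/s,qs,s)}_d\bigl(L_d^{(\lambda/s,qs,s)}(\ct,\tdct)>\exp(C(\lambda sq)^d)\bigr)\ge 1-p^{\mathrm{ext}}_{\lambda sq}.
\]
By the definition of $\liminf$, there is then a depth $d_0=d_0(\lambda,s,q,\eps')$ such that for every $d\ge d_0$ the probability above is at least $1-p^{\mathrm{ext}}_{\lambda sq}-\eps'$.

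Next I set $d=l-1=\lfloor\sqrt{\log n}\rfloor-1$ and compare thresholds. Since $C$ is a fixed positive constant and $1/\log n\to0$, we have $(\lambda sq)^{l-1}/\log n\le C(\lambda sq)^{l-1}$ as soon as $\log n\ge 1/C$. For such $n$, the event $\{L_{l-1}>\exp(C(\lambda sq)^{l-1})\}$ is contained in the event $\{L_{l-1}>\exp((\lambda sq)^{l-1}/\log n)\}$. The probability bound therefore transfers from the first event to the second.

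Finally I choose $n_0$ large enough that both $\lfloor\sqrt{\log n}\rfloor-1\ge d_0$ and $\log n\ge 1/C$ hold for all $n>n_0$. This $n_0$ depends only on $\lambda,s,q,\eps'$, which gives \eqref{eq:LR_power}.

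The only point needing care is that the constant $C$ in Theorem~\ref{thm:tree_test} is strictly positive and independent of $d$. The inclusion of events relies on this, and it is how the cited result is stated. I also need to check that the parameter substitution preserves the theorem's form; in particular, the law $\gw$ with mean $\lambda s\td{s}$ becomes $\gw$ with mean $\lambda sq$, so the extinction probability is exactly $p^{\mathrm{ext}}_{\lambda sq}$. I do not expect any other obstacle.
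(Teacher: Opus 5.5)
Your proposal is correct and follows the paper's proof step by step: apply Theorem~\ref{thm:tree_test} with parameters $(\lambda/s,qs,s)$, extract a finite depth from the $\liminf$, and use $1/\log n\le C$ for large $n$ to pass to the smaller threshold. You also verify explicitly that $\lambda/s\ge\lambda>\lambda'(qs,s)$ using $s\le 1$, and that the depth cutoff depends on $\lambda,s,q,\eps'$ rather than only on $\eps'$; the paper leaves both points implicit.
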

\begin{proof}
By Theorem~\ref{thm:tree_test}, we have
\[
    \liminf_{d\rightarrow \infty} \mathbb{P}^{(\lambda/s,qs,s)}_d(L^{(\lambda/s,qs,s)}_d(\ct,\td{\ct})>\exp(C(\lambda qs)^d))\ge 1-p^{\mathrm{ext}}_{\lambda qs},
    \]
    where $C$ is a constant depending only on $\lambda,s$ and $q$. Since $l\rightarrow\infty$ as $n\rightarrow\infty$, it follows by the definition of the limit infimum that there exists a constant $n_0'$ depending only on $\epsilon'$ such that 
    \[
    \mathbb{P}^{(\lambda/s,qs,s)}_{l-1}(L^{(\lambda/s,qs,s)}_{l-1}(\ct,\td{\ct})>\exp(C(\lambda qs)^{l-1}))\ge 1-p^{\mathrm{ext}}_{\lambda qs}-\eps',
    \]
    for all $n>n_0'$.
    Meanwhile, since constant $C$ depends only on $\lambda,s$ and $q$, there exists a constant $n_0''$ depending only on $\lambda,s$ and $q$ such that $\frac{1}{\log n}\le C$ for all $n>n_0''$.
    Finally, by setting $n_0=\max\{n_0',n_0''\}$, we get 
    \[
    \mathbb{P}^{(\lambda/s,qs,s)}_{l-1}\left(L_{l-1}^{(\lambda/s,qs,s)}(\ct,\td{\ct})>\exp\left(\frac{(\lambda sq)^{l-1}}{\log n}\right)\right)\ge 1-p^\mathrm{ext}_{\lambda sq}-\eps',
    \]
    for all $n>n_0$.

\end{proof}

\end{document}